\newif\ifproofs
\newif\ifcomments
\newif\iflong
\newcommand{\ttemark}{\texttt{\upshape!}}
\newcommand{\ttqmark}{\texttt{\upshape?}}
\newcommand{\ttbar}{\texttt{\upshape|}}
\newcommand{\ttunderscore}{\texttt{\upshape\_}}
\newcommand{\ttat}{\texttt{\upshape@}}
\definecolor{mymagenta}{rgb}{0.5,0,0.5}
\definecolor{myred}{rgb}{0.3,0,0}
\definecolor{mygreen}{rgb}{0,0.2,0}
\definecolor{myblue}{rgb}{0,0,0.6}
\definecolor{highlight}{gray}{0.9}
\definecolor{listing}{gray}{0.925}
\lstdefinestyle{INLINE}{
  keywordstyle=\color{myblue},
}
\lstdefinestyle{DISPLAY}{
  style=INLINE,
  basicstyle=\tt\small,
  numberstyle=\tiny\sffamily\color{myred},
  numbersep=1em,
  numbers=left,
  escapeinside={//}{\^^M},
  backgroundcolor=\color{highlight},
}
\lstdefinestyle{FLOAT}{
  style=DISPLAY,
  float,
  captionpos=b,
}
\newcommand{\InputScala}[1][]{\lstinputlisting[language=Scala,style=FLOAT,#1]}
\newcommand{\SI}{\lstinline[language=Scala,style=INLINE]}
\newcommand{\mkkeyword}[1]{\mathtt{\color{myblue}#1}}
\newcommand{\mkconstant}[1]{\mathtt{\color{mymagenta}#1}}
\newcommand{\seqof}[1]{\overline{#1}}
\newcommand{\set}[2][]{\{#2\}\ifblank{#1}{}{_{#1}}}
\newcommand{\bag}[2][]{\langle#2\rangle\ifblank{#1}{}{_{#1}}}
\newcommand{\pair}[2]{(#1,#2)}
\newcommand{\upair}[2]{\set{#1,#2}}
\newcommand{\proofcase}[1]{\medskip\noindent Case #1.}
\newcommand{\proofrule}[1]{\proofcase{\refrule{#1}}}
\newcommand{\parens}[1]{(#1)}
\newcommand{\bigparens}[1]{\left(#1\right)}
\newcommand{\cf}{\emph{cf.}\xspace}
\newcommand{\etal}{\emph{et al.}\xspace}
\newcommand{\DiscussRule}[1]{\subparagraph*{#1.}}
\newcommand{\True}{\mkconstant{True}}
\newcommand{\mkint}[1]{\mkconstant{#1}}
\newcommand{\VarSystem}{\mathit{system}}
\newcommand{\TagPrintBool}{\Tag[print\ttunderscore bool]}
\newcommand{\TagPrintInt}{\Tag[print\ttunderscore int]}
\newcommand{\Mailbox}{\MailboxA}
\newcommand{\MailboxA}{a}
\newcommand{\MailboxB}{b}
\newcommand{\MailboxC}{c}
\newcommand{\Mailboxes}{\MailboxesA}
\newcommand{\MailboxesA}{\seqof\MailboxA}
\newcommand{\MailboxesC}{\seqof\MailboxC}
\newcommand{\Var}[1][x]{\mathit{#1}}
\newcommand{\VarX}{\Var[x]}
\newcommand{\VarY}{\Var[y]}
\newcommand{\VarClient}{\Var[client]}
\newcommand{\Vars}{\VarsX}
\newcommand{\VarsX}{\seqof\VarX}
\newcommand{\Name}{\NameU}
\newcommand{\NameU}{u}
\newcommand{\NameV}{v}
\newcommand{\NameW}{w}
\newcommand{\Names}{\NamesU}
\newcommand{\NamesU}{\seqof\NameU}
\newcommand{\NamesV}{\seqof\NameV}
\newcommand{\Tag}[1][m]{\mathtt{\color{myred}#1}}
\newcommand{\TagA}{\Tag[A]}
\newcommand{\TagB}{\Tag[B]}
\newcommand{\TagC}{\Tag[C]}
\newcommand{\TagReceive}{\Tag[receive]}
\newcommand{\TagSend}{\Tag[send]}
\newcommand{\TagLeft}{\Tag[left]}
\newcommand{\TagRight}{\Tag[right]}
\newcommand{\TagAcquire}{\Tag[acquire]}
\newcommand{\TagRelease}{\Tag[release]}
\newcommand{\TagReply}{\Tag[reply]}
\newcommand{\TagPut}{\Tag[put]}
\newcommand{\TagGet}{\Tag[get]}
\newcommand{\TagResult}{\Tag[result]}
\newcommand{\TagResolve}{\Tag[resolve]}
\newcommand{\TagCredit}{\Tag[credit]}
\newcommand{\TagDebit}{\Tag[debit]}
\newcommand{\TagStop}{\Tag[stop]}
\newcommand{\Process}{\ProcessP}
\newcommand{\ProcessP}{P}
\newcommand{\ProcessQ}{Q}
\newcommand{\ProcessR}{R}
\newcommand{\Guard}{\GuardG}
\newcommand{\GuardG}{G}
\newcommand{\GuardH}{H}
\newcommand{\Hole}{[~]}
\newcommand{\Done}{\mkkeyword{done}}
\newcommand{\Fail}[1]{\mkkeyword{fail}\ifblank{#1}{}{~#1}}
\newcommand{\Free}[2][\ttdot]{\mkkeyword{free}\ifblank{#2}{}{~#2}#1}
\newcommand{\Send}[3]{%
  #1\ttemark#2{}\ifblank{#3}{}{[#3]}%
}
\newcommand{\Receive}[4][\ttdot]{%
  #2\ttqmark#3\ifblank{#4}{}{(#4)}#1%
}
\newcommand{\New}[1]{(\nu#1)}
\newcommand{\parop}{\mathbin\ttbar}
\newcommand{\sumop}{+}
\newcommand{\IfX}[3]{\mkkeyword{if}#1\mkkeyword{then}#2\mkkeyword{else}#3}
\newcommand{\If}[3]{\IfX{~#1~}{~#2~}{~#3}}
\newcommand{\invoke}[2]{#1{}[#2]}
\newcommand{\define}[2]{#1(#2) \triangleq}
\newcommand{\RecVar}[1][X]{\mathsf{\color{mygreen}#1}}
\newcommand{\ProcessContext}{\mathcal{C}}
\newcommand{\rulename}[1]{\text{\upshape\scriptsize[\textsc{#1}]}}
\newcommand{\defrule}[1]{\hypertarget{rule:#1}{\rulename{#1}}}
\newcommand{\refrule}[1]{\hyperlink{rule:#1}{\rulename{#1}}}
\newenvironment{lines}[1][]{
  \begin{array}[#1]{@{}l@{}}
  }{
  \end{array}
}
\newcommand{\Bag}{\BagA}
\newcommand{\BagA}{\mathsf{A}}
\newcommand{\BagB}{\mathsf{B}}
\newcommand{\bunion}{\uplus}
\newcommand{\dgraph}{\dgraphA}
\newcommand{\dgraphA}{\varphi}
\newcommand{\dgraphB}{\psi}
\newcommand{\gempty}{\emptyset}
\newcommand{\gunion}{\sqcup}
\newcommand{\gedge}[2]{\upair{#1}{#2}}
\newcommand{\grel}[1]{\mathsf{dep}(#1)}
\newcommand{\glabel}[2]{#1 - #2}
\newcommand{\gimplies}{\Rightarrow}
\newcommand{\Type}{\TypeT}
\newcommand{\TypeT}{\tau}
\newcommand{\TypeS}{\sigma}
\newcommand{\TypeR}{\rho}
\newcommand{\Types}{\seqof\Type}
\newcommand{\TypesT}{\seqof\TypeT}
\newcommand{\TypesS}{\seqof\TypeS}
\newcommand{\TypesR}{\seqof\TypeR}
\newcommand{\tnat}{\mkkeyword{nat}}
\newcommand{\tint}{\mkkeyword{int}}
\newcommand{\tbool}{\mkkeyword{bool}}
\newcommand{\In}{\ttqmark}
\newcommand{\Out}{\ttemark}
\newcommand{\SessionType}{\SessionTypeT}
\newcommand{\SessionTypeT}{T}
\newcommand{\SessionTypeS}{S}
\newcommand{\End}{\mkkeyword{end}}
\newcommand{\SBranch}{\mathbin{\with}}
\newcommand{\SChoice}{\mathbin{\oplus}}
\newcommand{\SFork}{{\otimes}}
\newcommand{\SJoin}{{\parr}}
\newcommand{\co}[1]{\overline{#1}}
\newcommand{\Pattern}{\PatternE}
\newcommand{\PatternE}{E}
\newcommand{\PatternF}{F}
\newcommand{\PatternG}{G}
\newcommand{\Patterns}{\seqof\Pattern}
\newcommand{\MessageType}{\mathtt{M}}
\newcommand{\tzero}{\mathbb{0}}
\newcommand{\tone}{\mathbb{1}}
\newcommand{\tmul}{\cdot}
\newcommand{\tMul}{\prod}
\newcommand{\tsum}{+}
\newcommand{\tstar}{^{\ast}}
\newcommand{\tmessage}[2]{#1{[}#2{]}}
\newcommand{\EmptyContext}{\emptyset}
\newcommand{\Context}{\ContextA}
\newcommand{\ContextA}{\Gamma}
\newcommand{\ContextB}{\Delta}
\newcommand{\wtp}[3]{#1 \vdash #2\ifblank{#3}{}{{} :: #3}}
\newcommand{\wtg}[2]{#1 \vdash #2}
\newcommand{\fn}{\mathsf{fn}}
\newcommand{\bn}{\mathsf{bn}}
\newcommand{\dom}{\mathsf{dom}}
\newcommand{\sem}[1]{\llbracket#1\rrbracket}
\newcommand{\subst}[2]{\set{#1/#2}}
\newcommand{\pder}[2]{#1/#2}
\newcommand{\cmul}{\mathbin{\|}}
\newcommand{\size}{\mathsf{size}}
\newcommand{\subp}[1][]{\sqsubseteq\ifblank{#1}{}{_{#1}}}
\newcommand{\supp}[1][]{\sqsupseteq\ifblank{#1}{}{_{#1}}}
\newcommand{\eqp}[1][]{\simeq\ifblank{#1}{}{_{#1}}}
\newcommand{\subt}{\leqslant}
\newcommand{\supt}{\geqslant}
\newcommand{\eqt}{\lessgtr}
\newcommand{\eqdef}{\stackrel{\text{\tiny def}}=}
\newcommand{\red}{\rightarrow}
\newcommand{\nred}{\arrownot\red}
\newcommand{\gred}[2]{\xlongrightarrow{\glabel{#1}{#2}}}
\newcommand{\rrel}{\mathrel{\mathcal{R}}}
\newcommand{\df}[2]{#1 \vDash #2}
\newcommand{\marginnote}[2]{%
  \ifcomments%
  {\makebox[0pt]{\color{magenta}$^\bigstar$}}%
  \marginpar{\parbox{2.5cm}{\flushleft\tiny\sffamily\textbf{#1}: #2}}%
  \fi%
}
\newcommand{\Luca}[1]{\marginnote{Luca}{\color{magenta}#1}}
\newcommand{\Ugo}[1]{\marginnote{Ugo}{\color{magenta}#1}}
\theoremstyle{note}
\newtheorem*{notation}{Notation}
\theoremstyle{plain}
\newtheorem{proposition}[theorem]{Proposition}
\newtheorem*{rep@theorem}{\rep@title}
\newcommand{\newreptheorem}[2]{%
\newenvironment{rep#1}[2][]{%
 \def\rep@title{\bfseries #2 \ref{##2}}%
 \begin{rep@theorem}[##1]}%
 {\end{rep@theorem}}}
\newcommand{\eoe}{\hfill$\blacksquare$}
\title{Mailbox Types for Unordered Interactions}
\author{Authors omitted for review}
\authorrunning{}
\affil{Affiliations omitted for review}
\author{Ugo de'Liguoro}
\author{Luca Padovani}
\affil{Universit\`a di Torino, Dipartimento di Informatica, Torino, Italy}
\authorrunning{U. de'Liguoro and L. Padovani}
\subjclass{D.1.3 Concurrent Programming; D.3.3 Language Constructs
  and Features (\emph{Concurrent programming structures,
    Input/output}); F.1.2 Modes of Computation (\emph{Parallelism
    and concurrency}); F.3.3 Studies of Program Constructs
  (\emph{Type structure}).}
\keywords{actors, concurrent objects, first-class mailboxes,
  unordered communication protocols, behavioral types, protocol
  conformance, deadlock freedom, junk freedom}
\begin{document}

\maketitle

\begin{abstract}
  We propose a type system for reasoning on protocol conformance and
  deadlock freedom in networks of processes that communicate through
  unordered mailboxes.
  We model these networks in the mailbox calculus, a mild extension
  of the asynchronous $\pi$-calculus with first-class mailboxes and
  selective input.
  The calculus subsumes the actor model and allows us to analyze
  networks with dynamic topologies and varying number of processes
  possibly mixing different concurrency abstractions.
  Well-typed processes are deadlock free and never fail because of
  unexpected messages. For a non-trivial class of them, junk freedom
  is also guaranteed.
  We illustrate the expressiveness of the calculus and of the type
  system by encoding instances of non-uniform, concurrent objects,
  binary sessions extended with joins and forks, and some known
  actor benchmarks.
\end{abstract}


\section{Introduction}
\label{sec:introduction}

Message passing is a key mechanism used to coordinate concurrent
processes.
The order in which a process consumes messages may coincide with the
order in which they arrive at destination (\emph{ordered
  processing}) or may depend on some intrinsic property of the
messages themselves, such as their priority, their tag, or the shape
of their content (\emph{out-of-order or selective processing}).
Ordered message processing is common in networks of processes
connected by point-to-point \emph{channels}.
Out-of-order message processing is common in networks of processes
using \emph{mailboxes}, into which processes concurrently store
messages and from which one process selectively receives messages.
This communication model is typically found in the various
implementations of actors~\cite{HewittEtAl73,Agha86} such as
Erlang~\cite{Armstrong13}, Scala and Akka
actors~\cite{HallerSommers11}, CAF~\cite{CharoussetHiesgenSchmidt16}
and Kilim~\cite{SrinivasanMycroft08}.
Non-uniform, concurrent
objects~\cite{RavaraVasconcelos00,Puntigam01,CrafaPadovani17} are
also examples of out-of-order message processors. For example, a
busy lock postpones the processing of any $\Tag[acquire]$ message
until it is released by its current owner.
Out-of-order message processing adds further complexity to the
challenging task of concurrent and parallel application development:
storing a message into the wrong mailbox or at the wrong time,
forgetting a message in a mailbox, or relying on the presence of a
particular message that is not guaranteed to be found in a mailbox
are programming mistakes that are easy to do and hard to detect
without adequate support from the language and its development
tools.

\InputScala[label=lst:account,caption=An example of Scala actor taken from the \texttt{Savina} benchmark suite~\cite{ImamSarkar14}.]{Account.scala}

The Scala actor in Listing~\ref{lst:account}, taken from the
\texttt{Savina} benchmark suite~\cite{ImamSarkar14}, allows us to
illustrate some of the subtle pitfalls that programmers must
carefully avoid when dealing with out-of-order message processing.
The \SI{process} method matches messages found in the actor's
mailbox according to their type.
If a message of type \SI{DebitMessage} is found, then \SI{balance}
is incremented by the deposited amount and the actor requesting the
operation is notified with a \SI{ReplyMessage}
(lines~\ref{begin.account.debit}--\ref{end.account.debit}).
If a message of type \SI{CreditMessage} is found, \SI{balance} is
decremented by the amount that is transferred to \SI{recipient}
(lines~\ref{begin.account.credit}--\ref{account.credit.send}). Since
the operation is meant to be atomic, the actor temporarily changes
its behavior and waits for a \SI{ReplyMessage} from \SI{recipient}
signalling that the transfer is complete, before notifying
\SI{sender} in turn
(lines~\ref{account.credit.wait}--\ref{end.account.credit}).
A message of type \SI{StopMessage} terminates the actor
(line~\ref{account.stop}).

Note how the correct execution of this code depends on some key
assumptions:
\begin{itemize}
\item \SI{ReplyMessage} should be stored in the actor's mailbox only
  when the actor is involved in a transaction, or else the message
  would trigger the ``catch all'' clause that throws a ``unsupported
  message'' exception
  (lines~\ref{begin.account.unknown}--\ref{end.account.unknown}).
\item No debit or credit message should be in the actor's mailbox by
  the time it receives \SI{StopMessage}, or else some critical
  operations affecting the balance would not be performed.
\item Two distinct accounts should not try to simultaneously
  initiate a transaction with each other. If this were allowed, each
  account could consume the credit message found in its own mailbox
  and then deadlock waiting for a reply from the other account
  (lines~\ref{account.credit.wait}--\ref{end.account.credit}).
\end{itemize}

Static analysis techniques that certify the validity of assumptions
like these can be valuable for developers.  For example, session
types~\cite{HuttelEtAl16} have proved to be an effective formalism
for the enforcement of communication protocols and have been applied
to a variety of programming paradigms and
languages~\cite{AnconaEtAl16}, including those based on mailbox
communications~\cite{MostrousVasconcelos11,CharalambidesDingesAgha16,Fowler16,NeykovaYoshida16}.
However, session types are specifically designed to address
point-to-point, ordered interactions over
channels~\cite{Honda93}. Retrofitting them to a substantially
different communication model calls for some inevitable compromises
on the network topologies that can be addressed and forces
programmers to give up some of the flexibility offered by unordered
message processing.

Another aspect that complicates the analysis of actor systems is
that the pure actor model as it has been originally
conceived~\cite{HewittEtAl73,Agha86} does not accurately reflect the
actual practice of actor programming. In the pure actor model, each
actor owns a single mailbox and the only synchronization mechanism
is message reception from such mailbox.
However, it is a known fact that the implementation of complex
coordination protocols in the pure actor model is
challenging~\cite{VarelaAgha01,TasharofiDingesJohnson13,ImamSarkar14bis,ChatterjeeEtAl16}. These
difficulties have led programmers to mix the actor model with
different concurrency
abstractions~\cite{ImamSarkar12,TasharofiDingesJohnson13}, to extend
actors with controlled forms of synchronization~\cite{VarelaAgha01}
and to consider actors with multiple/first-class
mailboxes~\cite{Haller12,ImamSarkar14bis,ChatterjeeEtAl16}.
In fact, popular implementations of the actor model feature
disguised instances of multiple/first-class mailbox usage, even if
they are not explicitly presented as such: in Akka, the messages
that an actor is unable to process immediately can be temporarily
stashed into a different mailbox~\cite{Haller12}; in Erlang, hot
code swapping implies transferring at runtime the input capability
on a mailbox from a piece of code to a different
one~\cite{Armstrong13}.

In summary, there is still a considerable gap between the scope of
available approaches used to analyze mailbox-based communicating
systems and the array of features used in programming these
systems. To help narrowing this gap, we make the following
contributions:
\begin{itemize}
\item We introduce \emph{mailbox types}, a new kind of behavioral
  types with a simple and intuitive semantics embodying the
  unordered nature of mailboxes. Mailbox types allow us to describe
  mailboxes subject to selective message processing as well as
  mailboxes concurrently accessed by several
  processes. Incidentally, mailbox types also provide precise
  information on the size and reachability of mailboxes that may
  lead to valuable code optimizations.
\item We develop a mailbox type system for the \emph{mailbox
    calculus}, a mild extension of the asynchronous
  $\pi$-calculus~\cite{SangiorgiWalker01} featuring tagged messages,
  selective inputs and first-class mailboxes.  The mailbox calculus
  allows us to address a broad range of systems with dynamic
  topology and varying number of processes possibly using a mixture
  of concurrency models (including multi-mailbox actors) and
  abstractions (such as locks and futures).
\item We prove three main properties of well-typed processes: the
  absence of failures due to unexpected messages (\emph{mailbox
    conformance}); the absence of pending activities and messages in
  irreducible processes (\emph{deadlock freedom}); for a non-trivial
  class of processes, the guarantee that every message can be
  eventually consumed (\emph{junk freedom}).
\item We illustrate the expressiveness of mailbox types by
  presenting well-typed encodings of known concurrent objects (locks
  and futures) and actor benchmarks (atomic transactions and
  master-workers parallelism) and of binary sessions extended with
  forks and joins. In discussing these examples, we emphasize the
  impact of out-of-order message processing and of first-class
  mailboxes.
\end{itemize}

\subparagraph*{Structure of the paper.}
We start from the definition of the \emph{mailbox calculus} and of
the properties we expect from well-typed processes
(Section~\ref{sec:model}).
We introduce mailbox types (Section~\ref{sec:types}) and dependency
graphs (Section~\ref{sec:graphs}) for tracking mailbox dependencies
in processes that use more than one. Then, we present the typing
rules (Section~\ref{sec:rules}) and the soundness results of the
type system (Section~\ref{sec:properties}).
In the latter part of the paper, we discuss a few more complex
examples (Section~\ref{sec:examples}), related work
(Section~\ref{sec:related}) and ideas for further developments
(Section~\ref{sec:conclusions}).
Additional definitions and proofs can be found in
Appendices~\ref{sec:extra_properties}--\ref{sec:example_readers_writer}.


\newcommand{\DefLock}{\RecVar[Lock]}
\newcommand{\DefFreeLock}{\RecVar[FreeLock]}
\newcommand{\DefBusyLock}{\RecVar[BusyLock]}
\newcommand{\DefUser}{\RecVar[User]}
\newcommand{\DefFuture}{\RecVar[Future]}
\newcommand{\DefResolvedFuture}{\RecVar[Present]}
\newcommand{\DefAccount}{\RecVar[Account]}

\newcommand{\VarLock}{\Var[lock]}
\newcommand{\VarOwner}{\Var[owner]}
\newcommand{\VarFuture}{\Var[f]}
\newcommand{\VarSelf}{\Var[self]}
\newcommand{\VarBalance}{\Var[balance]}
\newcommand{\VarAmount}{\Var[amount]}
\newcommand{\VarSender}{\Var[sender]}
\newcommand{\VarRecipient}{\Var[recipient]}
\newcommand{\VarAlice}{\Var[alice]}
\newcommand{\VarBank}{\Var[bank]}
\newcommand{\VarCarol}{\Var[carol]}

\section{The Mailbox Calculus}
\label{sec:model}

\begin{table}
  \begin{center}
    \[
      \begin{array}[t]{@{}rr@{~}c@{~}ll@{}}
        \textbf{Process} & \ProcessP, \ProcessQ
        & ::= & \Done & \text{(termination)}
        \\
        & & | & \invoke\RecVar\Names & \text{(invocation)}
        \\
        & & | & \Guard & \text{(guarded process)}
        \\
        & & | & \Send\Name\Tag\NamesV & \text{(stored message)}
        \\
        & & | & \ProcessP \parop \ProcessQ & \text{(parallel composition)}
        \\
        & & | & \New\Mailbox\Process & \text{(mailbox restriction)}
        \\\\
        \textbf{Guard} & \GuardG, \GuardH
        & ::= & \Fail\Name & \text{(runtime error)}
        \\
        & & | & \Free\Name\Process & \text{(mailbox deletion)}
        \\
        & & | & \Receive\Name\Tag\Vars\Process & \text{(selective receive)}
        \\
        & & | & \GuardG \sumop \GuardH & \text{(guard composition)}
      \end{array}
    \]
  \end{center}
  \caption{\label{tab:processes} Syntax of the mailbox calculus.}
\end{table}

We assume given an infinite set of \emph{variables} $\VarX$,
$\VarY$, an infinite set of \emph{mailbox names} $\MailboxA$,
$\MailboxB$, a set of \emph{tags} $\Tag$ and a finite set of
\emph{process variables} $\RecVar$.
We let $\NameU$, $\NameV$ range over variables and mailbox names
without distinction.
Throughout the paper we write $\seqof{e}$ for possibly empty
sequences $e_1, \dots, e_n$ of various entities. For example,
$\Names$ stands for a sequence $\Name_1,\dots,\Name_n$ of names and
$\set\Names$ for the corresponding set.

The syntax of the \emph{mailbox calculus} is shown in
Table~\ref{tab:processes}.
The term $\Done$ represents the terminated process that performs no
action.
The term $\Send\Name\Tag\NamesV$ represents a message stored in
mailbox $\Name$. The message has tag $\Tag$ and arguments $\NamesV$.
The term $\ProcessP \parop \ProcessQ$ represents the parallel
composition of $\ProcessP$ and $\ProcessQ$ and
$\New\Mailbox\Process$ represents a restricted mailbox $\Mailbox$
with scope $\Process$.
The term $\invoke\RecVar\Names$ represents the invocation of the
process named $\RecVar$ with parameters $\Names$. For each process
variable $\RecVar$ we assume that there is a corresponding
\emph{global process definition} of the form
$\define\RecVar\Vars\Process$.
A guarded process $\Guard$ is a composition of actions.
The action $\Fail\Name$ represents the process that fails with an
error for having received an unexpected message from mailbox
$\Name$.
The action $\Free\Name\Process$ represents the process that deletes
the mailbox $\Name$ if it is empty and then continues as $\Process$.
The action $\Receive\Name\Tag\Vars\Process$ represents the process
that receives an $\Tag$-tagged message from mailbox $\Name$ then
continues as $\Process$ with $\Vars$ replaced by the message's
arguments.
A compound guard $\GuardG \sumop \GuardH$ offers all the actions
offered by $\GuardG$ and $\GuardH$. We assume that all actions in
the same guard refer to the same mailbox $\Name$.
The notions of free and bound names of a process $\Process$ are
standard and respectively denoted by $\fn(\Process)$ and
$\bn(\Process)$.

The operational semantics of the mailbox calculus is mostly
conventional. We use the structural congruence relation $\equiv$
defined below to rearrange equivalent processes:
\[
  \begin{array}{@{}*{3}{r@{~}c@{~}l}@{}}
    \Fail\Mailbox \sumop \Guard & \equiv & \Guard
    &
    \GuardG \sumop \GuardH & \equiv & \GuardH \sumop \GuardG
    &
    \GuardG \sumop (\GuardH \sumop \GuardH') & \equiv & (\GuardG \sumop \GuardH) \sumop \GuardH'
    \\
    \Done \parop \Process & \equiv & \Process
    &
    \ProcessP \parop \ProcessQ & \equiv & \ProcessQ \parop \ProcessP
    &
    \ProcessP \parop (\ProcessQ \parop \ProcessR)
    & \equiv &
    (\ProcessP \parop \ProcessQ) \parop \ProcessR
    \\
    & & &
    \New\MailboxA\New\MailboxB\Process
    & \equiv &
    \New\MailboxB\New\MailboxA\Process
    &
    \New\Mailbox\ProcessP \parop \ProcessQ
    & \equiv &
    \New\Mailbox(\ProcessP \parop \ProcessQ)
    \quad\text{if $\Mailbox \not\in \fn(\ProcessQ)$}
  \end{array}
\]

Structural congruence captures the usual commutativity and
associativity laws of action and process compositions, with
$\Fail{}$ and $\Done$ acting as the respective units.  Additionally,
the order of mailbox restrictions is irrelevant and the scope of a
mailbox may shrink or extend dynamically.
The reduction relation $\red$ is inductively defined by the rules
\[
  \begin{array}{rr@{~}c@{~}ll}
    \defrule{r-read} &
    \Send\Mailbox\Tag{\seqof\MailboxC}
    \parop
    \Receive\Mailbox\Tag\Vars\Process \sumop \Guard
    & \red &
    \Process\subst{\seqof\MailboxC}\Vars
    \\
    \defrule{r-free} &
    \New\Mailbox\parens{\Free\Mailbox\Process \sumop \Guard}
    & \red &
    \Process
    \\
    \defrule{r-def} &
    \invoke\RecVar{\seqof\MailboxC}
    & \red &
    \Process\subst{\seqof\MailboxC}\Vars
    & \text{if $\define\RecVar\Vars\Process$}
    \\
    \defrule{r-par} &
    \ProcessP \parop \ProcessR
    & \red &
    \ProcessQ \parop \ProcessR
    & \text{if $\ProcessP \red \ProcessQ$}
    \\
    \defrule{r-new} &
    \New\Mailbox\ProcessP
    & \red &
    \New\Mailbox\ProcessQ
    & \text{if $\ProcessP \red \ProcessQ$}
    \\
    \defrule{r-struct} &
    \ProcessP
    & \red &
    \ProcessQ
    & \text{if $\ProcessP \equiv \ProcessP' \red \ProcessQ' \equiv \ProcessQ$}
  \end{array}
\]
where $\Process\subst\MailboxesC\Vars$ denotes the usual
capture-avoiding replacement of the variables $\Vars$ with the
mailbox names $\MailboxesC$.
Rule~\refrule{r-read} models the selective reception of an
$\Tag$-tagged message from mailbox $\Mailbox$, which erases all the
other actions of the guard.
Rule~\refrule{r-free} is triggered when the process is ready to
delete the empty mailbox $\Mailbox$ and no more messages can be
stored in $\Mailbox$ because there are no other processes in the
scope of $\Mailbox$.
Rule~\refrule{r-def} models a process invocation by replacing the
process variable $\RecVar$ with the corresponding definition.
Finally, rules~\refrule{r-par}, \refrule{r-new} and
\refrule{r-struct} close reductions under parallel compositions,
name restrictions and structural congruence.
We write $\red^*$ for the reflexive and transitive closure of
$\red$, we write $\ProcessP \nred^* \ProcessQ$ if not
$\ProcessP \red^* \ProcessQ$ and $\Process \nred$ if
$\ProcessP \nred \ProcessQ$ for all $\ProcessQ$.

Hereafter, we will occasionally use numbers and conditionals in
processes. These and other features can be either encoded or added
to the calculus without difficulties.

\begin{example}[lock]
  \label{ex:lock}
  In this example we model a lock as a process that waits for
  messages from a $\VarSelf$ mailbox in which acquisition and
  release requests are stored. The lock is either free or busy. When
  in state free, the lock nondeterministically consumes an
  $\TagAcquire$ message from $\VarSelf$. This message indicates the
  willingness to acquire the lock by another process and carries a
  reference to a mailbox into which the lock stores a $\TagReply$
  notification. When in state busy, the lock waits for a
  $\TagRelease$ message indicating that it is being released:
  \[
    \begin{array}{@{}r@{~}c@{~}l@{}}
      \DefFreeLock(\VarSelf) & \triangleq &
      \Free\VarSelf\Done
      \\ & \sumop &
      \Receive\VarSelf\TagAcquire\VarOwner
      \invoke\DefBusyLock{\VarSelf,\VarOwner}
      \\ & \sumop &
      \Receive\VarSelf\TagRelease{}
      \Fail\VarSelf
      \\
      \DefBusyLock(\VarSelf,\VarOwner) & \triangleq &
      \Send\VarOwner\TagReply\VarSelf \parop
      \Receive\VarSelf\TagRelease{}\invoke\DefFreeLock\VarSelf
    \end{array}
  \]

  Note the presence of the $\Free[]\VarSelf$ guard in the definition
  of $\DefFreeLock$ and the lack thereof in $\DefBusyLock$. In the
  former case, the lock manifests the possibility that no process is
  willing to acquire the lock, in which case it deletes the mailbox
  and terminates. In the latter case, the lock manifests its
  expectation to be eventually released by its current owner.
  Also note that $\DefFreeLock$ fails if it receives a $\TagRelease$
  message. In this way, the lock manifests the fact that it can be
  released only if it is currently owned by a process.
  A system where two users $\VarAlice$ and $\VarCarol$ compete for
  acquiring $\VarLock$ can be modeled as the process
  \begin{equation}
    \label{eq:alice_carol}
    \New\VarLock
    \New\VarAlice
    \New\VarCarol\parens{
      \invoke\DefFreeLock\VarLock
      \parop
      \invoke\DefUser{\VarAlice, \VarLock}
      \parop
      \invoke\DefUser{\VarCarol, \VarLock}
    }
  \end{equation}
  where
  \[
    \DefUser(\VarSelf, \VarLock) \triangleq
    \Send\VarLock\TagAcquire\VarSelf \parop
    \Receive\VarSelf\TagReply{l}
    \parens{
      \Send{l}\TagRelease{}
      \parop
      \Free\VarSelf\Done
    }
  \]

  Note that $\DefUser$ uses the reference $l$ -- as opposed to
  $\VarLock$ -- to release the acquired lock. As we will see in
  Section~\ref{sec:rules}, this is due to the fact that it is this
  particular reference to the lock's mailbox -- and not $\VarLock$
  itself -- that carries the capability to release the lock.
  \eoe
\end{example}

\begin{example}[future variable]
  \label{ex:future}
  A future variable is a one-place buffer that stores the result of
  an asynchronous computation. The content of the future variable is
  set once and for all by the producer once the computation
  completes. This phase is sometimes called \emph{resolution} of the
  future variable. After the future variable has been resolved, its
  content can be retrieved any number of times by the consumers. If
  a consumer attempts to retrieve the content of the future variable
  beforehand, the consumer suspends until the variable is resolved.
  We can model a future variable thus:
  \[
    \begin{array}{@{}r@{~}c@{~}l@{}}
      \DefFuture(\VarSelf) & \triangleq & \Receive\VarSelf\TagResolve\Var\invoke\DefResolvedFuture{\VarSelf,\Var}
      \\
      \DefResolvedFuture(\VarSelf, \Var) & \triangleq &
      \Free\VarSelf\Done
      \\ & \sumop &
      \Receive\VarSelf\TagGet\VarSender(\Send\VarSender\TagReply\Var \parop \invoke\DefResolvedFuture{\VarSelf,\Var})
      \\ & \sumop &
      \Receive\VarSelf\TagResolve{}
      \Fail\VarSelf
    \end{array}
  \]

  The process $\DefFuture$ represents an unresolved future variable,
  which waits for a $\TagResolve$ message from the producer. Once
  the variable has been resolved, it behaves as specified by
  $\DefResolvedFuture$, namely it satisfies an arbitrary number of
  $\TagGet$ messages from consumers but it no longer accepts
  $\TagResolve$ messages.
  \eoe
\end{example}

\begin{example}[bank account]
  \label{ex:account}
  Below we see the process definition corresponding to the actor
  shown in Listing~\ref{lst:account}. The structure of the term
  follows closely that of the Scala code:
  \[
    \begin{array}{@{}rcl@{}}
      \DefAccount(\VarSelf, \VarBalance) & \triangleq &
      \Receive\VarSelf\TagDebit{\VarAmount,\VarSender}
      \\ & &
      \Send\VarSender\TagReply{}
      \parop
      \invoke\DefAccount{\VarSelf, \VarBalance + \VarAmount}
      \\
      & \sumop &
      \Receive\VarSelf\TagCredit{\VarAmount,\VarRecipient,\VarSender}
      \\ & &
      \Send\VarRecipient\TagDebit{\VarAmount,\VarSelf} \parop {}
      \\ & &
      \Receive\VarSelf\TagReply{}
      \parens{
        \Send\VarSender\TagReply{}
        \parop
        \invoke\DefAccount{\VarSelf, \VarBalance + \VarAmount}
      }
      \\
      & \sumop &
      \Receive\VarSelf\TagStop{}
      \Free\VarSelf\Done
      \\
      & \sumop &
      \Receive\VarSelf\TagReply{}
      \Fail\VarSelf
    \end{array}
  \]

  The last term of the guarded process, which results in a failure,
  corresponds to the catch-all clause in Listing~\ref{lst:account}
  and models the fact that a $\TagReply$ message is not expected to
  be found in the account's mailbox unless the account is involved
  in a transaction. The $\TagReply$ message is received and handled
  appropriately in the $\TagCredit$-guarded term.

  We can model a deadlock in the case two distinct bank accounts
  attempt to initiate a transaction with one another. Indeed, we
  have
  \[
    \begin{lines}
      \invoke\DefAccount{\VarAlice, 10} \parop
      \Send\VarAlice\TagCredit{2, \VarCarol, \VarBank} \parop {}
      \\
      \invoke\DefAccount{\VarCarol, 15} \parop
      \Send\VarCarol\TagCredit{5, \VarAlice, \VarBank}
    \end{lines}
    \red^*
    \begin{lines}
      \Send\VarCarol\TagDebit{2,\VarAlice}
      \parop
      \Receive[]\VarAlice\TagReply{}\dots \parop {}
      \\
      \Send\VarAlice\TagDebit{5,\VarCarol}
      \parop
      \Receive[]\VarCarol\TagReply{}\dots
    \end{lines}
  \]
  where both $\VarAlice$ and $\VarCarol$ ignore the incoming
  $\TagDebit$ messages, whence the deadlock.
  \eoe
\end{example}

We now provide operational characterizations of the properties
enforced by our typing discipline. We begin with mailbox
conformance, namely the property that a process never fails because
of unexpected messages.  To this aim, we define a process context
$\ProcessContext$ as a process in which there is a single occurrence
of an \emph{unguarded hole} $\Hole$:
\[
  \ProcessContext
  ~~::=~~
  \Hole ~~\mid~~
  \ProcessContext \parop \Process ~~\mid~~
  \Process \parop \ProcessContext ~~\mid~~
  \New\Mailbox\ProcessContext
\]

The hole is ``unguarded'' in the sense that it does not occur
prefixed by an action. As usual, we write
$\ProcessContext[\Process]$ for the process obtained by replacing
the hole in $\ProcessContext$ with $\Process$. Names may be captured
by this replacement. A mailbox conformant process never reduces to a
state in which the only action of a guard is $\Fail{}$:

\begin{definition}
  \label{def:mailbox_conformance}
  We say that $\ProcessP$ is \emph{mailbox conformant} if
  $\ProcessP \nred^* \ProcessContext[\Fail\Mailbox]$ for all
  $\ProcessContext$ and $\Mailbox$.
\end{definition}

Looking at the placement of the $\Fail\Name$ actions in earlier
examples we can give the following interpretations of mailbox
conformance: a lock is never released unless it has been acquired
beforehand (Example~\ref{ex:lock}); a future variable is never
resolved twice (Example~\ref{ex:future}); an account will not be
notified of a completed transaction (with a $\TagReply$ message)
unless it is involved in an ongoing transaction
(Example~\ref{ex:account}).

We express deadlock freedom as the property that all irreducible
residuals of a process are (structurally equivalent to) the
terminated process:

\begin{definition}
  \label{def:deadlock_freedom}
  We say that $\ProcessP$ is \emph{deadlock free} if
  $\ProcessP \red^* \ProcessQ \nred$ implies
  $\ProcessQ \equiv \Done$.
\end{definition}

According to Definition~\ref{def:deadlock_freedom}, if a
deadlock-free process halts we have that: (1) there is no
sub-process waiting for a message that is never produced; (2) every
mailbox is empty. Clearly, this is not the case for the transaction
between $\VarAlice$ and $\VarCarol$ in Example~\ref{ex:account}.

\begin{example}[deadlock]
  \label{ex:deadlock}
  Below is another example of deadlocking process using $\DefFuture$
  from Example~\ref{ex:future}, obtained by resolving a future
  variable with the value it does not contain yet:
  \begin{equation}
    \label{eq:future_deadlock}
    \New\VarFuture
    \New\MailboxC\parens{
      \invoke\DefFuture\VarFuture
      \parop
      \Send\VarFuture\TagGet\MailboxC
      \parop
      \Receive\MailboxC\TagReply\Var
      \Free\MailboxC
      \Send\VarFuture\TagPut\Var
    }
  \end{equation}

  Notice that attempting to retrieve the content of a future
  variable not knowing whether it has been resolved is
  legal. Indeed, $\DefFuture$ does not fail if a $\TagGet$ message
  is present in the future variable's mailbox before it is
  resolved. Thus, the deadlocked process above is mailbox conformant
  but also an instance of undesirable process that will be ruled out
  by our static analysis technique (\cf
  Example~\ref{ex:future_deadlock}). We will need dependency graphs
  in addition to types to flag this process as ill typed.
  \eoe
\end{example}

A property stronger than deadlock freedom is fair termination.  A
fairly terminating process is a process whose residuals always have
the possibility to terminate. Formally:

\begin{definition}
  \label{def:terminating}
  We say that $\ProcessP$ is \emph{fairly terminating} if
  $\ProcessP \red^* \ProcessQ$ implies $\ProcessQ \red^* \Done$.
\end{definition}

An interesting consequence of fair termination is that it implies
\emph{junk freedom} (also known as lock
freedom~\cite{Kobayashi02,Padovani14B}) namely the property that
every message can be eventually consumed. Our type system does not
guarantee fair termination nor junk freedom in general, but it does
so for a non-trivial sub-class of well-typed processes that we
characterize later on.


\section{A Mailbox Type System}
\label{sec:type_system}

In this section we detail the type system for the mailbox
calculus. We start from the syntax and semantics of mailbox types
(Section~\ref{sec:types}) and of dependency graphs
(Section~\ref{sec:graphs}), the mechanism we use to track mailbox
dependencies. Then we present the typing rules
(Section~\ref{sec:rules}) and the properties of well-typed processes
(Section~\ref{sec:properties}).

\subsection{Mailbox Types}
\label{sec:types}

\begin{table}
  \begin{center}
    \[
      \begin{array}{rrcll}
        \textbf{Type} & \TypeT, \TypeS
        & ::= & \In\Pattern & \text{(input)}
        \\
        & & | & \Out\Pattern & \text{(output)}
        \\\\
        \textbf{Pattern} & \PatternE, \PatternF
        & ::= & \tzero & \text{(unreliable mailbox)}
        \\
        & & | & \tone & \text{(empty mailbox)}
        \\
        & & | & \tmessage\Tag\Types & \text{(atom)}
        \\
        & & | & \PatternE \tsum \PatternF & \text{(sum)}
        \\
        & & | & \PatternE \tmul \PatternF & \text{(product)}
        \\
        & & | & \Pattern\tstar & \text{(exponential)}
      \end{array}
    \]
  \end{center}
  \caption{\label{tab:types} Syntax of mailbox types and patterns.}
\end{table}

The syntax of mailbox types and patterns is shown in
Table~\ref{tab:types}.
Patterns are \emph{commutative regular expressions}~\cite{Conway71}
describing the configurations of messages stored in a mailbox.
An atom $\tmessage\Tag\Types$ describes a mailbox containing a
single message with tag $\Tag$ and arguments of type $\Types$. We
let $\MessageType$ range over atoms and abbreviate
$\tmessage\Tag\Types$ with $\Tag$ when $\Types$ is the empty
sequence.
Compound patterns are built using sum ($\PatternE \tsum \PatternF$),
product ($\PatternE \tmul \PatternF$) and exponential
($\Pattern\tstar$).
The constants $\tone$ and $\tzero$ respectively describe the empty
and the unreliable mailbox. There is no configuration of messages
stored in an unreliable mailbox, not even the empty one. We will use
the $\tzero$ pattern for describing mailboxes from which an
unexpected message has been received.
Let us look at a few simple examples.
The pattern $\TagA \tsum \TagB$ describes a mailbox that contains
either an $\TagA$ message or a $\TagB$ message, but not both,
whereas the pattern $\TagA \tsum \tone$ describes a mailbox that
either contains a $\TagA$ message or is empty.
The pattern $\TagA \tmul \TagB$ describes a mailbox that contains
both an $\TagA$ message and also a $\TagB$ message. Note that
$\TagA$ and $\TagB$ may be equal, in which case the mailbox contains
\emph{two} $\TagA$ messages.
Finally, the pattern $\TagA\tstar$ describes a mailbox that contains
an arbitrary number (possibly zero) of $\TagA$ messages.

A \emph{mailbox type} consists of a \emph{capability} (either $\In$
or $\Out$) paired with a pattern. The capability specifies whether
the pattern describes messages to be received from ($\In$) or stored
in ($\Out$) the mailbox. Here are some examples:
A process using a mailbox of type $\Out\TagA$ \emph{must} store an
$\TagA$ message into the mailbox, whereas a process using a mailbox
of type $\In\TagA$ is guaranteed to receive an $\TagA$ message from
the mailbox.
A process using a mailbox of type $\Out(\TagA \tsum \tone)$
\emph{may} store an $\TagA$ message into the mailbox, but is not
obliged to do so.
A process using a mailbox of type $\Out\parens{\TagA \tsum \TagB}$
decides whether to store an $\TagA$ message or a $\TagB$ message in
the mailbox, whereas a process using a mailbox of type
$\In\parens{\TagA \tsum \TagB}$ must be ready to receive both kinds
of messages.
A process using a mailbox of type $\In\parens{\TagA \tmul \TagB}$ is
guaranteed to receive both an $\TagA$ message and a $\TagB$ message
and may decide in which order to do so. A process using a mailbox of
type $\Out\parens{\TagA \tmul \TagB}$ must store both $\TagA$ and
$\TagB$ into the mailbox.
A process using a mailbox of type $\Out{\TagA\tstar}$ decides how
many $\TagA$ messages to store in the mailbox, whereas a process
using a mailbox of type $\In{\TagA\tstar}$ must be prepared to
receive an arbitrary number of $\TagA$ messages.

To cope with possibly infinite types we interpret the productions in
Table~\ref{tab:types} coinductively and consider as types the
regular trees~\cite{Courcelle83} built using those productions. We
require every infinite branch of a type tree to go through
infinitely many atoms. This strengthened contractiveness condition
allows us to define functions inductively on the structure of
patterns, provided that these functions do not recur into argument
types (\cf Definitions~\ref{def:subp} and~\ref{def:der}).

The semantics of patterns is given in terms of sets of multisets of
atoms. Because patterns include types, the given semantics is
parametric in the subtyping relation, which will be defined next:

\begin{definition}[subpattern]
  \label{def:subp}
  The \emph{configurations} of $\Pattern$ are inductively defined by
  the following equations, where $\BagA$ and $\BagB$ range over
  multisets $\bag{\seqof\MessageType}$ of atoms and $\bunion$
  denotes multiset union:
  \[
    \begin{array}[c]{@{}r@{~}c@{~}l@{}}
      \sem\tzero & \eqdef & \emptyset
      \\
      \sem\tone & \eqdef & \set{\bag{}}
    \end{array}
    \qquad
    \begin{array}[c]{@{}r@{~}c@{~}l@{}}
      \sem{\PatternE \tsum \PatternF} & \eqdef & \sem\PatternE \cup \sem\PatternF
      \\
      \sem{\PatternE \tmul \PatternF} & \eqdef & \set{ \BagA \bunion \BagB \mid \BagA \in \sem\PatternE, \BagB \in \sem\PatternF }
    \end{array}
    \qquad
    \begin{array}[c]{@{}r@{~}c@{~}l@{}}
      \sem\MessageType & \eqdef & \set{\bag\MessageType}
      \\
      \sem{\Pattern\tstar} & \eqdef & \sem\tone \cup \sem\Pattern \cup
      \sem{\Pattern\tmul\Pattern} \cup \cdots
    \end{array}
  \]

  Given a preorder relation $\rrel$ on types, we write
  $\PatternE \subp[\rrel] \PatternF$ if
  $\bag[i\in I]{\tmessage{\Tag_i}{\TypesT_i}} \in \sem\PatternE$
  implies
  $\bag[i\in I]{\tmessage{\Tag_i}{\TypesS_i}} \in \sem\PatternF$ and
  $\TypesT_i \rrel \TypesS_i$ for every $i\in I$.
  We write $\eqp[\rrel]$ for ${\subp[\rrel]} \cap {\supp[\rrel]}$.
\end{definition}

For example,
$\sem{\TagA \tsum \TagB} = \set{ \bag\TagA, \bag\TagB }$ and
$\sem{\TagA \tmul \TagB} = \set{ \bag{\TagA, \TagB} }$.
It is easy to see that $\subp[\rrel]$ is a pre-congruence with
respect to all the connectives and that it includes all the known
laws of commutative Kleene algebra~\cite{Conway71}: both $\tsum$ and
$\tmul$ are commutative and associative, $\tsum$ is idempotent and
has unit $\tzero$, $\tmul$ distributes over $\tsum$, it has unit
$\tone$ and is absorbed by $\tzero$.  Also observe that
$\subp[\rrel]$ is related covariantly to $\rrel$, that is
$\TypesT \rrel \TypesS$ implies
$\tmessage\Tag\TypesT \subp[\rrel]
\tmessage\Tag\TypesS$.

We now define subtyping. As types may be infinite, we resort to
coinduction:

\begin{definition}[subtyping]
  \label{def:subt}
  We say that $\rrel$ is a \emph{subtyping relation} if
  $\TypeT \rrel \TypeS$ implies either
  \begin{enumerate}
  \item\label{subt:input} $\TypeT = \In\PatternE$ and
    $\TypeS = \In\PatternF$ and $\PatternE \subp[\rrel] \PatternF$,
    or
  \item\label{subt:output} $\TypeT = \Out\PatternE$ and
    $\TypeS = \Out\PatternF$ and $\PatternF \subp[\rrel] \PatternE$.
  \end{enumerate}
  We write $\subt$ for the largest subtyping relation and say that
  $\TypeT$ is a \emph{subtype} of $\TypeS$ (and $\TypeS$ a
  \emph{supertype} of $\TypeT$) if $\TypeT \subt \TypeS$.
  We write $\eqt$ for ${\subt} \cap {\supt}$, $\subp$ for
  $\subp[\subt]$ and $\eqp$ for $\eqp[\subt]$.
\end{definition}

Items~\ref{subt:input} and~\ref{subt:output} respectively correspond
to the usual covariant and contravariant rules for channel types
with input and output capabilities~\cite{PierceSangiorgi96}.  For
example, $\Out\parens{\TagA \tsum \TagB} \subt \Out\TagA$ because a
mailbox of type $\Out\parens{\TagA \tsum \TagB}$ is more permissive
than a mailbox of type $\Out\TagA$. Dually,
$\In\TagA \subt \In\parens{\TagA \tsum \TagB}$ because a mailbox of
type $\In\TagA$ provides stronger guarantees than a mailbox of type
$\In\parens{\TagA \tsum \TagB}$.
Note that
$\Out\parens{\TagA \tmul \TagB} \eqt \Out\parens{\TagB \tmul \TagA}$
and
$\In\parens{\TagA \tmul \TagB} \eqt \In\parens{\TagB \tmul \TagA}$,
to witness the fact that the order in which messages are stored in a
mailbox is irrelevant.

Mailbox types whose patterns are in particular relations with the
constants $\tzero$ and $\tone$ will play special roles, so we
introduce some corresponding terminology.

\begin{definition}[type and name classification]
  \label{def:classification}
  We say that (a name whose type is) $\Type$ is:
  \begin{itemize}
  \item \emph{relevant} if $\Type \not\subt \Out\tone$ and
    \emph{irrelevant} otherwise;
  \item \emph{reliable} if $\Type \not\subt \In\tzero$ and
    \emph{unreliable} otherwise;
  \item \emph{usable} if $\Out\tzero \not\subt \Type$ and
    \emph{unusable} otherwise.
  \end{itemize}
\end{definition}

A relevant name \emph{must} be used, whereas an irrelevant name may
be discarded because not storing any message in the mailbox it
refers to is allowed by its type. All mailbox types with input
capability are relevant.
A reliable mailbox is one from which no unexpected message has been
received. All names with output capability are reliable.
A usable name \emph{can} be used, in the sense that there exists a
construct of the mailbox calculus that expects a name with that
type. All mailbox types with input capability are usable, but
$\In(\TagA \tmul \tzero)$ is unreliable.
Both $\Out\Tag[A]$ and $\Out\parens{\tone \tsum \Tag[A]}$ are
usable. The former type is also relevant because a process using a
mailbox with this type must (eventually) store an $\Tag[A]$ message
in it. On the contrary, the latter type is irrelevant, since not
using the mailbox is a legal way of using it.

\begin{center}
  \fbox{
    \parbox{0.95\textwidth}{%

      \textit{Henceforth we assume that all types are usable and
        that all argument types are also reliable. That is, we ban
        all types like $\Out\tzero$ or $\Out(\tzero\tmul\Tag)$ and
        all types like $\In\tmessage\Tag{\In\tzero}$ or
        $\Out\tmessage\Tag{\In\tzero}$.
        Example~\ref{ex:global_assumptions} in
        Appendix~\ref{sec:prop_subtyping} discusses the technical
        motivation for these assumptions.}

    }
  }
\end{center}

\begin{example}[lock type]
  \label{ex:lock_type}
  The mailbox used by the lock (Example~\ref{ex:lock}) will have
  several different types, depending on the viewpoint we take
  (either the lock itself or one of its users) and on the state of
  the lock (whether it is free or busy).
  As we can see from the definition of $\DefFreeLock$, a free lock
  waits for an $\TagAcquire$ message which is supposed to carry a
  reference to another mailbox into which the capability to release
  the lock is stored. Since the lock is meant to have several
  concurrent users, it is not possible in general to predict the
  number of $\TagAcquire$ messages in its mailbox.  Therefore, the
  mailbox of a free lock has type
  \[
    \In\tmessage\TagAcquire{\Out\tmessage\TagReply{\Out\TagRelease}}\tstar
  \]
  from the viewpoint of the lock itself. When the lock is busy, it
  expects to find one $\TagRelease$ message in its mailbox, but in
  general the mailbox will also contain $\TagAcquire$ messages
  corresponding to pending acquisition requests. So, the mailbox of
  a busy lock has type
  \[
    \In\parens{\TagRelease\tmul\tmessage\TagAcquire{\Out\tmessage\TagReply{\Out\TagRelease}}\tstar}
  \]
  indicating that the mailbox contains (or will eventually contain)
  a single $\TagRelease$ message along with arbitrarily many
  $\TagAcquire$ messages.

  Prospective owners of the lock may have references to the lock's
  mailbox with type
  $\Out\tmessage\TagAcquire{\Out\tmessage\TagReply{\Out\TagRelease}}$
  or
  $\Out\tmessage\TagAcquire{\Out\tmessage\TagReply{\Out\TagRelease}}\tstar$
  depending on whether they acquire the lock exactly once (just like
  $\VarAlice$ and $\VarCarol$ in Example~\ref{ex:lock}) or several
  times. Other intermediate types are possible in the case of users
  that acquire the lock a bounded number of times.
  The current owner of the lock will have a reference to the lock's
  mailbox of type $\Out\TagRelease$. This type is relevant, implying
  that the owner must eventually release the lock.
  \eoe
\end{example}


\subsection{Dependency Graphs}
\label{sec:graphs}

We use \emph{dependency graphs} for tracking dependencies between
mailboxes. Intuitively, there is a dependency between $\NameU$ and
$\NameV$ if either $\NameV$ is the argument of a message in mailbox
$\NameU$ or $\NameV$ occurs in the continuation of a process waiting
for a message from $\NameU$.
Dependency graphs have names as vertices and undirected
edges. However, the usual representation of graphs does not account
for the fact that mailbox names may be restricted and that the
multiplicity of dependencies matters. Therefore, we define
dependency graphs using the syntax below:
\[
  \textbf{Dependency Graph}
  \qquad
  \dgraphA, \dgraphB ~~::=~~
  \gempty ~~\mid~~
  \gedge\NameU\NameV ~~\mid~~
  \dgraphA \gunion \dgraphB ~~\mid~~
  \New\Mailbox\dgraph
\]

The term $\gempty$ represents the empty graph which has no vertices
and no edges.
The unordered pair $\gedge\NameU\NameV$ represents the graph made of
a single edge connecting the vertices $\NameU$ and $\NameV$.
The term $\dgraphA\gunion\dgraphB$ represents the union of
$\dgraphA$ and $\dgraphB$ whereas $\New\Mailbox\dgraph$ represents
the same graph as $\dgraph$ except that the vertex $\Mailbox$ is
restricted.
The usual notions of free and bound names apply to dependency
graphs. We write $\fn(\dgraph)$ for the free names of $\dgraph$.

\begin{table}
  \[
    \begin{array}{@{}c@{}}
      \inferrule{}{
        \gedge\NameU\NameV \gred\NameU\NameV \gempty
      }
      ~~\defrule{g-axiom}
      \qquad
      \inferrule{
        \dgraphA \gred\NameU\NameV \dgraphA'
      }{
        \dgraphA \gunion \dgraphB \gred\NameU\NameV \dgraphA' \gunion \dgraphB
      }
      ~~\defrule{g-left}
      \qquad
      \inferrule{
        \dgraphB \gred\NameU\NameV \dgraphB'
      }{
        \dgraphA \gunion \dgraphB \gred\NameU\NameV \dgraphA \gunion \dgraphB'
      }
      ~~\defrule{g-right}
      \\\\
      \inferrule{
        \dgraphA \gred\NameU\NameV \dgraphB
        \\
        \Mailbox\ne\NameU,\NameV
      }{
        \New\Mailbox\dgraphA \gred\NameU\NameV \New\Mailbox\dgraphB
      }
      ~~\defrule{g-new}
      \qquad
      \inferrule{
        \dgraphA \gred\NameU\NameW \dgraphB
        \\
        \dgraphB \gred\NameW\NameV \dgraphA'
      }{
        \dgraphA \gred\NameU\NameV \dgraphA'
      }
      ~~\defrule{g-trans}
    \end{array}
  \]
  \caption{\label{tab:dgraph} Labelled transitions of dependency graphs.}
\end{table}

To define the semantics of a dependency graph we use the labelled
transition system of Table~\ref{tab:dgraph}.  A label
$\glabel\NameU\NameV$ represents a path connecting $\NameU$ with
$\NameV$. So, a relation $\dgraphA \gred\NameU\NameV \dgraphA'$
means that $\NameU$ and $\NameV$ are connected in $\dgraphA$ and
$\dgraphA'$ describes the residual edges of $\dgraphA$ that have not
been used for building the path between $\NameU$ and $\NameV$.  The
paths of $\dgraph$ are built from the edges of $\dgraph$ (\cf
\refrule{g-axiom}) connected by shared vertices (\cf
\refrule{g-trans}). Restricted names cannot be observed in labels,
but they may contribute in building paths in the graph (\cf
\refrule{g-new}).

\begin{definition}[graph acyclicity and entailment]
  \label{def:grel}
  Let
  $\grel\dgraphA \eqdef \set{ \pair\NameU\NameV \mid
    \exists\dgraph': \dgraph \gred\NameU\NameV \dgraph' }$ be the
  \emph{dependency relation} generated by $\dgraph$.
  We say that $\dgraph$ is \emph{acyclic} if $\grel\dgraph$ is
  irreflexive.
  We say that $\dgraphA$ \emph{entails} $\dgraphB$, written
  $\dgraphA \gimplies \dgraphB$, if
  $\grel\dgraphB \subseteq \grel\dgraphA$.
\end{definition}

Note that $\gunion$ is commutative, associative and has $\gempty$ as
unit with respect to $\grel\cdot$ (see
Appendix~\ref{sec:prop_graphs}).
These properties of dependency graphs are key to prove that typing
is preserved by structural congruence on processes.
Note also that $\gunion$ is \emph{not} idempotent. Indeed,
$\gedge\NameU\NameV \gunion \gedge\NameU\NameV$ is cyclic whereas
$\gedge\NameU\NameV$ is not. The following example motivates the
reason why the multiplicity of dependencies is important.

\begin{example}
  \label{ex:multiplicity}
  Consider the reduction
  \[
    \ProcessP \eqdef
    \New\MailboxA
    \New\MailboxB\parens{
      \Send\MailboxA\TagA\MailboxB
      \parop
      \Send\MailboxA\TagB\MailboxB
      \parop
      \Receive\MailboxA\TagA\VarX
      \Receive\MailboxA\TagB\VarY
      \Free\MailboxA
      \Send\VarX\Tag\VarY
    }
    \red^*
    \New\MailboxB\Send\MailboxB\Tag\MailboxB
    \nred
  \]
  and observe that $\ProcessP$ stores two messages in the mailbox
  $\MailboxA$, each containing a reference to the mailbox
  $\MailboxB$. The two variables $\VarX$ and $\VarY$, which were
  syntactically different in $\ProcessP$, have been unified into
  $\MailboxB$ in the reduct, which is deadlocked.
  Unlike previous examples of deadlocked processes, which resulted
  from mutual dependencies between different mailboxes, in this case
  the deadlock is caused by the same dependency
  $\gedge\MailboxA\MailboxB$ arising twice.
  \eoe
\end{example}



\subsection{Typing Rules}
\label{sec:rules}

\begin{table}
  \begin{center}
    \[
      \begin{array}{@{}c@{}}
        \multicolumn{1}{@{}l@{}}{\textbf{Typing rules for processes}\hfill\framebox{$\wtp\Context\Process{\smash\dgraph}$}}
        \\\\
        \inferrule{
        }{
          \wtp\EmptyContext\Done\gempty
        }
        ~~\defrule{t-done}
        \qquad
        \inferrule{
          \RecVar : (\Vars : \Types; \dgraph)
        }{
          \wtp{
            \Names : \Types
          }{
            \invoke\RecVar\Names
          }{
            \dgraph\subst\Names\Vars
          }
        }
        ~~\defrule{t-def}
        \qquad
        \inferrule{
          \wtp{\Context, \Mailbox : \In\tone}\Process\dgraph
        }{
          \wtp\Context{\New\Mailbox\Process}{\New\Mailbox\dgraph}
        }
        ~~\defrule{t-new}
        \\\\
        \inferrule{
        }{
          \wtp{
            \NameU : \Out{\tmessage\Tag\TypesT},
            \NamesV : \TypesT
          }{
            \Send\NameU\Tag\NamesV
          }{
            \gedge\NameU{\set\NamesV}
          }
        }
        ~~\defrule{t-msg}
        \qquad
        \inferrule{
          \wtg{
            \Name : \In\Pattern,
            \Context
          }{
            \Guard
          }
          \\
          \vDash \Pattern
        }{
          \textstyle
          \wtp{
            \Name : \In\Pattern,
            \Context
          }{
            \Guard
          }{
            \gedge\Name{\dom(\Context)}
          }
        }
        ~~\defrule{t-guard}
        \\\\
        \inferrule{
          \wtp{\Context_i}{\Process_i}{\dgraph_i}~{}^{(i=1,2)}
        }{
          \wtp{
            \Context_1 \cmul \Context_2
          }{
            \Process_1 \parop \Process_2
          }{
            \dgraph_1 \gunion \dgraph_2
          }
        }
        ~~\defrule{t-par}
        \qquad
        \inferrule{
          \wtp\ContextB\Process\dgraphB
          \\
          \ContextA \subt \ContextB
          \\
          \dgraphA \gimplies \dgraphB
        }{
          \wtp\ContextA\Process\dgraphA
        }
        ~~\defrule{t-sub}
        \\\\
        \multicolumn{1}{@{}l@{}}{\textbf{Typing rules for guards}\hfill\framebox{$\wtg\Context\Guard$}}
        \\\\
        \inferrule{
        }{
          \wtg{
            \Name : \In\tzero,
            \Context
          }{
            \Fail\Name
          }
        }
        ~~\defrule{t-fail}
        \qquad
        \inferrule{
          \wtp\Context\Process\dgraph
        }{
          \wtg{
            \Name : \In\tone, \Context
          }{
            \Free\Name\Process
          }
        }
        ~~\defrule{t-free}
        \\\\
        \inferrule{
          \wtp{
            \Name : \In\PatternE,
            \Context,
            \Vars : \Types
          }{
            \Process
          }{
            \dgraph
          }
        }{
          \wtg{
            \Name : \In(\tmessage\Tag\Types\tmul\PatternE),
            \Context
          }{
            \Receive\Name\Tag\Vars\Process
          }
        }
        ~~\defrule{t-in}
        \qquad
        \inferrule{
          \wtg{
            \Name : \In\Pattern_i,
            \Context
          }{
            \Guard_i
          }
          ~{}^{(i=1,2)}
        }{
          \wtg{
            \Name : \In(\Pattern_1 \tsum \Pattern_2),
            \Context
          }{
            \Guard_1 \sumop \Guard_2
          }
        }
        ~~\defrule{t-branch}
      \end{array}
    \]
  \end{center}
  \caption{
    \label{tab:inference}
    Typing rules.
  }
\end{table}

We use \emph{type environments} for tracking the type of free names
occurring in processes. A type environment is a partial function
from names to types written as $\Names : \Types$ or
$\Name_1 : \Type_1, \dots, \Name_n : \Type_n$. We let $\Gamma$ and
$\Delta$ range over type environments, we write $\dom(\Gamma)$ for
the domain of $\Gamma$ and $\Gamma, \Delta$ for the union of
$\Gamma$ and $\Delta$ when
$\dom(\Gamma) \cap \dom(\Delta) = \emptyset$.
We say that $\Context$ is reliable if so are all the types in its
range.

Judgments for processes have the form $\wtp\Context\Process\dgraph$,
meaning that $\Process$ is well typed in $\Context$ and yields the
dependency graph $\dgraph$. Judgments for guards have the form
$\wtg\Context\Guard$, meaning that $\Guard$ is well typed in
$\Context$.
We say that a judgment $\wtp\Context\Process\dgraph$ is well formed
if $\fn(\dgraph) \subseteq \dom(\Context)$\Luca{Se si toglie
  l'entailment da \refrule{t-sub} questa condizione non serve pi\`u
  e si semplificano le regole e la descrizione. Considerare questa
  modifica dopo aver ricontrollato le prove.} and $\dgraph$ is
acyclic.  Each process typing rule has an implicit side condition
requiring that its conclusion is well formed.
For each global process definition $\define\RecVar\Vars\Process$ we
assume that there is a corresponding \emph{global process
  declaration} of the form $\RecVar : (\Vars : \Types; \dgraph)$.
We say that the definition is \emph{consistent} with the
corresponding declaration if $\wtp{\Vars : \Types}\Process\dgraph$.
Hereafter, all process definitions are assumed to be consistent.
We now discuss the typing rules in detail, introducing auxiliary
notions and notation as we go along.

\DiscussRule{Terminated process}
According to the rule \refrule{t-done}, the terminated process
$\Done$ is well typed in the empty type environment and yields no
dependencies. This is motivated by the fact that $\Done$ does not
use any mailbox. Later on we will introduce a subsumption rule
\refrule{t-sub} that allows us to type $\Done$ in any type
environment with irrelevant names.\Ugo{irrelevant types?}\Luca{\`E
  indifferente per la terminologia introdotta nella
  Definizione~\ref{def:classification}}

\DiscussRule{Message}
Rule \refrule{t-msg} establishes that a message
$\Send\NameU\Tag\NamesV$ is well typed provided that the mailbox
$\NameU$ allows the storing of an $\Tag$-tagged message with
arguments of type $\Types$ and the types of $\NamesV$ are indeed
$\Types$.
The subsumption rule \refrule{t-sub} will make it possible to use
arguments whose type is a \emph{subtype} of the expected ones.
A message $\Send\NameU\Tag\NamesV$ establishes dependencies between
the target mailbox $\NameU$ and all of the arguments $\NamesV$. We
write $\gedge\NameU{\set{\NameV_1,\dots,\NameV_n}}$ for the
dependency graph
$\gedge\NameU{\NameV_1}\gunion \cdots \gunion
\gedge\NameU{\NameV_n}$ and use $\gempty$ for the empty graph union.

\DiscussRule{Process invocation}
The typing rule for a process invocation $\invoke\RecVar\Names$
checks that there exists a global definition for $\RecVar$ which
expects exactly the given number and type of parameters. Again,
rule~\refrule{t-sub} will make it possible to use parameters whose
types are subtypes of the expected ones.
A process invocation yields the same dependencies as the
corresponding process definition, with the appropriate substitutions
applied.

\DiscussRule{Guards}
Guards are used to match the content of a mailbox and possibly
retrieve messages from it.
According to rule~\refrule{t-fail}, the action $\Fail\Name$ matches
a mailbox $\Name$ with type $\In\tzero$, indicating that an
unexpected message has been found in the mailbox. The type
environment may contain arbitrary associations, since the
$\Fail\Name$ action causes a runtime error.
Rule~\refrule{t-free} states that the action $\Free\Name\Process$
matches a mailbox $\Name$ with type $\In\tone$, indicating that the
mailbox is empty. The continuation is well typed in the residual
type environment $\Context$.
An input action $\Receive\Name\Tag\Vars\Process$ matches a mailbox
$\Name$ with type $\In(\tmessage\Tag\Types \tmul \PatternE)$ that
guarantees the presence of an $\Tag$-tagged message possibly along
with other messages as specified by $\PatternE$. The continuation
$\Process$ must be well typed in an environment where the mailbox
has type $\In\PatternE$, which describes the content of the mailbox
after the $\Tag$-tagged message has been removed. Associations for
the received arguments $\Vars$ are also added to the type
environment.
A compound guard $\Guard_1 \sumop \Guard_2$ offers the actions
offered by $\Guard_1$ and $\Guard_2$ and therefore matches a mailbox
$\Name$ with type $\In(\Pattern_1 \tsum \Pattern_2)$, where
$\Pattern_i$ is the pattern that describes the mailbox matched by
$\Guard_i$. Note that the residual type environment $\Context$ is
the same in both branches, indicating that the type of other
mailboxes used by the guard cannot depend on that of $\Name$.

The judgments for guards do not yield any dependency graph. This is
compensated by the rule~\refrule{t-guard}, which we describe next.

\DiscussRule{Guarded processes}
Rule~\refrule{t-guard} is used to type a guarded process $\Guard$,
which matches some mailbox $\Name$ of type $\In\Pattern$ and
possibly retrieves messages from it.
As we have seen while discussing guards, $\Pattern$ is supposed to
be a pattern of the form $\Pattern_1 \tsum \cdots \tsum \Pattern_n$
where each $\Pattern_i$ is either $\tzero$, $\tone$ or of the form
$\MessageType \tmul \PatternF$. However, only the patterns
$\Pattern$ that are in \emph{normal form} are suitable to be used in
this typing rule and the side condition $\df{}\Pattern$ checks that
this is indeed the case. We motivate the need of a normal form by
means of a simple example.

Suppose that our aim is to type a process
$\Receive\Name\TagA{}\ProcessP \sumop \Receive\Name\TagB{}\ProcessQ$
that consumes either an $\TagA$ message or a $\TagB$ message from
$\Name$, whichever of these two messages is matched first in
$\Name$, and then continues as $\ProcessP$ or $\ProcessQ$
correspondingly. Suppose also that the type of $\Name$ is
$\In\Pattern$ with
$\Pattern \eqdef \TagA\tmul\TagC \tsum \TagB\tmul\TagA$, which
allows the rules for guards to successfully type check the process.
As we have seen while discussing rule~\refrule{t-in}, $\ProcessP$
and $\ProcessQ$ must be typed in an environment where the type of
$\Name$ has been updated so as to reflect the fact that the consumed
message is no longer in the mailbox. In this particular case, we
might be tempted to infer that the type of $\Name$ in $\ProcessP$ is
$\In\TagC$ and that the type of $\Name$ in $\ProcessQ$ is
$\In\TagA$. Unfortunately, the type $\In\TagC$ does not accurately
describe the content of the mailbox after $\TagA$ has been consumed
because, according to $\Pattern$, the $\TagA$ message may be
accompanied by \emph{either} a $\TagB$ message \emph{or} by a
$\TagC$ message, whereas $\In\TagC$ only accounts for the second
possibility.
Thus, the appropriate pattern to be used for typing this process is
$\TagA \tmul (\TagB \tsum \TagC) \tsum \TagB \tmul \TagA$, where the
fact that $\TagB$ may be found after consuming $\TagA$ is made
explicit.  This pattern and $\Pattern$ are equivalent as they
generate exactly the same set of valid configurations. Yet,
$\TagA \tmul (\TagB \tsum \TagC) \tsum \TagB \tmul \TagA$ is in
normal form whereas $\Pattern$ is not.  In general the normal form
is not unique. For example, also the patterns
$\TagB \tmul \TagA \tsum \TagC \tmul \TagA$ and
$\TagA \tmul (\TagB \tsum \TagC)$ are in normal form and equivalent
to $\Pattern$ and can be used for typing processes that consume
messages from $\Name$ in different orders or with different
priorities.

The first ingredient for defining the notion of pattern normal form
is that of pattern residual $\pder\Pattern\MessageType$, which
describes the content of a mailbox that initially contains a
configuration of messages described by $\Pattern$ and from which we
remove a single message with type $\MessageType$:

\begin{definition}[pattern residual]
  \label{def:der}
  The \emph{residual} of a pattern $\Pattern$ with respect to an
  atom $\MessageType$, written $\pder\Pattern\MessageType$, is
  inductively defined by the following equations:
  \[
    \begin{array}{r@{~}c@{~}ll}
      \pder\tzero\MessageType = \pder\tone\MessageType & \eqdef & \tzero
      \\
      \pder{(\Pattern\tstar)}\MessageType & \eqdef & \pder\Pattern\MessageType \tmul \Pattern\tstar
    \end{array}
    ~
    \begin{array}{r@{~}c@{~}ll}
      \pder{\tmessage\Tag\TypesT}{\tmessage\Tag\TypesS} & \eqdef & \tone & \text{if $\TypesT \subt \TypesS$}
      \\
      \pder{\tmessage\Tag\TypesT}{\tmessage{\Tag'}\TypesS} & \eqdef & \tzero & \text{if $\Tag \ne \Tag'$}
    \end{array}
    ~
    \begin{array}{r@{~}c@{~}ll}
      \pder{(\PatternE \tsum \PatternF)}\MessageType & \eqdef & \pder\PatternE\MessageType \tsum \pder\PatternF\MessageType
      \\
      \pder{(\PatternE \tmul \PatternF)}\MessageType & \eqdef & \pder\PatternE\MessageType \tmul \PatternF \tsum \PatternE \tmul \pder\PatternF\MessageType
    \end{array}
  \]
\end{definition}

If we take the pattern $\PatternE$ discussed earlier we have
$\pder\Pattern\TagA = \tone \tmul \TagC \tsum \TagA \tmul \tzero
\tsum \tzero \tmul \tone \tsum \TagB \tmul \tone \eqp \TagB \tsum
\TagC$.
The pattern residual operator is closely related to Brzozowski's
derivative in a commutative Kleene
algebra~\cite{Brzozowski64,HopkinsKozen99}. Unlike Brzozowski's
derivative, the pattern residual is a partial operator:
$\pder\Pattern{\tmessage\Tag\TypesS}$ is defined provided that the
$\TypesS$ are supertypes of all types $\TypesT$ found in
$\Tag$-tagged atoms within $\Pattern$. This condition has a natural
justification: when choosing the message to remove from a mailbox
containing a configuration of messages described by $\Pattern$, only
the tag $\Tag$ of the message -- and not the type of its arguments
-- matters. Thus, $\TypesS$ faithfully describe the received
arguments provided that they are supertypes of \emph{all} argument
types of \emph{all} $\Tag$-tagged message types in $\Pattern$. For
example, assuming $\tnat \subt \tint$, we have that
$\pder{(\tmessage\Tag\tint \tsum
  \tmessage\Tag\tnat)}{\tmessage\Tag\tint}$ is defined whereas
$\pder{(\tmessage\Tag\tint \tsum
  \tmessage\Tag\tnat)}{\tmessage\Tag\tnat}$ is not.

We use the notion of pattern residual to define pattern normal
forms:

\begin{definition}[pattern normal form]
  \label{def:nf}
  We say that a pattern $\Pattern$ is in \emph{normal form}, written
  $\df{}\Pattern$, if $\df\Pattern\Pattern$ is derivable by the
  following axioms and rules:
  \[
    \df\Pattern\tzero
    \qquad
    \df\Pattern\tone
    \qquad
    \inferrule{
      \PatternF \eqp \pder\PatternE\MessageType
    }{
      \df\PatternE{\MessageType\tmul\PatternF}
    }
    \qquad
    \inferrule{
      \df{\PatternE}{\PatternF_1}
      \\
      \df{\PatternE}{\PatternF_2}
    }{
      \df\PatternE{\PatternF_1 \tsum \PatternF_2}
    }
  \]
\end{definition}

Essentially, the judgment $\df{}\PatternE$ verifies that $\PatternE$
is expressed as a sum of $\tzero$, $\tone$ and
$\MessageType\tmul\PatternF$ terms where $\PatternF$ is (equivalent
to) the residual of $\PatternE$ with respect to $\MessageType$.

A guarded process yields all the dependencies between the mailbox
$\Name$ being used and the names occurring free in the
continuations, because the process will not be able to exercise the
capabilities on these names until the message from $\Name$ has been
received.

\DiscussRule{Parallel composition}
Rule~\refrule{t-par} deals with parallel compositions of the form
$\Process_1 \parop \Process_2$. This rule accounts for the fact that
the same mailbox $\Name$ may be used in both $\Process_1$ and
$\Process_2$ according to different types.  For example,
$\Process_1$ might store an $\Tag[A]$ message into $\Name$ and
$\Process_2$ might store a $\Tag[B]$ message into $\Name$. In the
type environment for the parallel composition as a whole we must be
able to express with a single type the combined usages of $\Name$ in
$\Process_1$ and $\Process_2$.  This is accomplished by introducing
an operator that combines types:

\begin{definition}[type combination]
  \label{def:type_combination}
  We write $\TypeT \cmul \TypeS$ for the \emph{combination} of
  $\TypeT$ and $\TypeS$, where $\cmul$ is the partial symmetric
  operator defined as follows:
  \[
    \Out\PatternE \cmul \Out\PatternF \eqdef \Out\parens{\PatternE \tmul \PatternF}
    \qquad
    \Out\PatternE \cmul \In\parens{\PatternE \tmul \PatternF} \eqdef \In\PatternF
    \qquad
    \In\parens{\PatternE \tmul \PatternF} \cmul \Out\PatternE \eqdef \In\PatternF
  \]
\end{definition}

Continuing the previous example, we have
$\Out\TagA \cmul \Out\TagB = \Out(\TagA \tmul \TagB)$ because
storing one $\TagA$ message and one $\TagB$ message in $\Name$ means
storing an overall configuration of messages described by the
pattern $\TagA \tmul \TagB$.
When $\Name$ is used for both input and output operations, the
combined type of $\Name$ describes the overall balance of the
mailbox. For example, we have
$\Out\TagA \cmul \In(\TagA \tmul \TagB) = \In\TagB$: if we combine a
process that stores an $\TagA$ message into $\Name$ with another
process that consumes both an $\TagA$ message and a $\TagB$ message
from the same mailbox in some unspecified order, then we end up with
a process that consumes a $\TagB$ message from $\Name$.

Notice that $\cmul$ is a partial operator in that not all type
combinations are defined. It might be tempting to relax $\cmul$ in
such a way that
$\Out(\TagA \tmul \TagB) \cmul \In\TagA = \Out\TagB$, so as to
represent the fact that the combination of two processes results in
an excess of messages that must be consumed by some other process.
However, this would mean allowing different processes to consume
messages from the same mailbox, which is not safe in general (see
Example~\ref{ex:linear_input}).
For the same reason, the combination of $\In\PatternE$ and
$\In\PatternF$ is always undefined regardless of $\PatternE$ and
$\PatternF$.
Operators akin to $\cmul$ for the combination of channel types are
commonly found in substructural type systems for the (linear)
$\pi$-calculus~\cite{SangiorgiWalker01,Padovani14B}. Unlike these
systems, in our case the combination concerns also the content of a
mailbox in addition to the capabilities for accessing it.

\begin{example}
  \label{ex:linear_input}
  Suppose that we extend the type combination operator so that
  $\In\parens{\PatternE \tmul \PatternF} = \In\PatternE \cmul
  \In\PatternF$.  To see why this extension would be dangerous,
  consider the process
  \[
    \begin{array}{l}
      \bigparens{
        \Send\Name\TagA\True
        \parop
        \Receive\Name\TagA\VarX
        \parens{
          \Send\VarSystem\TagPrintBool\VarX
          \parop
          \Free\Name
          \Done
        }
      }
      \parop
      {}
      \\
      \bigparens{
        \Send\Name\TagA{\mkint2}
        \parop
        \Receive\Name\TagA\VarY
        \parens{
          \Send\VarSystem\TagPrintInt\VarY
          \parop
          \Free\Name
          \Done
        }
      }
    \end{array}
  \]

  Overall, this process stores into $\Name$ a combination of
  messages that matches the pattern
  $\tmessage\TagA\tbool \tmul \tmessage\TagA\tint$ and retrieves
  from $\Name$ the same combination of messages. Apparently, $\Name$
  is used in a balanced way.
  However, there is no guarantee that the $\Send\Name\TagA\True$
  message is received by the process at the top and that the
  $\Send\Name\TagA{\mkint2}$ message is received by the process at
  the bottom. In fact, the converse may happen because only the tag
  of a message -- not the type or value of its arguments -- is used
  for matching messages in the mailbox calculus.
  \eoe
\end{example}

We now extend type combination to type environments in the expected
way:

\begin{definition}[type environment combination]
  \label{def:cmul}
  We write $\ContextA \cmul \ContextB$ for the \emph{combination} of
  $\ContextA$ and $\ContextB$, where $\cmul$ is the partial operator
  inductively defined by the equations:
  \[
    \Context \cmul \EmptyContext \eqdef \Context
    \qquad
    \EmptyContext \cmul \Context \eqdef \Context
    \qquad
    (\Name : \TypeT, \ContextA) \cmul (\Name : \TypeS, \ContextB) \eqdef \Name : \TypeT \cmul \TypeS, (\ContextA \cmul \ContextB)
  \]
\end{definition}

With this machinery in place, rule~\refrule{t-par} is
straightforward to understand and the dependency graph of
$\Process_1 \parop \Process_2$ is simply the union of the dependency
graphs of $\Process_1$ and $\Process_2$.


\DiscussRule{Mailbox restriction}
Rule~\refrule{t-new} establishes that the process creating a new
mailbox $\Mailbox$ with scope $\Process$ is well typed provided that
the type of $\Mailbox$ is $\In\tone$. This means that every message
stored in the mailbox $\Mailbox$ by (a sub-process of) $\Process$ is
also consumed by (a sub-process of) $\Process$.
The dependency graph of the process is the same as that of
$\Process$, except that $\Mailbox$ is restricted.

\DiscussRule{Subsumption}
As we have anticipated earlier in a few occasions, the subsumption
rule \refrule{t-sub} allows us to rewrite types in the type
environment and to introduce associations for irrelevant names.  The
rule makes use of the following notion of subtyping for type
environments:

\begin{definition}[subtyping for type environments]
  We say that $\ContextA$ is a \emph{subtype environment} of
  $\ContextB$ if $\ContextA \subt \ContextB$, where $\subt$ is the
  least preorder on type environments such that:
  \[
    \inferrule{
      \mathstrut
    }{
      \Name : \Out\tone, \Context \subt \Context
    }
    \qquad
    \inferrule{
      \TypeT \subt \TypeS
    }{
      \Name : \TypeT, \Context \subt \Name : \TypeS, \Context
    }
  \]
\end{definition}

Intuitively, $\ContextA \subt \ContextB$ means that $\ContextA$
provides more capabilities than $\ContextB$. For example,
$\NameU : \Out(\Tag[A] \tsum \Tag[B]), \NameV : \Out\tone \subt
\NameU : \Out\Tag[A]$ since a process that is well typed in the
environment $\Name : \Out\Tag[A]$ stores an $\Tag[A]$ message into
$\Name$, which is also a valid behavior in the environment
$\NameU : \Out(\Tag[A] \tsum \Tag[B]), \NameV : \Out\tone$ where
$\NameU$ has more capabilities (it is also possible to store a
$\Tag[B]$ message into $\NameU$) and there is an irrelevant name
$\NameV$ not used by the process.

{ Rule~\refrule{t-sub} also allows us to replace the
  dependency graph yielded by $\Process$ with another one that
  generates a superset of dependencies. In general, the dependency
  graph should be kept as small as possible to minimize the
  possibility of yielding mutual dependencies (see
  \refrule{t-par}). The replacement allowed by \refrule{t-sub} is
  handy for technical reasons, but not necessary. The point is that
  the residual of a process typically yields fewer dependencies than
  the process itself, so we use \refrule{t-sub} to enforce the
  invariance of dependency graphs across reductions.  }

\begin{example}
  \label{ex:lock_derivation}
  We show the full typing derivation for $\DefFreeLock$ and
  $\DefBusyLock$ defined in Example~\ref{ex:lock}.  Our objective is
  to show the consistency of the global process declarations
  \[
    \DefFreeLock : (\VarSelf : \TypeT; \emptyset)
    \qquad
    \DefBusyLock : (\VarSelf : \TypeT, \VarOwner : \TypeR; \gedge\VarSelf\VarOwner)
  \]
  where $\TypeT \eqdef \In\tmessage\TagAcquire\TypeR\tstar$ and
  $\TypeR \eqdef \Out{\tmessage\TagReply{\Out\TagRelease}}$.  In the
  derivation trees below we rename $\VarSelf$ as $\VarX$ and
  $\VarOwner$ and $\VarY$ to resonably fit the derivations within
  the page limits. We start from the body of $\DefBusyLock$, which
  is simpler, and obtain
  \[
    \begin{prooftree}
      \[
        \justifies
        \wtp{
          \VarX : \Out\TagRelease,
          \VarY : \TypeR
        }{
          \Send\VarY\TagReply\VarX
        }{
          \gedge\VarY\VarX
        }
        \using\refrule{t-msg}
      \]
      \[
        \[
          \[
            \justifies
            \wtp{
              \VarX : \TypeT
            }{
              \invoke\DefFreeLock\VarX
            }{
              \gempty
            }
            \using\refrule{t-def}
          \]
          \justifies
          \wtg{
            \VarX : \TypeS
          }{
            \Receive\VarX\TagRelease{}
            \invoke\DefFreeLock\VarX
          }
          \using\refrule{t-in}
        \]
        \justifies
        \wtp{
          \VarX : \TypeS
        }{
          \Receive\VarX\TagRelease{}
          \invoke\DefFreeLock\VarX
        }{
          \gempty
        }
        \using\refrule{t-guard}
      \]
      \justifies
      \wtp{
        \VarX : \TypeT, \VarY : \TypeR
      }{
        \Send\VarY\TagReply\VarX
        \parop
        \Receive\VarX\TagRelease{}
        \invoke\DefLock\VarX
      }{
        \gedge\VarY\VarX \gunion \gempty
      }
      \using\refrule{t-par}
    \end{prooftree}
  \]
  where
  $\TypeS \eqdef \In{(\TagRelease \tmul
    \tmessage\TagAcquire\TypeR\tstar)}$.

  Concerning $\DefFreeLock$, the key step is rewriting the pattern
  of $\TypeT$ in a normal form that matches the branching structure
  of the process. To this aim, we use the property
  $\PatternE\tstar \eqp \tone \tsum \PatternE \tmul \PatternE\tstar$
  and the fact that $\tzero$ is absorbing for the product
  connective:
  \[
    \begin{prooftree}
      \[
        \[
          \qquad\qquad
          \vdots
          \qquad\qquad
          \[
            \[
              \[
                \justifies
                \wtg{
                  \VarX : \In\tzero
                }{
                  \Fail\VarX
                }
                \using\refrule{t-fail}
              \]
              \justifies
              \wtp{
                \VarX : \In\tzero
              }{
                \Fail\VarX
              }{
                \gempty
              }
              \using\refrule{t-guard}
            \]
            \using\refrule{t-guard}
            \justifies
            \wtg{
              \VarX : \In(\TagRelease\tmul\tzero)
            }{
              \Receive\VarX\TagRelease{}
              \Fail\VarX
            }
            \using\refrule{t-in}
          \]
          \justifies
          \wtg{
            \VarX :
            \In\parens{
              \tone
              \tsum
              \tmessage\TagAcquire\TypeR \tmul \tmessage\TagAcquire\TypeR\tstar
              \tsum
              \TagRelease \tmul \tzero
            }
          }{
            {}
            \cdots
            \sumop
            \Receive\VarX\TagRelease{}
            \Fail\VarX
          }
          \using\refrule{t-branch}
        \]
        \justifies
        \wtp{
          \VarX :
          \In\parens{
            \tone
            \tsum
            \tmessage\TagAcquire\TypeR \tmul \tmessage\TagAcquire\TypeR\tstar
            \tsum
            \TagRelease \tmul \tzero
          }
        }{
          {}
          \cdots
          \sumop
          \Receive\VarX\TagRelease{}
          \Fail\VarX
        }{
          \gempty
        }
        \using\refrule{t-guard}
      \]
      \justifies
      \wtp{
        \VarX : \TypeT
      }{
        \Free\VarX\Done
        \sumop
        \cdots
        \sumop
        \Receive\VarX\TagRelease{}
        \Fail\VarX
      }{
        \gempty
      }
      \using\refrule{t-sub}
    \end{prooftree}
  \]

  The elided sub-derivation concerns the first two branches of
  $\DefFreeLock$ and is as follows:
  \[
    \begin{prooftree}
      \[
        \[
          \justifies
          \wtp{
            \EmptyContext
          }{
            \Done
          }{
            \gempty
          }
          \using\refrule{t-done}
        \]
        \justifies
        \wtg{
          \VarX : \In\tone
        }{
          \Free\VarX\Done
        }
        \using\refrule{t-free}
      \]
      \[
        \[
          \justifies
          \wtp{
            \VarX : \TypeT,
            \VarY : \TypeR
          }{
            \invoke\DefBusyLock{\VarX,\VarY}
          }{
            \gedge\VarX\VarY
          }
          \using\refrule{t-def}
        \]
        \justifies
        \wtg{
          \VarX :
          \In\tmessage\TagAcquire\TypeR \tmul \tmessage\TagAcquire\TypeR\tstar
        }{
          \Receive\VarX\TagAcquire\VarY
          \invoke\DefBusyLock{\VarX,\VarY}
        }
        \using\refrule{t-in}
      \]
      \justifies
      \wtg{
        \VarX :
        \In\parens{
          \tone
          \tsum
          \tmessage\TagAcquire\TypeR \tmul \tmessage\TagAcquire\TypeR\tstar
        }
      }{
        \Free\VarX\Done
        \sumop
        \Receive\VarX\TagAcquire\VarY
        \invoke\DefBusyLock{\VarX,\VarY}
      }
      \using\refrule{t-branch}
    \end{prooftree}
  \]

  The process~\eqref{eq:alice_carol}, combining an instance of the
  lock and the users $\VarAlice$ and $\VarCarol$, is also well
  typed. As we will see at the end of Section~\ref{sec:properties},
  this implies that both $\VarAlice$ and $\VarCarol$ are able to
  acquire the lock, albeit in some unspecified order.
  \eoe
\end{example}

\begin{example}
  \label{ex:future_deadlock}
  In this example we show that the
  process~\eqref{eq:future_deadlock} of Example~\ref{ex:future} is
  ill typed. In order to do so, we assume the global process
  declaration
  \[
    \DefFuture : (\VarSelf : \In(\tmessage\TagPut\tint \tmul
    \tmessage\TagGet{\Out\tmessage\TagReply\tint}\tstar); \gempty)
  \]
  which can be shown to be consistent with the given definition for
  $\DefFuture$.
  In the derivation below we use the pattern
  $\PatternF \eqdef \tmessage\TagPut\tint \tmul
  \tmessage\TagGet\TypeR\tstar$ and the types
  $\TypeT \eqdef \Out\tmessage\TagPut\tint$,
  $\TypeS \eqdef \In\parens{\tmessage\TagReply\tint \tmul \tone}$
  and $\TypeR \eqdef \Out\tmessage\TagReply\tint$:
  \[
    \begin{prooftree}
      \[
        \[
          \[
            \justifies
            \wtp{
              \VarFuture : \Out\tmessage\TagGet\TypeR,
              \MailboxC : \TypeR
            }{
              \Send\VarFuture\TagGet\MailboxC
            }{
              \gedge\VarFuture\MailboxC
            }
            \using\refrule{t-msg}
          \]
          \[
            \[
              \[
                \[
                  \[
                    \justifies
                    \wtp{
                      \VarFuture : \TypeT, \Var : \tint
                    }{
                      \Send\VarFuture\TagPut\Var
                    }{
                      \gempty
                    }
                    \using\refrule{t-msg}
                  \]
                  \justifies
                  \wtg{
                    \VarFuture : \TypeT, \MailboxC : \In\tone, \Var : \tint
                  }{
                    \Free\MailboxC
                    \Send\VarFuture\TagPut\Var
                  }
                  \using\refrule{t-free}
                \]
                \justifies
                \wtp{
                  \VarFuture : \TypeT, \MailboxC : \In\tone, \Var : \tint
                }{
                  \Free\MailboxC
                  \Send\VarFuture\TagPut\Var
                }{
                  \gedge\MailboxC\VarFuture
                }
                \using\refrule{t-guard}
              \]
              \justifies
              \wtg{
                \VarFuture : \TypeT, \MailboxC : \TypeS
              }{
                \Receive\MailboxC\TagReply\Var
                \Free\MailboxC
                \Send\VarFuture\TagPut\Var
              }
              \using\refrule{t-in}
            \]
            \justifies
            \wtp{
              \VarFuture : \TypeT, \MailboxC : \TypeS
            }{
              \Receive\MailboxC\TagReply\Var
              \Free\MailboxC
              \Send\VarFuture\TagPut\Var
            }{
              \gedge\MailboxC\VarFuture
            }
            \using\refrule{t-guard}
          \]
          \justifies
          \wtp{
            \VarFuture : \Out\parens{
              \tmessage\TagGet\TypeR
              \tmul
              \tmessage\TagPut\tint
            },
            \MailboxC : \In\tone
          }{
            \Send\VarFuture\TagGet\MailboxC
            \parop
            \Receive\MailboxC\TagReply\Var
            \Free\MailboxC
            \Send\VarFuture\TagPut\Var
          }{
            {-}
          }
          \using\refrule{t-par}
        \]
        \justifies
        \wtp{
          \VarFuture : \Out\PatternF,
          \MailboxC : \In\tone
        }{
          \Send\VarFuture\TagGet\MailboxC
          \parop
          \Receive\MailboxC\TagReply\Var
          \Free\MailboxC
          \Send\VarFuture\TagPut\Var
        }{
          {-}
        }
        \using\refrule{t-sub}
      \]
      \justifies
      \wtp{
        \VarFuture : \Out\PatternF
      }{
        \New\MailboxC\parens{
          \Send\VarFuture\TagGet\MailboxC
          \parop
          \Receive\MailboxC\TagReply\Var
          \Free\MailboxC
          \Send\VarFuture\TagPut\Var
        }
      }{
        {-}
      }
      \using\refrule{t-new}
    \end{prooftree}
  \]

  In attempting this derivation we have implicitly extended the
  typing rules so that names with type $\tint$ do not contribute in
  generating any significant dependency. The critical point of the
  derivation is the application of \refrule{t-par}, where we are
  composing two parallel processes that yield a circular dependency
  between $\MailboxC$ and $\VarFuture$. In the process on the left
  hand side, the dependency $\gedge\VarFuture\MailboxC$ arises
  because $\MailboxC$ is sent as a reference in a message targeted
  to $\VarFuture$. In the process on the right hand side, the
  dependency $\gedge\MailboxC\VarFuture$ arises because there are
  guards concerning the mailbox $\MailboxC$ that block an output
  operation on the mailbox $\VarFuture$.
  \eoe
\end{example}

\begin{example}[non-deterministic choice]
  \label{ex:choice}
  Different input actions in the same guard can match messages with
  the same tag. This feature can be used to encode in the mailbox
  calculus the non-deterministic choice between $\Process_1$ and
  $\Process_2$ as the process
  \begin{equation}
    \label{eq:choice}
    \New\Mailbox\parens{
      \Send\Mailbox\Tag{}
      \parop
      \Receive\Mailbox\Tag{}
      \Free\Mailbox
      \Process_1
      \sumop
      \Receive\Mailbox\Tag{}
      \Free\Mailbox
      \Process_2
    }
  \end{equation}
  provided that $\wtp\Context{\Process_i}{\dgraph_i}$ for $i=1,2$.
  That is, $\Process_1$ and $\Process_2$ must be well typed in the
  same type environment. Below is the typing derivation for
  \eqref{eq:choice}
  \[
    \begin{prooftree}
      \[
        \[
          \justifies
          \wtp{
            \Mailbox : \Out\Tag
          }{
            \Send\Mailbox\Tag{}
          }{
            \gempty
          }
          \using\refrule{t-msg}
        \]
        \[
          \[
            \[
              \[
                \[
                  \[
                    \wtp\Context{
                      \Process_i
                    }{
                      \dgraph_i
                    }
                    \justifies
                    \wtg{
                      \Context, \Mailbox : \In\tone
                    }{
                      \Free\Mailbox
                      \Process_i
                    }
                    \using\refrule{t-free}
                  \]
                  \justifies
                  \wtp{
                    \Context, \Mailbox : \In\tone
                  }{
                    \Free\Mailbox
                    \Process_i
                  }{
                    \dgraph
                  }
                  \using\refrule{t-guard}
                \]
                \justifies
                \wtg{
                  \Context, \Mailbox : \In(\Tag \tmul \tone)
                }{
                  \Receive\Mailbox\Tag{}
                  \Free\Mailbox
                  \Process_i
                }
                \using\refrule{t-in}, i=1,2
              \]
              \justifies
              \wtg{
                \Context, \Mailbox : \In(\Tag \tmul \tone \tsum \Tag \tmul \tone)
              }{
                \Receive\Mailbox\Tag{}
                \Free\Mailbox
                \Process_1
                \sumop
                \Receive\Mailbox\Tag{}
                \Free\Mailbox
                \Process_2
              }
              \using\refrule{t-branch}
            \]
            \justifies
            \wtp{
              \Context, \Mailbox : \In(\Tag \tmul \tone \tsum \Tag \tmul \tone)
            }{
              \Receive\Mailbox\Tag{}
              \Free\Mailbox
              \Process_1
              \sumop
              \Receive\Mailbox\Tag{}
              \Free\Mailbox
              \Process_2
            }{
              \dgraph
            }
            \using\refrule{t-guard}
          \]
          \justifies
          \wtp{
            \Context, \Mailbox : \In(\Tag \tmul \tone)
          }{
            \Receive\Mailbox\Tag{}
            \Free\Mailbox
            \Process_1
            \sumop
            \Receive\Mailbox\Tag{}
            \Free\Mailbox
            \Process_2
          }{
            \dgraph
          }
          \using\refrule{t-sub}
        \]
        \justifies
        \wtp{
          \Context, \Mailbox : \In\tone
        }{
          \Send\Mailbox\Tag{}
          \parop
          \Receive\Mailbox\Tag{}
          \Free\Mailbox
          \Process_1
          \sumop
          \Receive\Mailbox\Tag{}
          \Free\Mailbox
          \Process_2
        }{
          \dgraph
        }
        \using\refrule{t-par}
      \]
      \justifies
      \wtp\Context{
        \New\Mailbox\parens{
          \Send\Mailbox\Tag{}
          \parop
          \Receive\Mailbox\Tag{}
          \Free\Mailbox
          \Process_1
          \sumop
          \Receive\Mailbox\Tag{}
          \Free\Mailbox
          \Process_2
        }
      }{
        \New\Mailbox\dgraph
      }
      \using\refrule{t-new}
    \end{prooftree}
  \]
  where $\dgraph \eqdef \gedge\Mailbox{\dom(\Context)}$.  The key
  step is the application of \refrule{t-sub}, which exploits the
  idempotency of $\tsum$ (in patterns) to rewrite $\Tag \tmul \tone$
  as the equivalent pattern
  $\Tag \tmul \tone \tsum \Tag \tmul \tone$.
  \eoe
\end{example}


\subsection{Properties of well-typed processes}
\label{sec:properties}

In this section we state the main properties enjoyed by well-typed
processes. As usual, subject reduction is instrumental for all of
the results that follow as it guarantees that typing is preserved by
reductions:

\begin{theorem}
  \label{thm:sr}
  If $\Context$ is reliable and $\wtp\Context\ProcessP\dgraph$ and
  $\ProcessP \red \ProcessQ$, then $\wtp\Context\ProcessQ\dgraph$.
\end{theorem}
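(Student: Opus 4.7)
The plan is to prove subject reduction by induction on the derivation of $\ProcessP \red \ProcessQ$, as is standard for process-calculus type systems. Before the main induction, I would establish a few auxiliary results: (i) a substitution lemma stating that $\wtp{\Context, \Vars:\Types}\Process\dgraph$ implies $\wtp{\Context,\Names:\Types}{\Process\subst\Names\Vars}{\dgraph\subst\Names\Vars}$ whenever the $\Names$ are fresh for $\Context$; (ii) preservation of typing under the structural congruence $\equiv$, which uses commutativity/associativity of $\cmul$ on type environments and the corresponding properties of $\gunion$ on dependency graphs noted after Definition~\ref{def:grel}; (iii) the usual inversion/generation lemmas for \refrule{t-par}, \refrule{t-guard}, \refrule{t-branch}, and \refrule{t-in}, each formulated up to \refrule{t-sub} so subtyping on environments and entailment on graphs can be absorbed; and (iv) the fact that the reliability assumption on $\Context$ is preserved by the residuals obtained via $\cmul$, which rests on the global ban on unreliable argument types.

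The non-trivial cases are \refrule{r-read} and \refrule{r-free}. For \refrule{r-read}, inversion of \refrule{t-par} splits $\Context = \Context_1 \cmul \Context_2$ with $\Context_1$ typing $\Send\Mailbox\Tag\MailboxesC$ via \refrule{t-msg} (so $\Mailbox:\Out\tmessage\Tag\TypesT$ and $\MailboxesC:\TypesT$) and $\Context_2$ typing the guarded process via \refrule{t-guard}. Further inversion through \refrule{t-branch} and \refrule{t-in} exposes that $\Context_2(\Mailbox) = \In\Pattern$ with $\Pattern$ admitting $\tmessage\Tag\TypesS \tmul \PatternE$ as a summand in its normal form (so that $\PatternE \eqp \pder\Pattern{\tmessage\Tag\TypesS}$), and that $\wtp{\Context',\Mailbox:\In\PatternE,\Vars:\TypesS}\Process{\dgraph'}$ for some $\Context',\dgraph'$ with $\Context_2 = \Context', \Mailbox:\In\Pattern$. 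Combining with $\Mailbox:\Out\tmessage\Tag\TypesT$ via the second clause of Definition~\ref{def:type_combination} yields $\Mailbox:\In\PatternE$ in the reduct's environment, and the substitution lemma applied with $\MailboxesC$ for $\Vars$ (using $\TypesT \subt \TypesS$ which is available after covariant adjustment via \refrule{t-sub} on the message typing) yields a typing for $\Process\subst\MailboxesC\Vars$. The case \refrule{r-free} is analogous and simpler: \refrule{t-new} supplies $\Mailbox:\In\tone$, \refrule{t-free} gives $\wtp\Context\Process\dgraph$ in the residual environment, and $\Mailbox$ no longer appears.

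Cases \refrule{r-def}, \refrule{r-par}, \refrule{r-new} are routine: the first uses the consistency assumption on global declarations together with the substitution lemma; the latter two follow from the induction hypothesis combined with \refrule{t-par} and \refrule{t-new}, since both are compositional in the typings and dependency graphs of subprocesses. Case \refrule{r-struct} is discharged by lemma (ii).

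The main obstacle is invariance of the dependency graph across reductions. In \refrule{r-read}, the premise carries $\gedge\Mailbox{\set\MailboxesC}$ from \refrule{t-msg} and $\gedge\Mailbox{\dom(\Context_2)}$ from \refrule{t-guard}, whereas the substituted continuation $\Process\subst\MailboxesC\Vars$ generally yields a strictly smaller graph $\dgraph'\subst\MailboxesC\Vars$. The paper anticipates this by permitting \refrule{t-sub} to weaken a graph to any entailing one; I would therefore verify that the ambient $\dgraph$ (which already satisfies $\dgraph \gimplies \dgraph_1 \gunion \dgraph_2$ in the parent derivation) continues to entail the new graph after substitution, using that $\grel\cdot$ is monotone under $\gunion$ and stable under renaming of restricted vertices. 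Reliability of $\Context$ is needed precisely here, to ensure that the residual combination $\Out\tmessage\Tag\TypesT \cmul \In(\tmessage\Tag\TypesS \tmul \PatternE) = \In\PatternE$ is defined and does not silently collapse via $\tzero$, which is the delicate interplay on which the whole argument hinges.
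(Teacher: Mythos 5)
Your plan is essentially the paper's own proof: induction on the derivation of $\ProcessP \red \ProcessQ$, a substitution lemma, preservation of typing under $\equiv$ via the algebraic properties of $\gunion$, inversion of the typing rules up to \refrule{t-sub}, and the pattern-residual/normal-form machinery in the \refrule{r-read} case, with \refrule{r-def}, \refrule{r-par} and \refrule{r-new} handled routinely. The outline is sound, but two steps in the \refrule{r-read} case are asserted rather than justified, and they are precisely where the non-obvious hypotheses do their work.

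First, applying the substitution lemma and then recombining the environments into one of the form $\Mailbox : \In\PatternE, \ContextB, \MailboxesC : \TypesS$ requires $\set\MailboxesC \cap \dom(\ContextB) = \emptyset$: the comma union of environments is only defined on disjoint domains, and Lemma~\ref{lem:subst} needs the substituted names to be fresh for the residual environment. This is not automatic. The paper derives it from \emph{acyclicity} of the ambient dependency graph: \refrule{t-msg} gives $\dgraph_1 \gimplies \gedge\Mailbox{\set\MailboxesC}$ and \refrule{t-guard} gives $\dgraph_2 \gimplies \gedge\Mailbox{\dom(\ContextB)}$, so any name in the intersection would produce a self-loop at $\Mailbox$ in $\dgraph$. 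Your discussion of dependency graphs covers only entailment invariance across the reduction and misses this second, load-bearing use of acyclicity. Second, the inequality $\TypesT \subt \TypesS$ is not ``available after covariant adjustment via \refrule{t-sub}'' --- to retype the message arguments at $\TypesS$ by subsumption you must already have that inequality in hand. It is the content of Proposition~\ref{prop:subt_der_subt}: from $\tmessage\Tag\TypesT \tmul \PatternF' \subp \PatternE$ with $\PatternF' \not\subp \tzero$ --- and this is exactly where the reliability of $\Context$ is consumed --- together with definedness of $\pder\PatternE{\tmessage\Tag\TypesS}$, one extracts both $\TypesT \subt \TypesS$ and $\PatternF' \subp \pder\PatternE{\tmessage\Tag\TypesS}$. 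You correctly sense that reliability is the hinge, but you attribute it to definedness of the type combination, which holds by the shape of the environment split alone; what reliability actually buys is the semantic argument on configurations that yields the argument subtyping and the residual inclusion.
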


Interestingly, Theorem~\ref{thm:sr} seems to imply that the types of
the mailboxes used by a process do not change.  In sharp contrast,
other popular behavioral typing disciplines (session types in
particular), are characterized by a subject reduction result in
which types reduce along with processes.
Theorem~\ref{thm:sr} also seems to contradict the observations made
earlier concerning the fact that the mailboxes used by a process may
have different types (Example~\ref{ex:lock_type}).
The type preservation guarantee assured by Theorem~\ref{thm:sr} can
be explained by recalling that the type environment $\Context$ in a
judgment $\wtp\Context\ProcessP\dgraph$ already takes into account
the overall balance between the messages stored into and consumed
from the mailbox used by $\ProcessP$ (see
Definition~\ref{def:cmul}). In light of this observation,
Theorem~\ref{thm:sr} simply asserts that well-typed processes are
steady state: they never produce more messages than those that are
consumed, nor do they ever try to consume more messages than those
that are produced.

A practically relevant consequence of Theorem~\ref{thm:sr} is that,
by looking at the type $\In\PatternE$ of the mailbox $\Mailbox$ used
by a guarded process $\Process$ (rule~\refrule{t-guard}), it is
possible to determine \emph{bounds} to the number of messages that
can be found in the mailbox as $\Process$ waits for a message to
receive.  In particular, if every configuration of $\PatternE$
contains at most $k$ atoms with tag $\Tag$, then at runtime
$\Mailbox$ contains at most $\Tag$-tagged messages. As a special
case, a mailbox of type $\In\tone$ is guaranteed to be empty and can
be statically deallocated. Note that the bounds may change after
$\Process$ receives a message. For example, a free lock is
guaranteed to have no $\TagRelease$ messages in its mailbox, and
will have at most one when it is busy (see
Example~\ref{ex:lock_derivation}).

The main result concerns the soundness of the type system,
guaranteeing that well-typed (closed) processes are both mailbox
conformant and deadlock free:

\begin{theorem}
  \label{thm:soundness}
  If $\wtp\EmptyContext\ProcessP\dgraph$, then $\ProcessP$ is
  mailbox conformant and deadlock free.
\end{theorem}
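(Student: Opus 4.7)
The plan is to derive both properties from subject reduction (Theorem~\ref{thm:sr}). For any residual $\ProcessQ$ with $\ProcessP \red^* \ProcessQ$, we have $\wtp\EmptyContext\ProcessQ\dgraph$ for some $\dgraph$. It then suffices to prove two safety statements about an arbitrary well-typed $\ProcessQ$ with empty outer environment: (i) $\ProcessQ \not\equiv \ProcessContext[\Fail\Mailbox]$ for any $\ProcessContext$ and $\Mailbox$, and (ii) if $\ProcessQ \nred$ then $\ProcessQ \equiv \Done$. Statement (i) applied to every reduct yields mailbox conformance, and (ii) combined with subject reduction yields deadlock freedom.

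For (i), I argue by inspecting the typing derivation. An unguarded occurrence $\Fail\Mailbox$ must be typed using \refrule{t-guard} on top of \refrule{t-fail}, forcing the judgment at that leaf to contain $\Mailbox : \In\tzero$. I then trace how this type can be modified by the rules one encounters going down the derivation toward the $\New\Mailbox$ binder that must bind $\Mailbox$ (since the root environment is empty). Only \refrule{t-par} (via $\cmul$) and \refrule{t-sub} (via environment subtyping) can transform the type. Subtyping cannot change $\In\tzero$ into $\In\tone$ since $\tone \not\subp \tzero$, because $\bag{} \in \sem\tone$ but $\sem\tzero = \emptyset$. A combination $\In\tzero \cmul \TypeS$ that yields $\In\tone$ would require, by inspection of $\cmul$ and the absorption law $\tzero \tmul \PatternE \eqp \tzero$, that $\TypeS = \Out\tzero$; but this type is excluded by the global assumption that all types in environments are usable. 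Hence the type of $\Mailbox$ remains unreliable all the way down and can never become $\In\tone$, contradicting the requirement of \refrule{t-new}.

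For (ii), I prove a progress statement: if $\wtp\EmptyContext\ProcessQ\dgraph$ and $\ProcessQ \not\equiv \Done$, then $\ProcessQ \red$. I first put $\ProcessQ$ in a canonical form $\New{\seqof\Mailbox}(\ProcessQ_1 \parop \cdots \parop \ProcessQ_n)$ with all restrictions outermost and every $\ProcessQ_i$ either a stored message, a process invocation, or a guarded process. Invocations always reduce via \refrule{r-def}, so I may assume none occur. The dependency graph $\dgraph$ is acyclic by the well-formedness side condition on typing judgments, so I can pick a mailbox $\Mailbox \in \seqof\Mailbox$ that is minimal with respect to $\grel\dgraph$ among the restricted mailboxes. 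The aggregate type of $\Mailbox$ at the $\New\Mailbox$ binder is $\In\tone$ by \refrule{t-new}, meaning that the contributions from the components using $\Mailbox$ exactly balance through $\cmul$. Combined with the normal-form condition $\df{}\Pattern$ required by \refrule{t-guard}, this forces the pattern guarding $\Mailbox$ to faithfully describe the messages actually present in $\Mailbox$: either a guard offers a \Tag-input matching a message present in $\Mailbox$ (firing \refrule{r-read}), or $\Mailbox$ is empty and the guard offers a $\Free\Mailbox$ branch (firing \refrule{r-free}). The minimality of $\Mailbox$ is what rules out the case in which the guard blocking on $\Mailbox$ depends on some other mailbox that is itself blocked.

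The main obstacle is the progress step, specifically showing that the guard waiting on a minimal mailbox is guaranteed to find a suitable reduction opportunity. This requires a careful interplay between three ingredients: the equation $\In\tone$ at the binder (which gives semantic balance of sends and receives through $\cmul$), the normal-form condition on guard patterns (which ensures that every branch of the pattern corresponds to a matchable message configuration), and the acyclicity of the dependency graph (which excludes circular waits between distinct minimal candidates). Additional care is needed because the per-component dependency graphs $\dgraph_i$ may be strictly weaker, via \refrule{t-sub}, than the actual blocking structure; this is addressed by choosing the minimal mailbox at the level of the aggregated graph $\dgraph$ rather than componentwise.
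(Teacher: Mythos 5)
Your overall architecture is exactly the paper's: subject reduction plus a lemma saying that a well-typed process has no unguarded $\Fail{}$ (the paper's Lemma~\ref{lem:fail}, phrased there via unreliability of the whole environment rather than of the single mailbox), plus a lemma saying that a stuck well-typed closed process is $\Done$ (the paper's Lemma~\ref{lem:termination}). Your part (i) is sound and is essentially the paper's argument.

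The gap is in part (ii), at the step ``pick a mailbox minimal with respect to $\grel\dgraph$''. The relation $\grel\dgraph$ is symmetric: edges $\gedge\NameU\NameV$ are unordered pairs, so \refrule{g-axiom} yields transitions in both directions, and consequently every non-isolated vertex fails to be minimal -- the selection is ill-defined. The information you actually need is an orientation of each edge \emph{occurrence} recording which endpoint is the blocker: a guard on $\MailboxB$ whose continuation mentions $\Mailbox$ and a message stored in $\MailboxB$ carrying $\Mailbox$ as argument contribute the same undirected edge, but only the former witnesses that progress on $\Mailbox$ waits on $\MailboxB$. The paper's proof of Lemma~\ref{lem:termination} handles this by arguing by contradiction and producing, for each blocked mailbox $\Mailbox_m$, some $\Mailbox_n$ together with the fact that $\dgraph$ contains an edge in which $\Mailbox_n$ and $\Mailbox_m$ occur syntactically \emph{in that order}; finiteness then forces a genuine closed path, i.e.\ reflexivity of $\grel\dgraph$, contradicting acyclicity. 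Equivalently you could select a leaf of the (acyclic, hence forest-shaped) multigraph restricted to the blocked mailboxes. Either fix is available, but as written the minimality step does not go through. Relatedly, your claim that the balance $\In\tone$ at the binder plus the normal-form condition ``forces the pattern to faithfully describe the messages actually present'' elides the case in which a message promised by the combined output type is itself guarded under an input on another mailbox -- precisely the case where the paper invokes Lemmas~\ref{lem:output_occurrence} and~\ref{lem:input_occurrence} to manufacture the dependency edge. You acknowledge this case, but discharging it requires exactly the orientation information that your appeal to minimality fails to supply.
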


Fair termination and junk freedom are not guaranteed by our typing
discipline in general. The usual counterexamples include processes
that postpone indefinitely the use of a mailbox with a relevant
type. For instance, the $\Tag$ message in the well-typed process
$\New\Mailbox\parens{ \Send\Mailbox\Tag{} \parop
  \invoke\RecVar\Mailbox }$ where
$\define\RecVar\Var{\invoke\RecVar\Var}$ is never consumed because
$\Mailbox$ is never used for an input operation.

Nevertheless, fair termination is guaranteed for the class of
finitely unfolding processes:

\begin{theorem}
  \label{thm:fair_termination}
  We say that $\ProcessP$ is \emph{finitely unfolding} if all
  maximal reductions of $\ProcessP$ use \refrule{r-def} finitely
  many times.  If $\wtp\EmptyContext\ProcessP\dgraph$ and
  $\ProcessP$ is finitely unfolding, then $\ProcessP$ is fairly
  terminating.
\end{theorem}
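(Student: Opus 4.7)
The plan is to combine the subject reduction and soundness results already proved with a termination argument that exploits the finitely unfolding hypothesis. Since fair termination only requires the existence of a reduction to $\Done$ from every reachable state, and since Theorem~\ref{thm:soundness} tells us that every irreducible reduct of a well-typed closed process is structurally equivalent to $\Done$, it suffices to exhibit, for every $\ProcessQ$ with $\ProcessP \red^* \ProcessQ$, a finite reduction from $\ProcessQ$ reaching an irreducible state.

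First, I would verify that being finitely unfolding is inherited by reducts. If $\ProcessP \red \ProcessQ$ and a maximal reduction starts from $\ProcessQ$, then prepending the step $\ProcessP \red \ProcessQ$ yields a maximal reduction of $\ProcessP$ whose number of \refrule{r-def} applications exceeds that of the original by at most one. A routine induction on the length of $\ProcessP \red^* \ProcessQ$ then shows that every reduct of $\ProcessP$ is itself finitely unfolding. Subject reduction (Theorem~\ref{thm:sr}) moreover guarantees that each reduct is well typed in the empty environment.

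Second, I would show that a finitely unfolding process is strongly normalizing. Define a syntactic measure $\size(\Process)$ counting the constructors of $\Process$ and invariant under structural congruence. Inspecting \refrule{r-read} and \refrule{r-free}, the right-hand side is obtained from the left-hand side by removing a message or a deletion guard and selecting a single branch of the compound guard; since substitution of mailbox names does not affect $\size$, both rules strictly decrease it. Only \refrule{r-def} can increase $\size$. Consequently any infinite reduction must fire \refrule{r-def} infinitely often, contradicting the finitely unfolding hypothesis. Every maximal reduction from $\ProcessQ$ therefore terminates in some irreducible $\ProcessR$; combining Theorem~\ref{thm:soundness} with subject reduction yields $\ProcessR \equiv \Done$, whence $\ProcessQ \red^* \Done$.

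The main obstacle I anticipate is the rigorous treatment of the size measure in the presence of scope extrusion and of guard composition up to the unit laws $\Fail{\Mailbox} \sumop \Guard \equiv \Guard$ and $\Done \parop \Process \equiv \Process$. In particular, failure guards and terminated processes must not contribute to $\size$ in order for the measure to be invariant under $\equiv$; once this is properly arranged, verifying strict monotonicity on \refrule{r-read} and \refrule{r-free} reduces to a straightforward case analysis. A subsidiary subtlety is that the measure must also be insensitive to $\alpha$-equivalence of bound mailbox names, but this is standard.
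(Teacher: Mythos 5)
Your proof is correct and follows essentially the same route as the paper: a syntactic size measure that is invariant under structural congruence and strictly decreases under every reduction rule except \refrule{r-def}, combined with subject reduction and deadlock freedom to conclude that the irreducible residual is $\Done$. The only (immaterial) difference is in the factoring: the paper first advances to a residual from which \refrule{r-def} is never used again and then applies strong normalization to that residual, whereas you argue directly that every reduct is strongly normalizing because an infinite reduction would have to fire \refrule{r-def} infinitely often.
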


The class of finitely unfolding processes obviously includes all
finite processes (those not using process invocations) but also many
recursive processes. For example, every process of the form
$\New\Mailbox\parens{ \Send\Mailbox\Tag{} \parop \cdots \parop
  \Send\Mailbox\Tag{} \parop \invoke\RecVar\Mailbox }$ where
$\RecVar(\Var) \triangleq \Receive\Var\Tag{}\invoke\RecVar\Var
\sumop \Free\Var\Done$ is closed, well typed and finitely unfolding
regardless of the number of $\Tag$ messages stored in $\Mailbox$,
hence is fairly terminating and junk free by
Theorem~\ref{thm:fair_termination}.



\newcommand{\VarPool}{\Var[pool]}

\section{Examples}
\label{sec:examples}

In this section we discuss a few more examples that illustrate the
expressiveness of the mailbox calculus and of its type system. We
consider a variant of the bank account shown in
Listing~\ref{lst:account} (Section~\ref{sec:example_account_await}),
the case of master-workers parallelism
(Section~\ref{sec:example_master_workers}) and the encoding of
binary sessions extended with forks and joins
(Sections~\ref{sec:example_binary_sessions}
and~\ref{sec:example_forks_joins}).

\newcommand{\VarFutureLong}{\Var[future]}

\InputScala[label=lst:account_await,caption=An Akka actor using futures from the \texttt{Savina} benchmark suite~\cite{ImamSarkar14}.]{AccountAkkaAwait.scala}

\subsection{Actors using futures}
\label{sec:example_account_await}

Many Scala programs combine actors with
futures~\cite{TasharofiDingesJohnson13}.
As an example, Listing~\ref{lst:account_await} shows an alternative
version of the \SI{Account} actor in Akka that differes from
Listing~\ref{lst:account} in the handling of \SI{CreditMessages}
(lines~\ref{account.future.begin}--\ref{account.future.end}). The
\SI{future} variable created here is initialized asynchronously with
the result of the debit operation invoked on \SI{recipient}. To make
sure that each transaction is atomic, the actor waits for the
variable to be resolved (line~\ref{account.future.end}) before
notifying \SI{sender} that the operation has been completed.

This version of \SI{Account} is arguably simpler than the one in
Listing~\ref{lst:account}, if only because the actor has a unique
top-level behavior. One way of modeling this implementation of
\SI{Account} in the mailbox calculus is to use $\DefFuture$,
discussed in Example~\ref{ex:future}.
A simpler modeling stems from the observation that \SI{future} in
Listing~\ref{lst:account_await} is used for a one-shot
synchronization.  A future variable with this property is akin to a
mailbox from which the value of the resolved variable is retrieved
exactly once. Following this approach we obtain the process below:
\[
  \begin{array}{@{}rcl@{}}
    \DefAccount(\VarSelf, \VarBalance) & \triangleq &
    \Receive\VarSelf\TagDebit{\VarAmount,\VarSender}
    \\ & &
    \Send\VarSender\TagReply{}
    \parop
    \invoke\DefAccount{\VarSelf, \VarBalance + \VarAmount}
    \\
    & \sumop &
    \Receive\VarSelf\TagCredit{\VarAmount,\VarRecipient,\VarSender}
    \\ & &
    \New\VarFutureLong\bigparens{
      \begin{lines}
        \Send\VarRecipient\TagDebit{\VarAmount,\VarFutureLong}
        \parop {}
        \\
        \Receive\VarFutureLong\TagReply{}
        \Free\VarFutureLong
        \\
        \parens{
          \Send\VarSender\TagReply{}
          \parop
          \invoke\DefAccount{\VarSelf, \VarBalance + \VarAmount}
        }
      \end{lines}
    }
    \\
    & \sumop &
    \Receive\VarSelf\TagStop{}
    \Free\VarSelf\Done
    \\
    & \sumop &
    \Receive\VarSelf\TagReply{}
    \Fail\VarSelf
  \end{array}
\]

Compared to the process in Example~\ref{ex:account}, here the
notification from the $\VarRecipient$ account is received from the
mailbox $\VarFutureLong$, which is created locally during the
handling of the $\TagCredit$ message. The rest of the process is the
same as before. This definition of $\DefAccount$ and the one in
Example~\ref{ex:account} can both be shown to be consistent with the
declaration
\[
  \DefAccount : (\VarSelf :
  \In\parens{
    \tmessage\TagDebit{\tint,\TypeR}\tstar
    \tmul
    \tmessage\TagCredit{\tint,\Out\tmessage\TagDebit{\tint,\TypeR},\TypeR}\tstar
    \tsum
    \TagStop}, \VarBalance : \tint; \gempty)
\]
where $\TypeR \eqdef \Out\TagReply$.  In particular, the
dependencies between $\VarSelf$ and $\VarFutureLong$ that originate
in this version of $\DefAccount$ are not observable from outside
$\DefAccount$ itself.

The use of multiple mailboxes and the interleaving of blocking
operations on them may increase the likelyhood of programming
mistakes causing mismatched communications and/or
deadlocks. However, these errors can be detected by a suitable
typing discipline such the one proposed in this paper.
Types can also be used to mitigate the runtime overhead resulting
from the use of multiple mailboxes. Here, for example, the typing of
$\VarFutureLong$ guarantees that this mailbox is used for receiving
a \emph{single} message and that $\VarFutureLong$ is empty by the
time $\Free[]\VarFutureLong$ is performed. A clever compiler can
take advantage of this information to statically optimize both the
allocation and the deallocation of this mailbox.



\subsection{Master-workers parallelism}
\label{sec:example_master_workers}

\newcommand{\DefAvailable}{\RecVar[Available]}
\newcommand{\DefWorker}{\RecVar[Worker]}
\newcommand{\DefCreatePool}{\RecVar[CreatePool]}
\newcommand{\DefCollectResults}{\RecVar[CollectResults]}
\newcommand{\VarWorker}{\Var[worker]}
\newcommand{\TagTask}{\Tag[task]}
\newcommand{\TagWork}{\Tag[work]}
\newcommand{\TagOne}{\Tag[one]}

In this example we model a \emph{master} process that receives tasks
to perform from a \emph{client}. For each task, the master creates a
pool of \emph{workers} and assigns each worker a share of work.  The
master waits for all partial results from the workers before sending
the final result back to the client and making itself available
again.
The number of workers may depend on some quantity possibly related
to the task to be performed and that is known at runtime only.

Below we define three processes corresponding to the three states in
which the master process can be, and we leave $\DefWorker$
unspecified:
\[
  \begin{array}{@{}r@{~}c@{~}l@{}}
    \DefAvailable(\VarSelf) & \triangleq &
    \Receive\VarSelf\TagTask\VarClient
    \New\VarPool\invoke\DefCreatePool{\VarSelf,\VarPool,\VarClient}
    \\
    & \sumop & \Free\VarSelf\Done
    \\
    \DefCreatePool(\VarSelf,\VarPool,\VarClient) & \triangleq &
    \IfX{~\textit{more workers needed}~}{
      \\ & & \quad
      \New\VarWorker(
      \Send\VarWorker\TagWork\VarPool
      \parop
      \invoke\DefWorker\VarWorker)
      \parop
      {}
      \\ & & \quad
      \invoke\DefCreatePool{\VarSelf,\VarPool,\VarClient}
      \\ & &
    }{
      \\ & & \quad
      \invoke\DefCollectResults{\VarSelf, \VarPool, \VarClient}
    }
    \\
    \DefCollectResults(\VarSelf,\VarPool,\VarClient) & \triangleq &
    \Receive\VarPool\TagResult{}
    \invoke\DefCollectResults{\VarSelf,\VarPool,\VarClient}
    \\
    & \sumop &
    \Free\VarPool\parens{
      \Send\VarClient\TagResult{}
      \parop
      \invoke\DefAvailable\VarSelf
    }
  \end{array}
\]

The ``$\If{\mathit{condition}}\ProcessP\ProcessQ$'' form used here
can be encoded in the mailbox calculus and is typed similarly to the
non-deterministic choice of Example~\ref{ex:choice}.
These definitions can be shown to be consistent with the following
declarations:
\[
  \begin{array}{@{}r@{~}c@{~}l@{}}
    \DefAvailable & : & (\VarSelf : \In\tmessage\TagTask{\Out\TagResult}\tstar; \gempty)
    \\
    \DefCreatePool, \DefCollectResults & : & (\VarSelf : \In\tmessage\TagTask{\Out\TagResult}\tstar,
    \VarPool : \In\TagResult\tstar, \VarClient : \Out\TagResult;
    \\
    & & \phantom{(}
    \gedge\VarPool\VarSelf \gunion \gedge\VarPool\VarClient)
  \end{array}
\]

The usual implementation of this coordination pattern requires the
programmer to keep track of the number of active workers using a
counter that is decremented each time a partial result is
collected~\cite{ImamSarkar14}. When the counter reaches zero, the
master knows that all the workers have finished their job and
notifies the client. In the mailbox calculus, we achieve the same
goal by means of a dedicated mailbox $\VarPool$ from which the
partial results are collected: when $\VarPool$ becomes disposable,
it means that no more active workers remain.\Luca{Questo esempio
  potrebbe essere criticato in quanto l'azione $\Free[]{}$ sembra
  implicare l'uso di un complesso garbage collector distribuito. Il
  garbage collector pu\`o essere realizzato in maniera efficiente,
  ma \`e difficile da spiegare e dunque poco convincente. Questo \`e
  anche l'unico esempio in cui $\Free[]{}$ \`e seguita da un'azione
  interessante e non da $\Done$. Usando una sola mailbox si ricade
  nel problema del type pollution.}


\subsection{Encoding of binary sessions}
\label{sec:example_binary_sessions}

\newcommand{\SessionProcess}[1]{\RecVar[Session]_{#1}}
\newcommand{\CreateSessionProcess}[1]{\RecVar[Create]_{#1}}
\newcommand{\EncodeSessionType}[1]{\mathcal{E}(#1)}
\newcommand{\TagOp}{\Tag[op]}

Session types~\cite{Honda93,HuttelEtAl16} have become a popular
formalism for the specification and enforcement of structured
protocols through static analysis.  A session is a private
communication channel shared by processes that interact through one
of its \emph{endpoint}. Each endpoint is associated with a
\emph{session type} that specifies the type, direction and order of
messages that are supposed to be exchanged through that endpoint. A
typical syntax for session types in the case of \emph{binary
  sessions} (those connecting exactly two peer processes) is shown
below:
\[
  \SessionTypeT, \SessionTypeS
  ~~::=~~
  \End
  ~~\mid~~
  \In\tmessage{}\Type.\SessionType
  ~~\mid~~
  \Out\tmessage{}\Type.\SessionType
  ~~\mid~~
  \SessionTypeT \SBranch \SessionTypeS
  ~~\mid~~
  \SessionTypeT \SChoice \SessionTypeS
\]

A session type $\In\tmessage{}\Type.\SessionType$ describes an
endpoint used for receiving a message of type $\Type$ and then
according to $\SessionType$.  Dually, a session type
$\Out\tmessage{}\Type.\SessionType$ describes an endpoint used for
sending a message of type $\Type$ and then according to
$\SessionType$.
An external choice $\SessionTypeT \SBranch \SessionTypeS$ describes
an endpoint used for receiving a selection (either $\TagLeft$ or
$\TagRight$) and then according to the corresponding continuation
(either $\SessionTypeT$ or $\SessionTypeS$). Dually, an internal
choice $\SessionTypeT \SChoice \SessionTypeS$ describes an endpoint
used for making a selection and then according to the corresponding
continuation.
Communication safety and progress of a binary session are guaranteed
by the fact that its two endpoints are linear resources typed by
\emph{dual} session types, where the dual of $\SessionType$ is
obtained by swapping inputs with outputs and internal with external
choices.

In this example we encode sessions and session types using mailboxes
and mailbox types.  We encode a session as a non-uniform, concurrent
object. The object is ``concurrent'' because it is accessed
concurrently by the two peers of the session. It is ``non-uniform''
because its interface changes over time, as the session
progresses. The object uses a mailbox $\VarSelf$ and its behavior is
defined by the equations for $\SessionProcess\SessionType(\VarSelf)$
shown below, where $\SessionType$ is the session type according to
which it must be used by one of the peers:
\[
  \begin{array}{@{}r@{~}c@{~}l@{}}
    \SessionProcess\End(\VarSelf) & \triangleq & \Free\VarSelf\Done
    \\
    \SessionProcess{
      \In\tmessage{}\Type.\SessionType
    }(\VarSelf) & \triangleq &
    \Receive\VarSelf\TagSend{x,s}
    \Receive\VarSelf\TagReceive{r}
    \\ & &
    \parens{
      \Send{s}\TagReply{x,\VarSelf}
      \parop
      \Send{r}\TagReply\VarSelf
      \parop
      \invoke{\SessionProcess\SessionType}\VarSelf
    }
    \\
    \SessionProcess{\Out\tmessage{}\Type.\SessionType}(\VarSelf) & \triangleq &
    \invoke{\SessionProcess{\In\tmessage{}\Type.\SessionType}}\VarSelf
    \\
    \SessionProcess{\SessionTypeT \SBranch \SessionTypeS}(\VarSelf) & \triangleq &
    \Receive\VarSelf\TagLeft{s}
    \Receive\VarSelf\TagReceive{r}
    \\ & &
    \parens{
      \Send{s}\TagReply\VarSelf
      \parop
      \Send{r}\TagLeft\VarSelf
      \parop
      \invoke{\SessionProcess\SessionTypeT}\VarSelf
    }
    \\ & \sumop &
    \Receive\VarSelf\TagRight{s}
    \Receive\VarSelf\TagReceive{r}
    \\ & &
    \parens{
      \Send{s}\TagReply\VarSelf
      \parop
      \Send{r}\TagRight\VarSelf
      \parop
      \invoke{\SessionProcess\SessionTypeS}\VarSelf
    }
    \\
    \SessionProcess{\SessionTypeT \SChoice \SessionTypeS}(\VarSelf) & \triangleq &
    \invoke{\SessionProcess{\SessionTypeT \SBranch \SessionTypeS}}\VarSelf
  \end{array}
\]

To grasp the intuition behind the definition of
$\SessionProcess\SessionType(\VarSelf)$, it helps to recall that
each stage of a session corresponds to an interaction between the
two peers, where one process plays the role of ``sender'' and its
peer that of ``receiver''. Both peers manifest their willingness to
interact by storing a message into the session's mailbox. The
receiver always stores a $\TagReceive$ message, while the sender
stores either $\TagSend$, $\TagLeft$ or $\TagRight$ according to
$\SessionType$. All messages contain a reference to the mailbox
owned by sender and receiver (respectively $s$ and $r$) where they
will be notified once the interaction is completed. A $\TagSend$
message also carries actual payload $\Var$ being exchanged. The role
of $\SessionProcess\SessionType(\VarSelf)$ is simply to forward each
message from the sender to the receiver. The notifications stored in
$s$ and $r$ contain a reference to the session's mailbox so that its
type reflects the session's updated interface corresponding to the
rest of the conversation.

Interestingly, the encoding of a session with type $\SessionType$ is
undistinguishable from that of a session with the dual type
$\co\SessionType$. This is natural by recalling that each stage of a
session corresponds to a single interaction between the two peers:
the order in which they store the respective messages in the
session's mailbox is in general unpredictable but also unimportant,
for both messages are necessary to complete each interaction.

\newcommand{\DefAlice}{\RecVar[Alice]}
\newcommand{\DefCarol}{\RecVar[Carol]}

As an example, suppose we want to model a system where $\DefAlice$
asks $\DefCarol$ to compute the sum of two numbers exchanged through
a session $s$. $\DefAlice$ and $\DefCarol$ use the session according
to the session types
$\SessionType \eqdef
\Out\tmessage{}\tint.\Out\tmessage{}\tint.\In\tmessage{}\tint.\End$
and
$\co\SessionType \eqdef
\In\tmessage{}\tint.\In\tmessage{}\tint.\Out\tmessage{}\tint.\End$,
respectively.
The system is modeled as the process
\begin{equation}
  \label{eq:session}
  \New\VarAlice
  \New\VarCarol
  \New{s}\parens{
    \invoke\DefAlice{\VarAlice,s}
    \parop
    \invoke\DefCarol{\VarCarol,s}
    \parop
    \invoke{\SessionProcess\SessionType}{s}
  }
\end{equation}
where $\DefAlice$ and $\DefCarol$ are defined as follows:
\[
  \begin{array}{@{}r@{~}c@{~}l@{}}
    \DefAlice(\VarSelf,s) & \triangleq &
    \phantom(
    \Send{s}\TagSend{4,\VarSelf}
    \parop
    \Receive\VarSelf\TagReply{s}
    \\ & &
    \parens{
      \Send{s}\TagSend{2,\VarSelf}
      \parop
      \Receive\VarSelf\TagReply{s}
      \\ & &
      \parens{
        \Send{s}\TagReceive\VarSelf
        \parop
        \Receive\VarSelf\TagReply{\Var,s}
        \\ & &
        \parens{
          \Send\VarSystem\TagPrintInt\Var
          \parop
          \Free\VarSelf
          \Done
        }
      }
    }
    \\
    \DefCarol(\VarSelf,s) & \triangleq &
    \phantom(
    \Send{s}\TagReceive\VarSelf
    \parop
    \Receive\VarSelf\TagReply{\VarX,s}
    \\ & &
    \parens{
      \Send{s}\TagReceive\VarSelf
      \parop
      \Receive\VarSelf\TagReply{\VarY,s}
      \\ & &
      \parens{
        \Send{s}\TagSend{\VarX + \VarY, \VarSelf}
        \parop
        \Receive\VarSelf\TagReply{s}
        \Free\VarSelf
        \Done
      }
    }
  \end{array}
\]

The process~\eqref{eq:session} and the definitions of $\DefAlice$
and $\DefCarol$ are well typed.  In general,
$\SessionProcess\SessionType$ is consistent with the declaration
$\SessionProcess\SessionType : (\VarSelf :
\In(\EncodeSessionType\SessionType \tmul
\EncodeSessionType{\co\SessionType}); \gempty)$ where
$\EncodeSessionType\SessionType$ is the pattern defined by the
following equations:
\[
  \begin{array}[b]{@{}r@{~}c@{~}l@{}}
    \EncodeSessionType\End & \eqdef & \tone
    \\
    \EncodeSessionType{\In\tmessage{}\Type.\SessionType} & \eqdef &
    \tmessage\TagReceive{\Out\tmessage\TagReply{\Type,\Out\EncodeSessionType\SessionType}}
    \\
    \EncodeSessionType{\Out\tmessage{}\Type.\SessionType} & \eqdef &
    \tmessage\TagSend{\Type,\Out\tmessage\TagReply{\Out\EncodeSessionType\SessionType}}
    \\
    \EncodeSessionType{\SessionTypeT \SBranch \SessionTypeS} & \eqdef &
    \tmessage\TagReceive{\Out\parens{
        \tmessage\TagLeft{\Out\EncodeSessionType\SessionTypeT}
        \tsum
        \tmessage\TagRight{\Out\EncodeSessionType\SessionTypeS}
      }
    }
    \\
    \EncodeSessionType{\SessionTypeT \SChoice \SessionTypeS} & \eqdef &
    \tmessage\TagLeft{\Out\tmessage\TagReply{\Out\EncodeSessionType\SessionTypeT}}
    \tsum
    \tmessage\TagRight{\Out\tmessage\TagReply{\Out\EncodeSessionType\SessionTypeS}}
  \end{array}
\]


This encoding of binary sessions extends easily to internal and
external choices with arbitrary labels and also to recursive session
types by interpreting both the syntax of $\SessionType$ and the
definition of $\SessionProcess\SessionType$ coinductively. The usual
regularity condition ensures that $\SessionProcess\SessionType$ is
finitely representable. Finally, note that the notion of subtyping
for encoded session types induced by Definition~\ref{def:subt}
coincides with the conventional one~\cite{GayHole05}. Thus, the
mailbox type system subsumes a rich session type system where
Theorem~\ref{thm:soundness} corresponds to the well-known
communication safety and progress properties of sessions.


\subsection{Encoding of sessions with forks and joins}
\label{sec:example_forks_joins}

\newcommand{\TagMain}{\Tag[Then]}
\newcommand{\CollectProcess}[1]{\RecVar[Join]_{#1}}

We have seen that it is possible to share the output capability on a
mailbox among several processes. We can take advantage of this
feature to extend session types with forks and joins:
\[
  \SessionTypeT, \SessionTypeS
  ~~::=~~
  \End
  ~~\mid~~
  \In\tmessage{}\Type.\SessionType
  ~~\mid~~
  \Out\tmessage{}\Type.\SessionType
  ~~\mid~~
  \SessionTypeT \SBranch \SessionTypeS
  ~~\mid~~
  \SessionTypeT \SChoice \SessionTypeS
  ~~\mid~~
  \SJoin_{i\in I} \tmessage{\Tag_i}{\Type_i};\SessionType
  ~~\mid~~
  \SFork_{i\in I} \tmessage{\Tag_i}{\Type_i};\SessionType
\]

The idea is that the session type
$\SFork_{1\leq i\leq n} \tmessage{\Tag_i}{\Type_i};\SessionType$
describes an endpoint that can be used for sending \emph{all} of the
$\Tag_i$ messages, and then according to $\SessionType$. The
difference between
$\SFork_{1\leq i\leq n} \tmessage{\Tag_i}{\Type_i};\SessionType$ and
a session type of the form
$\Out\tmessage{}{\Type_1}\dots\Out\tmessage{}{\Type_n}.\SessionType$
is that the $\Tag_i$ messages can be sent by independent processes
(for example, by parallel workers) in whatever order instead of by a
single sender.
Dually, the session type
$\SJoin_{1\leq i\leq n} \tmessage{\Tag_i}{\Type_i};\SessionType$
describes an endpoint that can be used for collecting \emph{all} of
the $\Tag_i$ messages, and then according to $\SessionType$.
Forks and joins are dual to each other, just like simple outputs are
dual to simple inputs. The tags $\Tag_i$ need not be distinct, but
equal tags must correspond to equal argument types.

The extension of $\SessionProcess\SessionType$ to forks and joins is
shown below:
\[
  \begin{array}{@{}r@{~}c@{~}l@{}}
    \SessionProcess{
      \SFork_{i\in I} \tmessage{\Tag_i}{\Type_i};\SessionType
    }(\VarSelf) & \triangleq &
    \Receive\VarSelf\TagSend{s}
    \Receive\VarSelf\TagReceive{r}
    \invoke{
      \CollectProcess{
        \SFork_{i\in I} \tmessage{\Tag_i}{\Type_i};\SessionType
      }
    }{\VarSelf,s,r}
    \\
    \SessionProcess{
      \SJoin_{i\in I} \tmessage{\Tag_i}{\Type_i};\SessionType
    }(\VarSelf) & \triangleq &
    \invoke{
      \SessionProcess{
        \SFork_{i\in I} \tmessage{\Tag_i}{\Type_i};\SessionType
      }
    }{\VarSelf}
    \\
    \CollectProcess{
      \SFork_{i\in I} \tmessage{\Tag_i}{\Type_i};\SessionType
    }(\VarSelf,s,r) & \triangleq &
    \begin{cases}
      \Send{s}\TagReply\VarSelf
      \parop
      \Send{r}\TagReply\VarSelf
      \parop
      \invoke{\SessionProcess\SessionType}\VarSelf
      & \text{if $I = \emptyset$}
      \\
      \Receive\VarSelf{\Tag_i}{\Var_i}\parens{
        \Send{r}{\Tag_i}{\Var_i}
        \parop
        \invoke{
          \CollectProcess{
            \SFork_{i\in I\setminus\set{i}} \tmessage{\Tag_i}{\Type_i};\SessionType
          }
        }{\VarSelf,s,r}
      }
      & \text{if $i \in I$}
    \end{cases}
    \\
    \SessionProcess{
      \SJoin_{i\in I} \tmessage{\Tag_i}{\Type_i};\SessionType
    }(\VarSelf) & \triangleq &
    \invoke{
      \SessionProcess{
        \SFork_{i\in I} \tmessage{\Tag_i}{\Type_i};\SessionType
      }
    }\VarSelf
  \end{array}
\]

As in the case of simple interactions, sender and receiver manifest
their willingness to interact by storing $\TagSend$ and
$\TagReceive$ messages into the session's mailbox $\VarSelf$. At
that point, $\invoke{\CollectProcess\SessionType}{\VarSelf,s,r}$
forwards all the $\Tag_i$ messages coming from the sender side to
the receiver side, in some arbitrary order (case $i \in I$).
When there are no more messages to forward (case $I = \emptyset$)
both sender and receiver are notified with a $\TagReply$ message
that carries a reference to the session's endpoint, with its type
updated according to the rest of the continuation.

The encoding of session types extended to forks and joins follows
easily:
\[
  \begin{array}{@{}r@{~}c@{~}l@{}}
    \EncodeSessionType{\SFork_{i\in I} \tmessage{\Tag_i}{\Type_i};\SessionType}
    & \eqdef &
    \tmessage\TagSend{\Out\tmessage\TagReply{\Out\EncodeSessionType\SessionType}}
    \tmul
    \tMul_{i\in I}
    \tmessage{\Tag_i}{\Type_i}
    \\
    \EncodeSessionType{\SJoin_{i\in I} \tmessage{\Tag_i}{\Type_i};\SessionType}
    & \eqdef &
    \tmessage\TagReceive{
      \Out
      \parens{
        \tMul_{i\in I}
        \tmessage{\Tag_i}{\Type_i}
      }
      \tmul
      \tmessage\TagReply{\Out\EncodeSessionType\SessionType}
    }
  \end{array}
\]

An alternative definition of $\CollectProcess\SessionType$ that
fowards messages as soon as they become available can be obtained by
providing suitable input actions for each $i\in I$ instead of
picking an arbitrary $i\in I$.




\section{Related Work}
\label{sec:related}

\subparagraph*{Concurrent Objects.}
There are analogies between actors and concurrent objects. Both
entities are equipped with a unique identifier through which they
receive messages, they may interact with several concurrent clients
and their behavior may vary over time, as the entity interacts with
its clients. Therefore, static analysis techniques developed for
concurrent objects may be applicable to actors (and vice versa).
Relevant works exploring behavioral type systems for concurrent
objects include those of Najim \etal~\cite{NajmNimourStefani99},
Ravara and Vasconcelos~\cite{RavaraVasconcelos00}, and Puntigam
\etal~\cite{Puntigam01,PuntigamPeter01}.
As in the pure actor model, each object has a unique mailbox and the
input capability on that mailbox cannot be transferred. The mailbox
calculus does not have these constraints. A notable variation is the
model studied by Ravara and Vasconcelos~\cite{RavaraVasconcelos00},
which accounts for \emph{distributed objects}: there can be several
copies of an object that react to messages targeted to the same
mailbox.
Another common trait of these works is that the type discipline
focuses on sequences of method invocations and types contain
(abstract) information on the internal state of objects and on state
transitions. Indeed, types are either finite-state
automata~\cite{NajmNimourStefani99}, or terms of a process
algebra~\cite{RavaraVasconcelos00} or tokens annotated with state
transitions~\cite{PuntigamPeter01}.
In contrast, mailbox types focus on the content of a mailbox and
sequencing is expressed in the type of explicit continuations.
The properties enforced by the type systems in these works differ
significantly. Some do not consider deadlock
freedom~\cite{RavaraVasconcelos00,Puntigam01}, others do not account
for out-of-order message processing~\cite{Puntigam01}.
Details on the enforced properties also vary. For example, the
notion of protocol conformance used by Ravara and
Vasconcelos~\cite{RavaraVasconcelos00} is such that any message sent
to an object that is unable to handle that message, but can do so in
some future state is accepted. In our setting, this would mean
allowing to send a $\TagRelease$ message to a free lock if the lock
is acquired later on, or allowing to send a $\TagReply$ message to
an account if the account will later be involved in a transaction.

The most closely related work among those addressing concurrent
objects is the one by Crafa and Padovani~\cite{CrafaPadovani17}, who
propose the use of the Objective Join Calculus as a model for
non-uniform, concurrent objects and develop a type discipline that
can be used for enforcing concurrent object protocols.  Mailbox
types have been directly inspired by their types of concurrent
objects.
There are two main differences between the work of Crafa and
Padovani~\cite{CrafaPadovani17} and our own.
First, in the Objective Join Calculus every object is associated
with a single mailbox, just like in the pure actor
model~\cite{HewittEtAl73,Agha86}, meaning that mailboxes are not
first class. As a consequence, the types considered by Crafa and
Padovani~\cite{CrafaPadovani17} all have an (implicit) output
capability.
Second, in the Objective Join Calculus input operations are defined
atomically on molecules of messages, whereas in the mailbox calculus
messages are received one at a time. As a consequence, the type of a
mailbox in the work of Crafa and Padovani~\cite{CrafaPadovani17} is
invariant, whereas the same mailbox may have different types at
different times in the mailbox calculus
(Example~\ref{ex:lock_type}). Remarkably, this substantial
difference has no impact on the structure of the type language that
we consider.

\subparagraph*{Static analysis of actors.}
Srinivasan and Mycroft~\cite{SrinivasanMycroft08} define a type
discipline for controlling the ownership of messages and ensuring
actor isolation, but consider only uniformly typed mailboxes and do
not address mailbox conformance or deadlock freedom.

Christakis and Sagonas~\cite{ChristakisSagonas11} describe a static
analysis technique whose aim is to ensure matching between send and
receive operations in actors.  The technique, which is described
only informally and does not account for deadlocks, has been
implemented in a tool called \textsf{dialyzer} and used for the
analysis of Erlang programs.

Crafa~\cite{Crafa12} defines a behavioural type system for actors
aimed at ensuring that the order of messages produced and consumed
by an actor follows a prescribed protocol. Protocols are expressed
as types and describe the behavior of actors rather than the content
of the mailboxes they use. Deadlock freedom is not addressed.

Charousset \etal~\cite{CharoussetHiesgenSchmidt16} describe the
design and implementation of CAF, the C++ Actor Framework. Among the
features of CAF is the use of \emph{type-safe message passing
  interfaces} that makes it possible to statically detect a number
of protocol violations by piggybacking on the C++ type system. There
are close analogies between CAF's message passing interfaces and
mailbox types with output capability: both are equipped with a
subset semantics and report only those messages that can be stored
into the mailbox through a mailbox reference with that type.
Charousset \etal~\cite{CharoussetHiesgenSchmidt16} point out that
this feature fosters the decoupling of actors and enables
incremental program recompilation.

Giachino \etal~\cite{GiachinoEtAl16,Mastandrea16} define a type
system for the deadlock analysis of actors making use of implicit
futures. Mailbox conformance and deadlocks due to communications are
not taken into account.

Fowler \etal~\cite{FowlerLindleyWadler17} formalize channel-based
and mailbox-based communicating systems, highlighting the
differences between the two models and studying type-preserving
encodings between them. Mailboxes in their work are uniformly typed,
but the availability of union types make it possible to host
heterogeneous values within the same mailbox. This however may lead
to a loss of precision in typing. This phenomenon, dubbed \emph{type
  pollution} by Fowler \etal~\cite{FowlerLindleyWadler17}, is
observable to some extent also in our typing discipline and can be
mitigated by the use of multiple mailboxes (\cf
Section~\ref{sec:example_master_workers}).
Finally, Fowler \etal~\cite{FowlerLindleyWadler17} leave the
extension of their investigation to behaviorally-typed language of
actors as future work. Our typing discipline is a potential
candidate for this investigation and addresses a more general
setting thanks to the support for first-class mailboxes.

\subparagraph*{Sessions and actors.}
The encoding of binary sessions into actors discussed in
Section~\ref{sec:example_binary_sessions} is new and has been
inspired by the encoding of binary sessions into the linear
$\pi$-calculus~\cite{Kobayashi07,DardhaGiachinoSangiorgi17}, whereby
each message is paired with a continuation. In our case, the
continuation, instead of being a fresh (linear) channel, is either
the mailbox of the peer or that of the session. This style of
communication with explicit continuation passing is idiomatic in the
actor model, which is based on asynchronous communications.
The encoding discussed in Section~\ref{sec:example_binary_sessions}
can be generalized to multiparty sessions by defining
$\SessionProcess\SessionType$ as a \emph{medium process} through
which messages are exchanged between the parties of the
session. This idea has been put forward by Caires and
P\'erez~\cite{CairesPerez16} to encode multiparty sessions using
binary sessions.

Mostrous and Vasconcelos~\cite{MostrousVasconcelos11} study a
session type system for enforcing ordered dyadic interactions in
core Erlang. They use \emph{references} for distinguishing messages
pertaining to different sessions, making use of the advanced pattern
matching capabilities of Erlang. Their type system guarantees a
weaker form of mailbox conformance, whereby junk messages may be
present at the end of a computation, and does not consider deadlock
freedom. Compared to our encoding of binary sessions, their approach
does not require a medium process representing the session itself.

Neykova and Yoshida~\cite{NeykovaYoshida16} propose a framework
based on multiparty session types for the specification and
implementation of actor systems with guarantees on the order of
interactions. This approach is applicable when designing an entire
system and both the network topology and the communication protocol
can be established in advance.
Fowler~\cite{Fowler16} builds upon the work of Neykova and Yoshida
to obtain a runtime protocol monitoring mechanism for Erlang.
Charalambides \etal~\cite{CharalambidesDingesAgha16} extend the
multiparty session approach with a protocol specification language
that is parametric in the number of actors participating in the
system.
In contrast to these approaches based on multiparty/global session
types, our approach ensures mailbox conformance and deadlock freedom
of a system compositionally, as the system is assembled out of
smaller components, and permits the modeling of systems with a
dynamic network topology or with a varying number of interacting
processes.

\subparagraph*{Linear logic.}
Shortly after its introduction, linear logic has been proposed as a
specification language suitable for concurrency.
Following this idea, Kobayashi and
Yonezawa~\cite{KobayashiYonezawa94,KobayashiYonezawa95} have studied
formal models of concurrent objects and actors based on linear
logic. More recently, a direct correspondence between propositions
of linear logic and session types has been
discovered~\cite{CairesPfenning10,Wadler14,LindleyMorris15}.
There are several analogies between the mailbox type system and the
proof system of linear logic.
Mailbox types with output capability are akin to positive
propositions, with $\Out\tzero$ and $\Out\tone$ respectively playing
the roles of $0$ and $1$ in linear logic and
$\Out(\PatternE \tsum \PatternF)$ and
$\Out(\PatternE \tmul \PatternF)$ corresponding to $\oplus$ and
$\otimes$. Mailbox types with input capability are akin to negative
propositions, with $\In\tzero$ and $\In\tone$ corresponding to
$\top$ and $\bot$ and $\In(\PatternE \tsum \PatternF)$ and
$\In(\PatternE \tmul \PatternF)$ corresponding to $\with$ and
$\parr$.
Rules~\refrule{t-fail}, \refrule{t-free} and \refrule{t-branch} have
been directly inspired from the rules for $\top$, $\bot$ and $\with$
in the classical sequent calculus for linear logic.
Subtyping corresponds to inverse linear implication and its
properties are consistent with those of the logic connectives
according to the above interpretation.
A noteworthy difference between our type system and those for
session types based on linear
logic~\cite{CairesPfenning10,Wadler14,LindleyMorris15} is the need
for dependency graphs to ensure deadlock freedom
(Section~\ref{sec:graphs}). There are two reasons that call for such
auxiliary mechanism in our setting. First, the rule \refrule{t-par}
is akin to a symmetric cut rule. Dependency graphs are necessary to
detect mutual dependencies that may consequently arise
(Example~\ref{ex:future_deadlock}). Second, unlike session endpoints
that are linear resources, mailbox references can be used
non-linearly. Thus, the multiplicity of dependencies, and not just
the presence or absence thereof, is relevant
(Example~\ref{ex:multiplicity}).


\section{Concluding Remarks}
\label{sec:conclusions}

We have presented a mailbox type system for reasoning about
processes that communicate through first-class, unordered
mailboxes. The type system enforces mailbox conformance, deadlock
freedom and, for a significant class of processes, junk freedom as
well.
In sharp contrast with session types, mailbox types embody the
unordered nature of mailboxes and enable the description of
mailboxes concurrently accessed by several processes, abstracting
away from the state and behavior of the objects/actors/processes
using these mailboxes. The fact that a mailbox may have different
types during its lifetime is entirely encapsulated by the typing
rules and not apparent from mailbox types themselves.
The mailbox calculus subsumes the actor model and allows us to
analyze systems with a dynamic network topology and a varying number
of processes mixing different concurrency abstractions.

There are two natural extensions of the mailbox calculus that we
have not incorporated in the formal development for the sake of
simplicity.
First, it is possible to relax the syntax of guarded processes to
accommodate actions referring to different mailboxes as well as
actions representing timeouts. This extension makes the typing rules
for guards more complex to formulate but enhances expressiveness and
precision of typing (see Appendix~\ref{sec:example_readers_writer}).
Second, it is possible to allow multiple processes to receive
messages from the same mailbox by introducing a distinguished
capability that identifies shared mailboxes.  The notion of type
combination (Definition~\ref{def:type_combination}) must be suitably
revised for deadling with shared mailboxes and avoid the soundness
problems discussed in Example~\ref{ex:linear_input}.  With this
extension in place, it might also be possible to replace recursion
with replication in the calculus.

Concerning further developments, the intriguing analogies between
the mailbox type system and linear logic pointed out in
Section~\ref{sec:related} surely deserve a formal investigation.
On the practical side, a primary goal to fulfil is the development
of a type checking/inference algorithm for the proposed typing
discipline.  Subtyping is decidable and a type checking algorithm
for a slightly simpler type language has already been
developed~\cite{CobaltBlue}. We are confident that a type checking
algorithm for the mailbox calculus can be obtained by reusing much
of these known results.
Concerning the applicability of the approach to real-world
programming languages, one promising approach is the development of
a tool for the analysis of Java bytecode, possibly with the help of
Java annotations, along the lines of what has already been done for
Kilim~\cite{SrinivasanMycroft08}.
Other ideas for further developments include the extension of
pattern atoms so as to accommodate Erlang-style matching of messages
and the implementation of optimal matching algorithms driven by type
information.




\bibliography{main}

\begin{thebibliography}{10}

\bibitem{Agha86}
Gul Agha.
\newblock {\em {Actors: A Model of Concurrent Computation in Distributed
  Systems}}.
\newblock MIT Press, 1986.

\bibitem{AnconaEtAl16}
Davide Ancona, Viviana Bono, Mario Bravetti, Joana Campos, Giuseppe Castagna,
  Pierre-Malo Deni\'elou, Simon~J. Gay, Nils Gesbert, Elena Giachino, Raymond
  Hu, Einar~Broch Johnsen, Francisco Martins, Viviana Mascardi, Fabrizio
  Montesi, Rumyana Neykova, Nicholas Ng, Luca Padovani, Vasco~T. Vasconcelos,
  and Nobuko Yoshida.
\newblock {Behavioral Types in Programming Languages}.
\newblock {\em Foundations and Trends in Programming Languages}, 3:95--230,
  2016.
\newblock \href {http://dx.doi.org/10.1561/2500000031}
  {\path{doi:10.1561/2500000031}}.

\bibitem{Armstrong13}
Joe Armstrong.
\newblock {\em Programming Erlang: Software for a Concurrent World}.
\newblock Pragmatic Bookshelf, 2013.

\bibitem{Brzozowski64}
Janusz~A. Brzozowski.
\newblock {Derivatives of Regular Expressions}.
\newblock {\em Journal of ACM}, 11(4):481--494, 1964.
\newblock \href {http://dx.doi.org/10.1145/321239.321249}
  {\path{doi:10.1145/321239.321249}}.

\bibitem{CairesPerez16}
Lu{\'{\i}}s Caires and Jorge~A. P{\'{e}}rez.
\newblock Multiparty session types within a canonical binary theory, and
  beyond.
\newblock In {\em Proceedings of FORTE'16}, LNCS 9688, pages 74--95. Springer,
  2016.
\newblock \href {http://dx.doi.org/10.1007/978-3-319-39570-8_6}
  {\path{doi:10.1007/978-3-319-39570-8_6}}.

\bibitem{CairesPfenning10}
Lu\'{\i}s Caires and Frank Pfenning.
\newblock {Session Types as Intuitionistic Linear Propositions}.
\newblock In {\em Proceedings of CONCUR'10}, LNCS 6269, pages 222--236.
  Springer, 2010.
\newblock \href {http://dx.doi.org/10.1007/978-3-642-15375-4_16}
  {\path{doi:10.1007/978-3-642-15375-4_16}}.

\bibitem{CharalambidesDingesAgha16}
Minas Charalambides, Peter Dinges, and Gul~A. Agha.
\newblock Parameterized, concurrent session types for asynchronous multi-actor
  interactions.
\newblock {\em Science of Computer Programming}, 115-116:100--126, 2016.
\newblock \href {http://dx.doi.org/10.1016/j.scico.2015.10.006}
  {\path{doi:10.1016/j.scico.2015.10.006}}.

\bibitem{CharoussetHiesgenSchmidt16}
Dominik Charousset, Raphael Hiesgen, and Thomas~C. Schmidt.
\newblock Revisiting actor programming in {C++}.
\newblock {\em Computer Languages, Systems {\&} Structures}, 45:105--131, 2016.
\newblock \href {http://dx.doi.org/10.1016/j.cl.2016.01.002}
  {\path{doi:10.1016/j.cl.2016.01.002}}.

\bibitem{ChatterjeeEtAl16}
Arghya Chatterjee, Branko Gvoka, Bing Xue, Zoran Budimlic, Shams Imam, and
  Vivek Sarkar.
\newblock A distributed selectors runtime system for java applications.
\newblock In {\em Proceedings of PPPJ'16}, pages 3:1--3:11. {ACM}, 2016.
\newblock \href {http://dx.doi.org/10.1145/2972206.2972215}
  {\path{doi:10.1145/2972206.2972215}}.

\bibitem{ChristakisSagonas11}
Maria Christakis and Konstantinos Sagonas.
\newblock Detection of asynchronous message passing errors using static
  analysis.
\newblock In {\em Proceedings of PADL'11}, LNCS 6539, pages 5--18. Springer,
  2011.
\newblock \href {http://dx.doi.org/10.1007/978-3-642-18378-2_3}
  {\path{doi:10.1007/978-3-642-18378-2_3}}.

\bibitem{Conway71}
John Conway.
\newblock {\em Regular Algebra and Finite Machines}.
\newblock William Clowes \& Sons Ltd, 1971.

\bibitem{Courcelle83}
Bruno Courcelle.
\newblock {Fundamental Properties of Infinite Trees}.
\newblock {\em Theoretical Computer Science}, 25:95--169, 1983.
\newblock \href {http://dx.doi.org/10.1016/0304-3975(83)90059-2}
  {\path{doi:10.1016/0304-3975(83)90059-2}}.

\bibitem{Crafa12}
Silvia Crafa.
\newblock Behavioural types for actor systems.
\newblock Technical Report 1206.1687, arXiv, 2012.
\newblock URL: \url{http://arxiv.org/abs/1206.1687}.

\bibitem{CrafaPadovani17}
Silvia Crafa and Luca Padovani.
\newblock {The Chemical Approach to Typestate-Oriented Programming}.
\newblock {\em ACM Transactions on Programming Languages and Systems},
  39:13:1--13:45, 2017.
\newblock \href {http://dx.doi.org/10.1145/3064849}
  {\path{doi:10.1145/3064849}}.

\bibitem{DardhaGiachinoSangiorgi17}
Ornela Dardha, Elena Giachino, and Davide Sangiorgi.
\newblock Session types revisited.
\newblock {\em Information and Computation}, 256:253--286, 2017.
\newblock \href {http://dx.doi.org/10.1016/j.ic.2017.06.002}
  {\path{doi:10.1016/j.ic.2017.06.002}}.

\bibitem{Fowler16}
Simon Fowler.
\newblock An erlang implementation of multiparty session actors.
\newblock In {\em Proceedings of ICE'16}, EPTCS 223, pages 36--50, 2016.
\newblock \href {http://dx.doi.org/10.4204/EPTCS.223.3}
  {\path{doi:10.4204/EPTCS.223.3}}.

\bibitem{FowlerLindleyWadler17}
Simon Fowler, Sam Lindley, and Philip Wadler.
\newblock Mixing metaphors: Actors as channels and channels as actors.
\newblock In {\em Proceedings of ECOOP'17}, LIPIcs 74, pages 11:1--11:28, 2017.
\newblock \href {http://dx.doi.org/10.4230/LIPIcs.ECOOP.2017.11}
  {\path{doi:10.4230/LIPIcs.ECOOP.2017.11}}.

\bibitem{GayHole05}
Simon~J. Gay and Malcolm Hole.
\newblock Subtyping for session types in the pi calculus.
\newblock {\em Acta Informatica}, 42(2-3):191--225, 2005.
\newblock \href {http://dx.doi.org/10.1007/s00236-005-0177-z}
  {\path{doi:10.1007/s00236-005-0177-z}}.

\bibitem{GiachinoEtAl16}
Elena Giachino, Ludovic Henrio, Cosimo Laneve, and Vincenzo Mastandrea.
\newblock Actors may synchronize, safely!
\newblock In {\em Proceedings PPDP'16}, pages 118--131. {ACM}, 2016.
\newblock \href {http://dx.doi.org/10.1145/2967973.2968599}
  {\path{doi:10.1145/2967973.2968599}}.

\bibitem{Haller12}
Philipp Haller.
\newblock On the integration of the actor model in mainstream technologies: the
  scala perspective.
\newblock In {\em Proceedings of AGERE! 2012}, pages 1--6. {ACM}, 2012.
\newblock \href {http://dx.doi.org/10.1145/2414639.2414641}
  {\path{doi:10.1145/2414639.2414641}}.

\bibitem{HallerSommers11}
Philipp Haller and Frank Sommers.
\newblock {\em {Actors in Scala - concurrent programming for the multi-core
  era}}.
\newblock Artima, 2011.

\bibitem{HewittEtAl73}
Carl Hewitt, Peter Bishop, and Richard Steiger.
\newblock {A Universal Modular ACTOR Formalism for Artificial Intelligence}.
\newblock In {\em Proceedings of IJCAI'73}, pages 235--245. William Kaufmann,
  1973.

\bibitem{Honda93}
Kohei Honda.
\newblock {Types for Dyadic Interaction}.
\newblock In {\em Proceedings of CONCUR'93}, volume LNCS 715, pages 509--523.
  Springer, 1993.
\newblock \href {http://dx.doi.org/10.1007/3-540-57208-2_35}
  {\path{doi:10.1007/3-540-57208-2_35}}.

\bibitem{HopkinsKozen99}
Mark~W. Hopkins and Dexter Kozen.
\newblock {Parikh's Theorem in Commutative Kleene Algebra}.
\newblock In {\em {Proceedings of LICS'99}}, pages 394--401. {IEEE}, 1999.
\newblock \href {http://dx.doi.org/10.1109/LICS.1999.782634}
  {\path{doi:10.1109/LICS.1999.782634}}.

\bibitem{HuttelEtAl16}
Hans H{\"{u}}ttel, Ivan Lanese, Vasco~T. Vasconcelos, Lu{\'{\i}}s Caires, Marco
  Carbone, Pierre{-}Malo Deni{\'{e}}lou, Dimitris Mostrous, Luca Padovani,
  Ant{\'{o}}nio Ravara, Emilio Tuosto, Hugo~Torres Vieira, and Gianluigi
  Zavattaro.
\newblock {Foundations of Session Types and Behavioural Contracts}.
\newblock {\em {ACM} Computing Surveys}, 49(1):3:1--3:36, 2016.
\newblock \href {http://dx.doi.org/10.1145/2873052}
  {\path{doi:10.1145/2873052}}.

\bibitem{ImamSarkar12}
Shams~Mahmood Imam and Vivek Sarkar.
\newblock Integrating task parallelism with actors.
\newblock {\em SIGPLAN Notices}, 47(10):753--772, 2012.
\newblock \href {http://dx.doi.org/10.1145/2398857.2384671}
  {\path{doi:10.1145/2398857.2384671}}.

\bibitem{ImamSarkar14}
Shams~Mahmood Imam and Vivek Sarkar.
\newblock Savina - an actor benchmark suite: Enabling empirical evaluation of
  actor libraries.
\newblock In {\em Proceedings of AGERE! 2014}, pages 67--80. {ACM}, 2014.
\newblock \href {http://dx.doi.org/10.1145/2687357.2687368}
  {\path{doi:10.1145/2687357.2687368}}.

\bibitem{ImamSarkar14bis}
Shams~Mahmood Imam and Vivek Sarkar.
\newblock Selectors: Actors with multiple guarded mailboxes.
\newblock In {\em Proceedings of AGERE! 2014}, pages 1--14. {ACM}, 2014.
\newblock \href {http://dx.doi.org/10.1145/2687357.2687360}
  {\path{doi:10.1145/2687357.2687360}}.

\bibitem{Kobayashi02}
Naoki Kobayashi.
\newblock {A Type System for Lock-Free Processes}.
\newblock {\em Information and Computation}, 177(2):122--159, 2002.
\newblock \href {http://dx.doi.org/10.1006/inco.2002.3171}
  {\path{doi:10.1006/inco.2002.3171}}.

\bibitem{Kobayashi07}
Naoki Kobayashi.
\newblock Type systems for concurrent programs.
\newblock Technical report, Tohoku University, 2007.
\newblock Short version appeared in 10th Anniversary Colloquium of UNU/IIST,
  2002.
\newblock URL:
  \url{http://www.kb.ecei.tohoku.ac.jp/~koba/papers/tutorial-type-extended.pdf}.

\bibitem{KobayashiYonezawa94}
Naoki Kobayashi and Akinori Yonezawa.
\newblock Type-theoretic foundations for concurrent object-oriented
  programming.
\newblock In {\em Proceedings of OOPSLA'94}, pages 31--45. {ACM}, 1994.
\newblock \href {http://dx.doi.org/10.1145/191080.191088}
  {\path{doi:10.1145/191080.191088}}.

\bibitem{KobayashiYonezawa95}
Naoki Kobayashi and Akinori Yonezawa.
\newblock Asynchronous communication model based on linear logic.
\newblock {\em Formal Aspects of Computing}, 7(2):113--149, 1995.
\newblock \href {http://dx.doi.org/10.1007/BF01211602}
  {\path{doi:10.1007/BF01211602}}.

\bibitem{LindleyMorris15}
Sam Lindley and J.~Garrett Morris.
\newblock A semantics for propositions as sessions.
\newblock In {\em Proceedings of ESOP'15}, LNCS 9032, pages 560--584. Springer,
  2015.
\newblock \href {http://dx.doi.org/10.1007/978-3-662-46669-8_23}
  {\path{doi:10.1007/978-3-662-46669-8_23}}.

\bibitem{Mastandrea16}
Vincenzo Mastandrea.
\newblock Deadlock analysis with behavioral types for actors.
\newblock In {\em Proceedings of ICTCS'16}, volume 1720 of {\em {CEUR} Workshop
  Proceedings}, pages 257--262, 2016.
\newblock URL: \url{http://ceur-ws.org/Vol-1720/short7.pdf}.

\bibitem{MostrousVasconcelos11}
Dimitris Mostrous and Vasco~Thudichum Vasconcelos.
\newblock Session typing for a featherweight erlang.
\newblock In {\em Proceedings of COORDINATION'11}, LNCS 6721, pages 95--109.
  Springer, 2011.
\newblock \href {http://dx.doi.org/10.1007/978-3-642-21464-6_7}
  {\path{doi:10.1007/978-3-642-21464-6_7}}.

\bibitem{NajmNimourStefani99}
Elie Najm, Abdelkrim Nimour, and Jean{-}Bernard Stefani.
\newblock Guaranteeing liveness in an object calculus through behavioural
  typing.
\newblock In {\em Proceedings of FORTE'99}, volume 156, pages 203--221. Kluwer,
  1999.

\bibitem{NeykovaYoshida16}
Rumyana Neykova and Nobuko Yoshida.
\newblock Multiparty session actors.
\newblock {\em Logical Methods in Computer Science}, 13(1), 2017.
\newblock \href {http://dx.doi.org/10.23638/LMCS-13(1:17)2017}
  {\path{doi:10.23638/LMCS-13(1:17)2017}}.

\bibitem{Padovani14B}
Luca Padovani.
\newblock {Deadlock and Lock Freedom in the Linear $\pi$-Calculus}.
\newblock In {\em Proceedings of CSL-LICS'14}, pages 72:1--72:10. ACM, 2014.
\newblock \href {http://dx.doi.org/10.1145/2603088.2603116}
  {\path{doi:10.1145/2603088.2603116}}.

\bibitem{CobaltBlue}
Luca Padovani.
\newblock {CobaltBlue -- Behavioral Type Checking for Concurrent Objects},
  January 2018.
\newblock URL:
  \url{http://www.di.unito.it/~padovani/Software/CobaltBlue/index.html}.

\bibitem{PierceSangiorgi96}
Benjamin~C. Pierce and Davide Sangiorgi.
\newblock Typing and subtyping for mobile processes.
\newblock {\em Mathematical Structures in Computer Science}, 6(5):409--453,
  1996.

\bibitem{Puntigam01}
Franz Puntigam.
\newblock Strong types for coordinating active objects.
\newblock {\em Concurrency and Computation: Practice and Experience},
  13(4):293--326, 2001.
\newblock \href {http://dx.doi.org/10.1002/cpe.570}
  {\path{doi:10.1002/cpe.570}}.

\bibitem{PuntigamPeter01}
Franz Puntigam and Christof Peter.
\newblock Types for active objects with static deadlock prevention.
\newblock {\em Fundamenta Informaticae}, 48(4):315--341, 2001.
\newblock URL:
  \url{http://content.iospress.com/articles/fundamenta-informaticae/fi48-4-02}.

\bibitem{RavaraVasconcelos00}
Ant{\'{o}}nio Ravara and Vasco~T. Vasconcelos.
\newblock Typing non-uniform concurrent objects.
\newblock In {\em Proceedings of CONCUR'00}, LNCS 1877, pages 474--488.
  Springer, 2000.
\newblock \href {http://dx.doi.org/10.1007/3-540-44618-4_34}
  {\path{doi:10.1007/3-540-44618-4_34}}.

\bibitem{SangiorgiWalker01}
Davide Sangiorgi and David Walker.
\newblock {\em The Pi-Calculus - A theory of mobile processes}.
\newblock Cambridge University Press, 2001.

\bibitem{SrinivasanMycroft08}
Sriram Srinivasan and Alan Mycroft.
\newblock Kilim: Isolation-typed actors for java.
\newblock In {\em Proceedings of ECOOP'08}, LNCS 5142, pages 104--128.
  Springer, 2008.
\newblock \href {http://dx.doi.org/10.1007/978-3-540-70592-5_6}
  {\path{doi:10.1007/978-3-540-70592-5_6}}.

\bibitem{TasharofiDingesJohnson13}
Samira Tasharofi, Peter Dinges, and Ralph~E. Johnson.
\newblock Why do scala developers mix the actor model with other concurrency
  models?
\newblock In {\em Proceedings of ECOOP'13}, LNCS 7920, pages 302--326.
  Springer, 2013.
\newblock \href {http://dx.doi.org/10.1007/978-3-642-39038-8_13}
  {\path{doi:10.1007/978-3-642-39038-8_13}}.

\bibitem{VarelaAgha01}
Carlos~A. Varela and Gul Agha.
\newblock Programming dynamically reconfigurable open systems with {SALSA}.
\newblock {\em {SIGPLAN} Notices}, 36(12):20--34, 2001.
\newblock \href {http://dx.doi.org/10.1145/583960.583964}
  {\path{doi:10.1145/583960.583964}}.

\bibitem{Wadler14}
Philip Wadler.
\newblock Propositions as sessions.
\newblock {\em Journal of Functional Programming}, 24(2-3):384--418, 2014.
\newblock \href {http://dx.doi.org/10.1017/S095679681400001X}
  {\path{doi:10.1017/S095679681400001X}}.

\end{thebibliography}

\appendix

\section{Supplementary Properties}
\label{sec:extra_properties}

\subsection{Properties of subtyping}
\label{sec:prop_subtyping}

\begin{proposition}
  \label{prop:subt_reliable_usable}
  If $\TypeT \subt \TypeS$, then $\TypeT$ reliable implies $\TypeS$
  reliable and $\TypeS$ usable implies $\TypeT$ usable.
\end{proposition}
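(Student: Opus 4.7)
The plan is to reduce both implications to the transitivity of $\subt$ via contrapositives, and then sketch the coinductive proof of transitivity itself.

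For the first implication, assume $\TypeT \subt \TypeS$ and $\TypeT$ reliable, that is $\TypeT \not\subt \In\tzero$. If $\TypeS$ were not reliable we would have $\TypeS \subt \In\tzero$, so by transitivity of $\subt$ also $\TypeT \subt \In\tzero$, contradicting reliability of $\TypeT$. Symmetrically, for the second implication, assume $\TypeT \subt \TypeS$ and $\TypeS$ usable, that is $\Out\tzero \not\subt \TypeS$. If $\TypeT$ were not usable we would have $\Out\tzero \subt \TypeT$, so by transitivity $\Out\tzero \subt \TypeS$, contradicting usability of $\TypeS$. Thus the only real work lies in establishing that $\subt$ is transitive.

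I would prove transitivity of $\subt$ by coinduction. Define
\[
  \rrel \eqdef \{ (\TypeT, \TypeR) \mid \exists \TypeS.~ \TypeT \subt \TypeS \text{ and } \TypeS \subt \TypeR \}
\]
and verify that $\rrel$ satisfies the closure conditions of Definition~\ref{def:subt}; since $\subt$ is the largest such relation, this yields $\rrel \subseteq \subt$. Given $\TypeT \rrel \TypeR$ with witness $\TypeS$, two applications of Definition~\ref{def:subt} force $\TypeT$, $\TypeS$ and $\TypeR$ to share the same capability. In the input case we have $\TypeT = \In\PatternE_1$, $\TypeS = \In\PatternE_2$, $\TypeR = \In\PatternE_3$ with $\PatternE_1 \subp[\subt] \PatternE_2$ and $\PatternE_2 \subp[\subt] \PatternE_3$, and must conclude $\PatternE_1 \subp[\rrel] \PatternE_3$. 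Unfolding Definition~\ref{def:subp}, any configuration $\bag[i\in I]{\tmessage{\Tag_i}{\TypesT_i}} \in \sem{\PatternE_1}$ is witnessed in $\sem{\PatternE_2}$ by a configuration $\bag[i\in I]{\tmessage{\Tag_i}{\TypesS_i}}$ with $\TypesT_i \subt \TypesS_i$, and in turn in $\sem{\PatternE_3}$ by a configuration $\bag[i\in I]{\tmessage{\Tag_i}{\TypesR_i}}$ with $\TypesS_i \subt \TypesR_i$; the intermediate sequences $\TypesS_i$ witness $\TypesT_i \rrel \TypesR_i$ componentwise. The output case is analogous with the direction of pattern inclusion reversed.

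The main obstacle is not deep: it is simply checking that the semantic subpattern relation composes in the manner sketched above, which is routine once the argument is indexed by the common carrier set $I$ of atoms in the configuration. The contractiveness condition on type trees ensures that the coinductive argument is well founded. Once transitivity of $\subt$ is established, the proposition follows immediately from the two contrapositive arguments given at the start.
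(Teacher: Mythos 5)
Your proof is correct and follows essentially the same route as the paper's: both argue by contraposition and reduce the statement to transitivity of the subtyping/subpattern relation (the paper phrases it at the pattern level, deducing $\PatternE \subp \PatternF \subp \tzero$, while you phrase it at the type level via $\TypeT \subt \TypeS \subt \In\tzero$). The only difference is that you additionally spell out the coinductive proof of transitivity of $\subt$, which the paper takes for granted; that extra detail is sound.
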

\begin{proof}
  We only prove the part of the statement concerning reliability,
  since usability is analogous, and we prove the reverse implication.
  The interesting case is when $\TypeT = \In\PatternE$. From the
  hypothesis $\TypeT \subt \TypeS$ we deduce $\TypeS = \In\PatternF$
  and $\PatternE \subp \PatternF$.
  Now if $\PatternF \subp \tzero$ we have
  $\PatternE \subp \PatternF \subp \tzero$ by transitivity of
  $\subp$. Hence, $\TypeS$ unreliable implies $\TypeT$ unreliable.
\end{proof}

\begin{proposition}
  \label{prop:subt_der_subt}
  If $\tmessage\Tag\TypesT \tmul \PatternF \subp \PatternE$ and
  $\PatternF \not\subp \tzero$ and
  $\pder\PatternE{\tmessage\Tag\TypesS}$ is defined, then
  $\TypesT \subt \TypesS$ and
  $\PatternF \subp \pder\PatternE{\tmessage\Tag\TypesS}$.
\end{proposition}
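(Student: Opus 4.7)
The plan is to first establish a semantic characterization of the pattern residual and then derive both conclusions by chasing through Definition~\ref{def:subp}. Specifically, I would prove as a lemma that, whenever $\pder\PatternE{\tmessage\Tag\TypesS}$ is defined, we have
\[
  \BagA \in \sem{\pder\PatternE{\tmessage\Tag\TypesS}}
  \iff
  \exists \TypesS': \TypesS' \subt \TypesS \text{ and } \bag{\tmessage\Tag{\TypesS'}} \bunion \BagA \in \sem\PatternE.
\]
This is shown by structural induction on $\PatternE$ following the clauses of Definition~\ref{def:der}: the cases $\tzero$, $\tone$, and tag-mismatched atoms are immediate; the atom case $\tmessage\Tag\TypesT$ with $\TypesT \subt \TypesS$ yields $\tone$ because the multiset must be empty; the sum case is direct since $\sem{\cdot}$ distributes $\tsum$ as union; the product case mirrors the two syntactic summands of $\pder{(\PatternE\tmul\PatternF)}\MessageType$, each corresponding to a choice of which factor contributes the removed atom; the exponential case uses $\sem{\Pattern\tstar} = \bigcup_{n\geq 0}\sem{\Pattern^n}$ to reduce to one unfolding plus a residual star.

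With this characterization in hand, I would pick, using $\PatternF \not\subp \tzero$, some $\BagA_0 \in \sem\PatternF$. Then $\bag{\tmessage\Tag\TypesT} \bunion \BagA_0 \in \sem{\tmessage\Tag\TypesT \tmul \PatternF}$, so by the hypothesis $\tmessage\Tag\TypesT \tmul \PatternF \subp \PatternE$ there exists a configuration of the form $\bag{\tmessage\Tag{\TypesS_0'}} \bunion \BagB_0 \in \sem\PatternE$ with $\TypesT \subt \TypesS_0'$. Since $\pder\PatternE{\tmessage\Tag\TypesS}$ is defined, every argument type of a $\Tag$-atom occurring in $\PatternE$ is below $\TypesS$; in particular $\TypesS_0' \subt \TypesS$, and transitivity of $\subt$ gives the first conclusion $\TypesT \subt \TypesS$.

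For the second conclusion, I would take an arbitrary configuration $\bag[i\in I]{\tmessage{\Tag_i}{\TypesT_i}} \in \sem\PatternF$. Then $\bag{\tmessage\Tag\TypesT} \bunion \bag[i\in I]{\tmessage{\Tag_i}{\TypesT_i}} \in \sem{\tmessage\Tag\TypesT \tmul \PatternF}$, and applying the subtyping hypothesis yields $\TypesS'$ and $\TypesS_i$ with $\TypesT \subt \TypesS'$, $\TypesT_i \subt \TypesS_i$ for each $i$, and $\bag{\tmessage\Tag{\TypesS'}} \bunion \bag[i\in I]{\tmessage{\Tag_i}{\TypesS_i}} \in \sem\PatternE$. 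Again $\TypesS' \subt \TypesS$ by definedness, so by the semantic characterization $\bag[i\in I]{\tmessage{\Tag_i}{\TypesS_i}} \in \sem{\pder\PatternE{\tmessage\Tag\TypesS}}$, witnessing $\PatternF \subp \pder\PatternE{\tmessage\Tag\TypesS}$.

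The main obstacle is the semantic lemma for the residual: the product case requires verifying that the two syntactic summands in $\pder{(\PatternE\tmul\PatternF)}\MessageType$ jointly cover every way of splitting $\bag\MessageType \bunion \BagA$ between $\sem\PatternE$ and $\sem\PatternF$, and that the definedness condition on $\TypesS$ is inherited by both factors; the exponential case similarly requires selecting a single star-unfolding that carries the removed atom. Once the lemma is in place, the proposition itself reduces to carefully propagating indices through Definition~\ref{def:subp}, which is routine.
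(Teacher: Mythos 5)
Your proof is correct and follows essentially the same route as the paper's: pick a configuration of $\tmessage\Tag\TypesT \tmul \PatternF$ (non-empty because $\PatternF \not\subp \tzero$), push it through the hypothesis $\subp$ into $\sem\PatternE$, use definedness of $\pder\PatternE{\tmessage\Tag\TypesS}$ to conclude $\TypesT \subt \TypesS$, and observe that the remaining atoms form a configuration of $\pder\PatternE{\tmessage\Tag\TypesS}$. The only difference is that you state and prove by induction the semantic characterization of the residual, a step the paper asserts without proof as following ``from the definition of derivative''.
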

\begin{proof}
  Let $\Bag \in \sem{\tmessage\Tag\TypesT \tmul \PatternF}$.
  From the hypothesis $\PatternF \not\subp \tzero$ we deduce that
  $\Bag = \bag{\tmessage\Tag\TypesT} \bunion \bag[i\in
  I]{\tmessage{\Tag_i}{\TypesT_i}}$ where
  $\bag[i\in I]{\tmessage{\Tag_i}{\TypesT_i}} \in \sem\PatternF$.
  From the hypothesis
  $\tmessage\Tag\TypesT \tmul \PatternF \subp \PatternE$ and the
  definition of $\subp$ we deduce that there exist $\TypesR$ and
  $\TypesR_i$ such that
  $\bag{\tmessage\Tag\TypesR} \bunion \bag[i\in
  I]{\tmessage{\Tag_i}{\TypesR_i}} \in \sem\PatternE$ and
  $\TypesT \subt \TypesR$ and $\TypesT_i \subt \TypesR_i$ for every
  $i\in I$.
  From the hypothesis that $\pder\PatternE{\tmessage\Tag\TypesS}$ is
  defined we deduce that $\TypesR \subt \TypesS$, hence
  $\TypesT \subt \TypesS$.
  Also, from the definition of derivative, we have
  $\bag[i\in I]{\tmessage{\Tag_i}{\TypesR_i}} \in
  \sem{\pder\PatternE{\tmessage\Tag\TypesS}}$.
  We conclude
  $\PatternF \subp \pder\PatternE{\tmessage\Tag\TypesS}$.
\end{proof}

We extend the terminology used for type classification
(Definition~\ref{def:classification}) to type environments as
well. We say that $\Context$ is reliable/irrelevant if all the types
in its range are reliable/irrelevant.

\begin{example}
  \label{ex:global_assumptions}
  The global assumptions we made on types are aimed at ensuring that
  in a judgment $\wtp{\Name : \Type, \Context}\Process{}$ where
  $\Type$ is relevant and $\Context$ is reliable the name $\Name$
  does indeed occur free in $\Process$.
  If we allowed unreliable arguments, then it would be possible to
  derive
  \[
    \wtp{
      \NameU : \Type,
      \NameV : \In\tmessage\Tag{\In\tzero}
    }{
      \Receive\NameV\Tag\Var
      \Fail\Var
    }{
      \gempty
    }
  \]
  meaning that $\NameU$ is not guaranteed to occur even if $\Type$
  is relevant.
  Analogously, if we allowed unusable arguments, then it would be
  possible to derive
  \[
    \wtp{
      \NameU : \Type,
      \NameV : \Out\tmessage\Tag{\Out\tzero}
    }{
      \New\Mailbox\parens{
        \Send\NameV\Tag\Mailbox
        \parop
        \Fail\Mailbox
      }
    }{
      \New\Mailbox\gedge\NameV\Mailbox
    }
  \]
  again meaning that $\NameU$ is not guaranteed to occur even if
  $\Type$ is relevant.
  \eoe
\end{example}


\subsection{Properties of dependency graphs}
\label{sec:prop_graphs}

\begin{proposition}[structure preserving transitions]
  The following properties hold:
  \begin{enumerate}
  \item If
    $\dgraphA_1 \gunion \dgraphA_2 \gred\NameU\NameV \dgraph$, then
    $\dgraph = \dgraphA_1' \gunion \dgraphA_2'$ for some
    $\dgraphA_1'$ and $\dgraphA_2'$.
  \item If $\New\Mailbox\dgraphA \gred\NameU\NameV \dgraphB$, then
    $\dgraphB = \New\Mailbox\dgraph'$ for some $\dgraph'$.
  \end{enumerate}
\end{proposition}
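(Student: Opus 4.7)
The plan is to prove both parts by induction on the derivation of the transition in Table~\ref{tab:dgraph}. The key observation is that no inference rule can fire on a source graph whose top-level constructor does not match the shape demanded by that rule: \refrule{g-axiom} applies only to an edge $\gedge\NameU\NameV$, \refrule{g-left} and \refrule{g-right} only to a union, and \refrule{g-new} only to a restriction. Hence, when the source is a union $\dgraphA_1 \gunion \dgraphA_2$ the last rule used must be one of \refrule{g-left}, \refrule{g-right}, or \refrule{g-trans}; when the source is a restriction $\New\Mailbox\dgraphA$, it must be one of \refrule{g-new} or \refrule{g-trans}.

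For part 1, the \refrule{g-left} and \refrule{g-right} cases yield $\dgraph = \dgraphA_1' \gunion \dgraphA_2$ and $\dgraph = \dgraphA_1 \gunion \dgraphA_2'$ respectively, so the conclusion is immediate. The \refrule{g-trans} case is the only one requiring the induction hypothesis: the derivation has premises $\dgraphA_1 \gunion \dgraphA_2 \gred{\NameU}{\NameW} \dgraphB$ and $\dgraphB \gred{\NameW}{\NameV} \dgraph$. Applying the induction hypothesis to the first premise gives $\dgraphB = \dgraphB_1 \gunion \dgraphB_2$ for some $\dgraphB_1, \dgraphB_2$, after which the second premise is itself a transition out of a union, and a second application of the induction hypothesis yields $\dgraph = \dgraph_1 \gunion \dgraph_2$, as required.

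Part 2 is handled symmetrically. The \refrule{g-new} case gives $\dgraphB = \New\Mailbox\dgraph'$ directly. The \refrule{g-trans} case proceeds by first invoking the induction hypothesis on the premise $\New\Mailbox\dgraphA \gred{\NameU}{\NameW} \dgraphC$ to obtain $\dgraphC = \New\Mailbox\dgraph''$, and then invoking it again on $\New\Mailbox\dgraph'' \gred{\NameW}{\NameV} \dgraphB$. I do not anticipate any real obstacle beyond bookkeeping; the essential point is that the induction must be chained across both premises of \refrule{g-trans}, using the conclusion of the first application to enable the second. Together the two statements formalize the intuition that the transition relation of Table~\ref{tab:dgraph} never alters the top-level constructor of its source graph, which is precisely what is needed to lift properties of $\grel\cdot$ compositionally through $\gunion$ and $\New{\cdot}$.
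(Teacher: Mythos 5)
Your proof is correct and follows the same route as the paper, which simply states that the result follows by a straightforward induction on the derivation of the transition. Your case analysis is right, and you correctly identify the one point of substance: in the \refrule{g-trans} case the induction hypothesis must be applied to the first premise to learn that the intermediate graph has the required shape before it can be applied to the second.
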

\begin{proof}
  A straightforward induction on the derivation of the transition.
\end{proof}

Because of the previous result, in the proofs that follow we only
consider transitions where the structure of the dependency graph is
preserved.

\begin{proposition}
  \label{prop:gred_left}
  If
  $\dgraphA_1 \gunion \dgraphA_2 \gred\NameU\NameV \dgraphA_1'
  \gunion \dgraphA_2'$, then
  $(\dgraphB \gunion \dgraphA_1) \gunion \dgraphA_2
  \gred\NameU\NameV (\dgraphB \gunion \dgraphA_1') \gunion
  \dgraphA_2'$.
\end{proposition}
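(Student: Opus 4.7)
The plan is to proceed by induction on the derivation of the hypothesis transition $\dgraphA_1 \gunion \dgraphA_2 \gred\NameU\NameV \dgraphA_1' \gunion \dgraphA_2'$, relying on the previous proposition (``structure preserving transitions'') to know that the two sides of the $\gunion$ in the LHS correspond after the transition. Since the left-hand side has the shape $\dgraphA_1 \gunion \dgraphA_2$, neither \refrule{g-axiom} (whose conclusion has the shape $\gedge\NameU\NameV$) nor \refrule{g-new} (whose conclusion starts with $\New\Mailbox\cdot$) can be the last rule applied, so I only need to inspect the three cases \refrule{g-left}, \refrule{g-right} and \refrule{g-trans}.

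In the \refrule{g-left} case, we have $\dgraphA_2' = \dgraphA_2$ and $\dgraphA_1 \gred\NameU\NameV \dgraphA_1'$ as the premise. I first apply \refrule{g-right} to this premise to derive $\dgraphB \gunion \dgraphA_1 \gred\NameU\NameV \dgraphB \gunion \dgraphA_1'$, and then apply \refrule{g-left} to get $(\dgraphB \gunion \dgraphA_1) \gunion \dgraphA_2 \gred\NameU\NameV (\dgraphB \gunion \dgraphA_1') \gunion \dgraphA_2$, which is the required transition. The \refrule{g-right} case is symmetric but easier: from the premise $\dgraphA_2 \gred\NameU\NameV \dgraphA_2'$ and the equality $\dgraphA_1' = \dgraphA_1$, two nested applications of \refrule{g-right} (one for $\dgraphA_1$, one for $\dgraphB$) give the desired transition.

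The \refrule{g-trans} case splits the hypothesis into two transitions $\dgraphA_1 \gunion \dgraphA_2 \gred\NameU\NameW \dgraphA_1'' \gunion \dgraphA_2''$ and $\dgraphA_1'' \gunion \dgraphA_2'' \gred\NameW\NameV \dgraphA_1' \gunion \dgraphA_2'$, where structure preservation is what ensures the intermediate graph is again of the form $\dgraphA_1'' \gunion \dgraphA_2''$. Applying the induction hypothesis to both premises with the same $\dgraphB$ and chaining the results through \refrule{g-trans} yields the conclusion.

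The only mildly delicate point is making sure the induction is well-founded: since \refrule{g-trans} has two premises of unspecified depth, the induction must be on the height (or any well-founded measure) of the derivation tree, not just on its immediate structure. Once that is fixed, each case is a routine re-application of the same rules with $\dgraphB$ carried along, so I do not anticipate any real obstacle.
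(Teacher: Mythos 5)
Your proof is correct and follows essentially the same route as the paper's: induction on the derivation, with \refrule{g-left} handled by one application of \refrule{g-right} followed by one of \refrule{g-left}, and \refrule{g-trans} handled by applying the induction hypothesis to both premises (using structure preservation for the intermediate graph) and recomposing with \refrule{g-trans}. The only nitpick is in the \refrule{g-right} case: a single application of \refrule{g-right} with left component $\dgraphB \gunion \dgraphA_1$ already yields the conclusion, so the ``two nested applications'' you describe are not needed.
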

\begin{proof}
  By induction on the derivation of
  $\dgraphA_1 \gunion \dgraphA_2 \gred\NameU\NameV \dgraphA_1'
  \gunion \dgraphA_2'$ and by cases on the last rule applied. We
  omit the discussion of \refrule{g-right}, which is symmetric to
  \refrule{g-left}.

  \proofrule{g-left}
  Then $\dgraphA_1 \gred\NameU\NameV \dgraphA_1'$ and
  $\dgraphA_2 = \dgraphA_2'$.
  We conclude by applying \refrule{g-right}, then \refrule{g-left}.

  \proofrule{g-trans}
  Then
  $\dgraphA_1 \gunion \dgraphA_2 \gred\NameU\NameW \dgraphA_1''
  \gunion \dgraphA_2'' \gred\NameW\NameV \dgraphA_1' \gunion
  \dgraphA_2'$.
  Using the induction hypothesis we derive
  $(\dgraphB \gunion \dgraphA_1) \gunion \dgraphA_2
  \gred\NameU\NameW (\dgraphB \gunion \dgraphA_1'') \gunion
  \dgraphA_2'' \gred\NameW\NameV (\dgraphB \gunion \dgraphA_1')
  \gunion \dgraphA_2'$ and we conclude with one application of
  \refrule{g-trans}.
\end{proof}

\begin{proposition}
  \label{prop:gred_gunion}
  If
  $\dgraph_1 \gunion (\dgraph_2 \gunion \dgraph_3) \gred\NameU\NameV
  \dgraph_1' \gunion (\dgraph_2' \gunion \dgraph_3')$, then
  $(\dgraph_1 \gunion \dgraph_2) \gunion \dgraph_3 \gred\NameU\NameV
  (\dgraph_1' \gunion \dgraph_2') \gunion \dgraph_3'$.
\end{proposition}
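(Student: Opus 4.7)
The plan is to proceed by induction on the derivation of the transition $\dgraph_1 \gunion (\dgraph_2 \gunion \dgraph_3) \gred\NameU\NameV \dgraph_1' \gunion (\dgraph_2' \gunion \dgraph_3')$, performing a case analysis on the last rule applied. I would first invoke the preceding structure-preservation proposition twice (once for the outer union and once for the inner one) to make sure that every intermediate graph reached along the derivation also has the nested shape $\dgraph_1'' \gunion (\dgraph_2'' \gunion \dgraph_3'')$, so that the induction hypothesis can be applied to sub-derivations without destructuring surprises. The case \refrule{g-axiom} is vacuous because the left-hand side is a union rather than a single edge, and the case \refrule{g-new} does not apply for the same reason.

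The routine cases are \refrule{g-left} and \refrule{g-right}. If the derivation ends with \refrule{g-left}, then $\dgraph_1 \gred\NameU\NameV \dgraph_1'$ with $\dgraph_2 = \dgraph_2'$ and $\dgraph_3 = \dgraph_3'$, and two nested applications of \refrule{g-left} on the re-associated graph $(\dgraph_1 \gunion \dgraph_2) \gunion \dgraph_3$ yield the conclusion. If it ends with \refrule{g-right}, I would do a further case analysis on the derivation of $\dgraph_2 \gunion \dgraph_3 \gred\NameU\NameV \dgraph_2' \gunion \dgraph_3'$: when this comes from \refrule{g-left}, I apply \refrule{g-right} at the inner level and \refrule{g-left} at the outer; when it comes from \refrule{g-right}, two nested applications of \refrule{g-right} suffice; when it comes from \refrule{g-trans}, I apply the induction hypothesis to each sub-transition (which I can, since each has the same nested-union shape on the left and the right) and recompose with \refrule{g-trans}.

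The genuinely interesting case is \refrule{g-trans} applied at the outer level, splitting the derivation as
\[
  \dgraph_1 \gunion (\dgraph_2 \gunion \dgraph_3)
  \gred\NameU\NameW
  \dgraph_1'' \gunion (\dgraph_2'' \gunion \dgraph_3'')
  \gred\NameW\NameV
  \dgraph_1' \gunion (\dgraph_2' \gunion \dgraph_3').
\]
Structure preservation (applied at both nesting levels) guarantees that the intermediate graph really does have the displayed shape; the induction hypothesis then applies to each of the two shorter sub-derivations, yielding corresponding transitions on $(\dgraph_1 \gunion \dgraph_2) \gunion \dgraph_3$ through the intermediate $(\dgraph_1'' \gunion \dgraph_2'') \gunion \dgraph_3''$, which are recomposed via a final application of \refrule{g-trans}.

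The main obstacle is precisely this bookkeeping around \refrule{g-trans}: without knowing that the intermediate graph factors in the same nested way as the endpoints, the induction hypothesis cannot be invoked. The preceding structure-preservation proposition is exactly what handles this, and once it is in place the rest of the argument is a mechanical structural induction analogous to the proof of Proposition~\ref{prop:gred_left}.
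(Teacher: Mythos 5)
Your overall strategy --- induction on the derivation with a case split on the last rule, leaning on the structure-preservation proposition to keep every intermediate graph in the required nested form, and handling \refrule{g-left} and the outer \refrule{g-trans} exactly as the paper does --- is the paper's proof. The one place where you diverge is the \refrule{g-right} case, and there your argument as written does not go through. When the inner derivation of $\dgraph_2 \gunion \dgraph_3 \gred\NameU\NameV \dgraph_2' \gunion \dgraph_3'$ ends with \refrule{g-trans}, its two sub-transitions have the shape $\dgraph_2 \gunion \dgraph_3 \gred\NameU\NameW \dgraph_2'' \gunion \dgraph_3''$, i.e.\ a \emph{binary} union --- not the ternary nested shape $\dgraph_1 \gunion (\dgraph_2 \gunion \dgraph_3)$ that the proposition (and hence your induction hypothesis) is about. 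So the parenthetical claim that ``each has the same nested-union shape'' is false, and the IH cannot be invoked on these sub-transitions directly. You could repair this either by first re-wrapping each sub-derivation with \refrule{g-right} and switching to induction on the size of the derivation (so that the re-wrapped derivations are still strictly smaller than the original), or --- much more simply --- by doing what the paper does: the entire \refrule{g-right} case is a single instance of Proposition~\ref{prop:gred_left} (take $\dgraphB \eqdef \dgraph_1$, $\dgraphA_1 \eqdef \dgraph_2$, $\dgraphA_2 \eqdef \dgraph_3$, and use $\dgraph_1 = \dgraph_1'$), the very proposition you cite at the end as the ``analogous'' mechanical induction. That lemma was proved precisely so that this case could be discharged in one line, with no inner case analysis at all.
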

\begin{proof}
  By induction on the derivation of
  $\dgraph_1 \gunion (\dgraph_2 \gunion \dgraph_3) \gred\NameU\NameV
  \dgraph_1' \gunion (\dgraph_2' \gunion \dgraph_3')$ and by cases
  on the last rule applied.

  \proofrule{g-left}
  Then $\dgraph_1 \gred\NameU\NameV \dgraph_1'$ and
  $\dgraph_2 = \dgraph_2'$ and $\dgraph_3 = \dgraph_3'$.
  We conclude with two applications of \refrule{g-left}.

  \proofrule{g-right}
  Then $\dgraph_1 = \dgraph_1'$ and
  $\dgraph_2 \gunion \dgraph_3 \gred\NameU\NameV \dgraph_2' \gunion
  \dgraph_3'$.
  We conclude by Proposition~\ref{prop:gred_left}.

  \proofrule{g-trans}
  Then
  $\dgraph_1 \gunion (\dgraph_2 \gunion \dgraph_3) \gred\NameU\NameW
  \dgraph_1'' \gunion (\dgraph_2'' \gunion \dgraph_3'')
  \gred\NameW\NameV \dgraph_1' \gunion (\dgraph_2' \gunion
  \dgraph_3')$.
  Using the induction hypothesis we deduce
  $(\dgraph_1 \gunion \dgraph_2) \gunion \dgraph_3 \gred\NameU\NameW
  (\dgraph_1'' \gunion \dgraph_2'') \gunion \dgraph_3''
  \gred\NameW\NameV (\dgraph_1' \gunion \dgraph_2') \gunion
  \dgraph_3'$.
  We conclude with one application of \refrule{g-trans}.
\end{proof}

\begin{proposition}
  \label{prop:gred_new}
  If
  $\New\Mailbox\dgraph_1 \gunion \dgraph_2 \gred\NameU\NameV
  \New\Mailbox\dgraph_1' \gunion \dgraph_2'$ and
  $\Mailbox\not\in\fn(\dgraph_2)$, then
  $\New\Mailbox(\dgraph_1 \gunion \dgraph_2) \gred\NameU\NameV
  \New\Mailbox(\dgraph_1' \gunion \dgraph_2')$.
\end{proposition}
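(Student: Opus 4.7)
The plan is induction on the derivation of the hypothesis $\New\Mailbox\dgraph_1 \gunion \dgraph_2 \gred\NameU\NameV \New\Mailbox\dgraph_1' \gunion \dgraph_2'$, with case analysis on the last rule applied. The axiom \refrule{g-axiom} is vacuously impossible since the source is a union. For the two congruence rules \refrule{g-left} and \refrule{g-right} I essentially need to swap the order of applications of \refrule{g-new} and of these rules: once the side condition $\Mailbox \ne \NameU,\NameV$ is in hand, I would simply apply \refrule{g-left} (respectively \refrule{g-right}) and then wrap the result with \refrule{g-new}. In the \refrule{g-left} case the side condition comes out by inverting the sub-derivation of $\New\Mailbox\dgraph_1 \gred\NameU\NameV \New\Mailbox\dgraph_1'$, which must be built from \refrule{g-new} (possibly combined via \refrule{g-trans}); in the \refrule{g-right} case it follows from the hypothesis $\Mailbox \notin \fn(\dgraph_2)$ together with the small auxiliary fact, provable by a one-line induction on the derivation of $\gred$, that the labels of a transition are always free in its source.

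The hard case is \refrule{g-trans}, where the derivation factors through an intermediate graph $\dgraph^*$ as $\New\Mailbox\dgraph_1 \gunion \dgraph_2 \gred\NameU\NameW \dgraph^* \gred\NameW\NameV \New\Mailbox\dgraph_1' \gunion \dgraph_2'$. To invoke the induction hypothesis on the two sub-derivations I first need to know that $\dgraph^* = \New\Mailbox\dgraph_1^* \gunion \dgraph_2^*$ for some $\dgraph_1^*,\dgraph_2^*$. The two structure-preservation clauses stated at the beginning of this subsection give existential decompositions separately but not the combined ``skeleton-preserving'' statement I need; I therefore plan to establish the latter as an explicit lemma, by a routine induction on $\gred$ showing that a transition can only turn some $\gedge{\cdot}{\cdot}$ leaves into $\gempty$ while leaving every $\gunion$ and every $\New\Mailbox$ constructor in place. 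The same lemma also implies $\fn(\dgraph_2^*) \subseteq \fn(\dgraph_2)$, so $\Mailbox \notin \fn(\dgraph_2^*)$ continues to hold and the induction hypothesis is applicable to both halves of the transitivity premise.

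With these ingredients the \refrule{g-trans} case closes immediately: two uses of the induction hypothesis give $\New\Mailbox(\dgraph_1\gunion\dgraph_2)\gred\NameU\NameW \New\Mailbox(\dgraph_1^*\gunion\dgraph_2^*)$ and $\New\Mailbox(\dgraph_1^*\gunion\dgraph_2^*)\gred\NameW\NameV \New\Mailbox(\dgraph_1'\gunion\dgraph_2')$, and a single application of \refrule{g-trans} delivers the desired transition. The only real obstacle, as anticipated, is the skeleton-preservation lemma needed to decompose $\dgraph^*$ in the transitivity case; everything else amounts to mechanical rearrangement of rule applications.
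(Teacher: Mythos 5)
Your proposal is correct and follows essentially the same route as the paper: induction on the derivation, swapping the order of \refrule{g-new} with \refrule{g-left}/\refrule{g-right} in the congruence cases (extracting $\Mailbox\ne\NameU,\NameV$ by inversion on the left, and from $\Mailbox\not\in\fn(\dgraph_2)$ on the right), and a double use of the induction hypothesis followed by \refrule{g-trans} in the transitivity case. The only difference is that you make explicit the ``skeleton-preserving'' decomposition of the intermediate graph (and the preservation of $\Mailbox\not\in\fn(\cdot)$), which the paper delegates to its preceding structure-preservation proposition and the accompanying convention that only structure-preserving transitions need be considered.
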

\begin{proof}
  By induction on the derivation of
  $\New\Mailbox\dgraph_1 \gunion \dgraph_2 \gred\NameU\NameV
  \New\Mailbox\dgraph_1' \gunion \dgraph_2'$ and by cases on the
  last rule applied. We do not discuss rule \refrule{g-right}, which
  is straightforward.

  \proofrule{g-left}
  Then
  $\New\Mailbox\dgraph_1 \gred\NameU\NameV \New\Mailbox\dgraph_1'$
  and $\dgraph_2 = \dgraph_2'$.
  From \refrule{g-new} we deduce
  $\dgraph_1 \gred\NameU\NameV \dgraph_1'$ and
  $\Mailbox\ne\NameU,\NameV$.
  We conclude with one application of \refrule{g-left} and one
  application of \refrule{g-new}.

  \proofrule{g-trans}
  Then
  $\New\Mailbox\dgraph_1 \gunion \dgraph_2 \gred\NameU\NameW
  \New\Mailbox\dgraph_1'' \gunion \dgraph_2'' \gred\NameW\NameV
  \New\Mailbox\dgraph_1' \gunion \dgraph_2'$.
  From the induction hypothesis we deduce
  $\New\Mailbox(\dgraph_1 \gunion \dgraph_2) \gred\NameU\NameW
  \New\Mailbox(\dgraph_1'' \gunion \dgraph_2'') \gred\NameW\NameV
  \New\Mailbox(\dgraph_1' \gunion \dgraph_2')$.
  We conclude with one application of \refrule{g-trans}.
\end{proof}

\begin{proposition}
  \label{prop:grel}
  The following properties hold:
  \begin{enumerate}
  \item\label{item:grel_empty}
    $\grel{\gempty \gunion \dgraph} = \grel\dgraph$
  \item\label{item:grel_comm}
    $\grel{\dgraph_1 \gunion \dgraph_2} = \grel{\dgraph_2 \gunion
      \dgraph_1}$
  \item\label{item:grel_assoc}
    $\grel{\dgraph_1 \gunion (\dgraph_2 \gunion \dgraph_3)} =
    \grel{(\dgraph_1 \gunion \dgraph_2) \gunion \dgraph_3}$
  \item\label{item:grel_new} If $\Mailbox\not\in\fn(\dgraph_2)$,
    then
    $\grel{\New\Mailbox\dgraph_1 \gunion \dgraph_2} =
    \grel{\New\Mailbox(\dgraph_1 \gunion \dgraph_2)}$.
  \end{enumerate}
\end{proposition}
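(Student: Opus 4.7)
The plan is to unpack each equality using the operational characterisation $\grel\dgraph = \set{\pair\NameU\NameV \mid \exists\dgraph': \dgraph \gred\NameU\NameV \dgraph'}$, so that it suffices to show, for each inclusion, that any transition $\gred\NameU\NameV$ derivable from one graph can be mimicked from the other. Throughout I would rely on the preceding Propositions~\ref{prop:gred_left}, \ref{prop:gred_gunion} and \ref{prop:gred_new}, which are exactly the structural lemmas bridging the two sides of each equality, and on the structure-preservation observation that a transition from $\dgraphA_1 \gunion \dgraphA_2$ always decomposes into transitions of $\dgraphA_1$ and $\dgraphA_2$, and similarly for graphs under a restriction.

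For~\ref{item:grel_empty}, the inclusion $\grel\dgraph \subseteq \grel{\gempty \gunion \dgraph}$ is immediate by one application of \refrule{g-right}. The converse requires showing that $\gempty$ can never contribute to a path: a straightforward induction on the derivation of $\gempty \gunion \dgraph \gred\NameU\NameV \dgraphA'$ shows that every premise involving $\gempty$ (whether through \refrule{g-left} or inside a chain built by \refrule{g-trans}) is vacuous, since $\gempty$ admits no axiom \refrule{g-axiom}. For~\ref{item:grel_comm}, the rules \refrule{g-left} and \refrule{g-right} are perfectly symmetric in the two summands, so a transition from $\dgraph_1 \gunion \dgraph_2$ can be converted into one from $\dgraph_2 \gunion \dgraph_1$ by swapping the two rules throughout the derivation; this gives both inclusions at once.

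For~\ref{item:grel_assoc}, one inclusion is exactly Proposition~\ref{prop:gred_gunion}; the other inclusion follows by the same proof but with the roles of the two associativity brackets exchanged, which goes through mutatis mutandis since the statement and proof of Proposition~\ref{prop:gred_gunion} are themselves essentially symmetric under re-bracketing. For~\ref{item:grel_new}, Proposition~\ref{prop:gred_new} yields $\grel{\New\Mailbox\dgraph_1 \gunion \dgraph_2} \subseteq \grel{\New\Mailbox(\dgraph_1 \gunion \dgraph_2)}$; the converse is obtained by the dual induction on the derivation of $\New\Mailbox(\dgraph_1 \gunion \dgraph_2) \gred\NameU\NameV \dgraphA'$, using that \refrule{g-new} forbids $\Mailbox$ from appearing in the label and that $\Mailbox \not\in \fn(\dgraph_2)$ ensures every sub-path lying within $\dgraph_2$ keeps its free names disjoint from $\Mailbox$, so the scope of $\Mailbox$ can safely be pushed to enclose only $\dgraph_1$.

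The main obstacle is the converse direction of~\ref{item:grel_new}: one must verify that when \refrule{g-trans} is used to chain together sub-paths that weave back and forth between $\dgraph_1$ and $\dgraph_2$ through shared endpoints, the bound name $\Mailbox$ (which may be a genuine intermediate vertex $\NameW$ in some of these sub-paths) can be hidden under a single $\New\Mailbox$ on the outer side without invalidating \refrule{g-new}'s side condition $\Mailbox \ne \NameU,\NameV$. The freshness assumption $\Mailbox \not\in \fn(\dgraph_2)$ is exactly what makes this rearrangement legal, and the induction proceeds case-by-case as in the proof of Proposition~\ref{prop:gred_new}.
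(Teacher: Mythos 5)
Your proposal is correct and follows the paper's own route: the paper's proof simply declares items~\ref{item:grel_empty} and~\ref{item:grel_comm} trivial and derives items~\ref{item:grel_assoc} and~\ref{item:grel_new} from Propositions~\ref{prop:gred_gunion} and~\ref{prop:gred_new}. You supply the same reduction but are more explicit about the converse inclusions --- in particular the re-association of \refrule{g-trans} chains around the bound name in item~\ref{item:grel_new} --- which the paper leaves implicit.
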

\begin{proof}
  Items~\ref{item:grel_empty} and~\ref{item:grel_comm} are trivial.
  Items~\ref{item:grel_assoc} and~\ref{item:grel_new} respectively
  follow from Proposition~\ref{prop:gred_gunion} and
  Proposition~\ref{prop:gred_new}.
\end{proof}


\subsection{Properties of type environments}
\label{sec:prop_type_environments}

\begin{proposition}
  \label{prop:env_subt_reliable}
  If $\ContextA \subt \ContextB$, then $\ContextA$ reliable implies
  $\ContextB$ reliable and $\ContextB$ usable implies $\ContextA$
  usable.
\end{proposition}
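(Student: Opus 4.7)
The plan is to proceed by induction on the derivation of $\ContextA \subt \ContextB$, treating $\subt$ as the least preorder closed under the two displayed rules. Reflexivity is immediate, and the transitive case follows by chaining two applications of the induction hypothesis -- note that this chaining works in both directions because reliability propagates along $\subt$ while usability propagates against it, so in either conclusion we invoke the inductive hypothesis twice along the same chain. This reduces the proof to handling the two generating rules.

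For the rule $\Name : \TypeT, \Context \subt \Name : \TypeS, \Context$ premised on $\TypeT \subt \TypeS$, both directions are a direct appeal to Proposition~\ref{prop:subt_reliable_usable} applied to the type components, together with the trivial observation that the tail $\Context$ is common to the two environments. For the rule $\Name : \Out\tone, \Context \subt \Context$, the reliability direction is immediate: if the larger environment is reliable, every binding it contains is reliable, hence so is each binding of $\Context$.

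The one step I would flag as the main obstacle -- albeit a small one -- is the usability direction of the rule $\Name : \Out\tone, \Context \subt \Context$: starting from $\Context$ usable, I must also certify that the dropped binding $\Name : \Out\tone$ is usable in order to conclude that the larger environment $\Name : \Out\tone, \Context$ is usable. This amounts to verifying $\Out\tzero \not\subt \Out\tone$, which by the contravariant clause of Definition~\ref{def:subt} would require $\tone \subp \tzero$; but $\bag{} \in \sem\tone$ while $\sem\tzero = \emptyset$, so no such subpattern inclusion can hold and the case closes. With this calculation in hand, all four subcases combine to yield the two implications of the statement.
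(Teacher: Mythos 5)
Your proof is correct and takes essentially the same route as the paper, which simply declares the result ``immediate from Proposition~\ref{prop:subt_reliable_usable}''; your induction on the environment-subtyping derivation is just the explicit unpacking of that claim, with the case analysis on the two generating rules done correctly. The only detail you add beyond the paper's one-liner is the verification that $\Out\tone$ is usable in the weakening case, which is right (and also follows from the paper's standing assumption that all types are usable).
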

\begin{proof}
  Immediate from Proposition~\ref{prop:subt_reliable_usable}.
\end{proof}

\begin{proposition}
  \label{prop:cmul_subt}
  If both $\ContextA_1 \cmul \ContextA_2$ and
  $\ContextB_1, \ContextB_2$ are defined and
  $\ContextA_i \subt \ContextB_i$, then
  $\ContextA_1 \cmul \ContextA_2 \subt \ContextB_1, \ContextB_2$.
\end{proposition}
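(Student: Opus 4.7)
The plan is to reduce Proposition~\ref{prop:cmul_subt} to a pointwise analysis on types. First I would record two simple facts about the environment ordering: if $\ContextA \subt \ContextB$ then $\dom(\ContextA) \supseteq \dom(\ContextB)$, every $\Name \in \dom(\ContextB)$ satisfies $\ContextA(\Name) \subt \ContextB(\Name)$, and every $\Name \in \dom(\ContextA) \setminus \dom(\ContextB)$ satisfies $\ContextA(\Name) \subt \Out\tone$. Both facts follow by an easy induction on the derivation of $\subt$. Type combination on environments is also pointwise: $(\ContextA_1 \cmul \ContextA_2)(\Name)$ equals $\ContextA_i(\Name)$ if only one of the two environments contains $\Name$, and equals $\ContextA_1(\Name) \cmul \ContextA_2(\Name)$ if both do. The statement therefore reduces to a per-name comparison, plus a final reassembly using the two rules defining $\subt$ on environments.

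The crux of the proof is an auxiliary lemma: \emph{if $\TypeT \subt \TypeS$, $\TypeR \subt \Out\tone$, and $\TypeT \cmul \TypeR$ is defined, then $\TypeT \cmul \TypeR \subt \TypeS$}. I would prove it by splitting on the direction of $\TypeT$. If $\TypeT = \Out\PatternE$, then necessarily $\TypeS = \Out\PatternE'$ with $\PatternE' \subp \PatternE$ and $\TypeR = \Out\PatternF$ with $\tone \subp \PatternF$, so $\TypeT \cmul \TypeR = \Out(\PatternE \tmul \PatternF)$; from $\PatternE' \subp \PatternE \eqp \PatternE \tmul \tone \subp \PatternE \tmul \PatternF$ we conclude $\Out(\PatternE \tmul \PatternF) \subt \Out\PatternE' = \TypeS$. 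If $\TypeT = \In\PatternE$, then $\TypeS = \In\PatternE''$ with $\PatternE \subp \PatternE''$; definedness of $\TypeT \cmul \TypeR$ forces a decomposition $\PatternE \eqp \PatternF \tmul \Pattern'$, yielding $\TypeT \cmul \TypeR = \In\Pattern'$, and the chain $\Pattern' \eqp \tone \tmul \Pattern' \subp \PatternF \tmul \Pattern' \eqp \PatternE \subp \PatternE''$ gives $\In\Pattern' \subt \In\PatternE''$.

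Armed with this lemma, the main argument is a case split on each name $\Name$ in $\dom(\ContextA_1 \cmul \ContextA_2)$. For $\Name \in \dom(\ContextB_1)$ (the case $\Name \in \dom(\ContextB_2)$ being symmetric), $\Name$ lies in $\dom(\ContextA_1)$ with $\ContextA_1(\Name) \subt \ContextB_1(\Name)$; if $\Name \notin \dom(\ContextA_2)$ the pointwise comparison is immediate, otherwise $\Name \notin \dom(\ContextB_2)$ forces $\ContextA_2(\Name) \subt \Out\tone$ and the auxiliary lemma yields $\ContextA_1(\Name) \cmul \ContextA_2(\Name) \subt \ContextB_1(\Name)$. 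For $\Name \in \dom(\ContextA_1 \cmul \ContextA_2) \setminus \dom(\ContextB_1, \ContextB_2)$, every $\ContextA_i(\Name)$ that is defined is a subtype of $\Out\tone$, so it has the shape $\Out\Pattern_i$ with $\tone \subp \Pattern_i$; both a single such type and the product $\Out(\Pattern_1 \tmul \Pattern_2)$ of two such types are again subtypes of $\Out\tone$.

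The main obstacle I expect is the $\In$ case of the auxiliary lemma: one must exploit the mere \emph{assumption} that $\TypeT \cmul \TypeR$ is defined to extract a decomposition $\PatternE \eqp \PatternF \tmul \Pattern'$, and then combine this with $\tone \subp \PatternF$ and the monotonicity of $\tmul$ under $\subp$ to line up the residual $\Pattern'$ with the target supertype $\PatternE''$; no such decomposition is guaranteed to exist a priori, so the hypothesis on definedness of the combination is used in an essential way.
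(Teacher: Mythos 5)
Your proof is correct and follows essentially the same route as the paper's: a pointwise, per-name analysis whose key computations ($\PatternE' \subp \PatternE \tmul \tone \subp \PatternE \tmul \PatternF$ for the output case, $\Pattern' \subp \PatternF \tmul \Pattern' \subp \PatternE''$ for the input case, exploiting definedness of $\cmul$ to decompose the input pattern) match the paper's first and third cases almost verbatim, and packaging them as an auxiliary lemma about combining with an irrelevant type is just a cleaner organization. The only divergence is that the paper additionally works through two mixed-capability cases (e.g.\ $\ContextA_2(\Name) = \In(\PatternE \tmul \PatternF)$ with $\Name$ dropped from $\ContextB_2$, forcing $\In(\PatternE\tmul\PatternF) \subt \Out\tone$) that your capability-preserving case split omits; since Definition~\ref{def:subt} never relates an input type to an output type, those cases are vacuous and the omission is harmless.
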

\begin{proof}
  We discuss a few notable cases when
  $\Name \in (\dom(\ContextA_1) \cap \dom(\ContextA_2) \cap
  \dom(\ContextB_1)) \setminus \dom(\ContextB_1)$.

  \proofcase{$\ContextA_1(\Name) = \Out\PatternE$ and
    $\ContextA_2(\Name) = \Out\PatternF$ and
    $\ContextB_1(\Name) = \Out\PatternG$}
  Then $\Out\PatternE \subt \Out\PatternG$ and
  $\Out\PatternF \subt \Out\tone$, which means
  $\PatternG \subp \PatternE$ and $\tone \subp \PatternF$.
  From the properties of $\subp$ we deduce
  $\PatternG \subp \PatternG \tmul \tone \subp \PatternE \tmul
  \PatternF$.
  We conclude
  $(\Context_1 \cmul \Context_2)(\Name) = \Out\parens{\PatternE
    \tmul \PatternF} \subt \Out\PatternG = (\ContextB_1,
  \ContextB_2)(\Name)$.

  \proofcase{$\ContextA_1(\Name) = \Out\PatternE$ and
    $\ContextA_2(\Name) = \In\parens{\PatternE \tmul \PatternF}$
    and $\ContextB_1(\Name) = \Out\PatternG$}
  Then $\Out\PatternE \subt \Out\PatternG$ and
  $\In\parens{\PatternE \tmul \PatternF} \subt \Out\tone$, which
  means $\PatternG \subp \PatternE$ and
  $\PatternE \tmul \PatternF \tmul \tone \subp \PatternE \tmul
  \PatternF \subp \tone$.
  From the properties of $\subp$ we deduce
  $\PatternG \subp \PatternE \subp \tone$ and
  $\PatternF \subp \tone$.
  We conclude
  $\parens{\Context_1 \cmul \Context_2}(\Name) = \In\PatternF \subt
  \Out\PatternG = (\ContextB_1, \ContextB_2)(\Name)$.

  \proofcase{$\ContextA_1(\Name) = \In\parens{\PatternE \tmul
      \PatternF}$ and $\ContextA_2(\Name) = \Out\PatternE$ and
    $\ContextB_1(\Name) = \In\PatternG$}
  Then $\In\parens{\PatternE \tmul \PatternF} \subt \In\PatternG$
  and $\Out\PatternE \subt \Out\tone$, which means
  $\PatternE \tmul \PatternF \subp \PatternG$ and
  $\tone \subp \PatternE$.
  From the properties of $\subp$ we deduce
  $\PatternF \subp \tone \tmul \PatternF \subp \PatternE \tmul
  \PatternF \subp \PatternG$.
  We conclude
  $\parens{\ContextA_1 \cmul \ContextA_2}(\Name) = \In\PatternF
  \subt \In\PatternG = (\ContextB_1, \ContextB_2)(\Name)$.

  \proofcase{$\ContextA_1(\Name) = \In\parens{\PatternE \tmul
      \PatternF}$ and $\ContextA_2(\Name) = \Out\PatternE$ and
    $\ContextB_1(\Name) = \Out\PatternG$}
  Then $\In(\PatternE \tmul \PatternF) \subt \Out\PatternG$ and
  $\Out\PatternE \subt \Out\tone$, which means
  $\PatternE \tmul \PatternF \tmul \PatternG \subp \tone$ and
  $\tone \subp \PatternE$.
  From the properties of $\subp$ we deduce
  $\PatternF \tmul \PatternG \subp \tone$.
  We conclude
  $\parens{\ContextA_1 \cmul \ContextA_2}(\Name) = \In\PatternF
  \subt \Out\PatternG = (\ContextB_1, \ContextB_2)(\Name)$.
\end{proof}

\begin{proposition}
  \label{prop:env_mul_reliable}
  If $\ContextA$ is usable and $\ContextA \cmul \ContextB$ is
  reliable, then $\ContextB$ is reliable.
\end{proposition}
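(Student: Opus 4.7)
The plan is to prove the contrapositive name-by-name: assuming some $\Name \in \dom(\ContextB)$ has an unreliable type $\ContextB(\Name)$, deduce that either $\ContextA$ is not usable at some name or $\ContextA \cmul \ContextB$ is not reliable at $\Name$. First I would dispatch the trivial configurations: if $\Name \notin \dom(\ContextA)$, then $(\ContextA \cmul \ContextB)(\Name) = \ContextB(\Name)$, so unreliability transfers directly and contradicts the second hypothesis. Similarly, if $\ContextB(\Name) = \Out\PatternG$ for some $\PatternG$, then $\ContextB(\Name)$ is automatically reliable since by Definition~\ref{def:subt} an output type can only be a subtype of another output type, and $\In\tzero$ is an input type; so this case is vacuous.

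The remaining situation is $\ContextB(\Name) = \In\PatternG$ with $\PatternG \subp \tzero$ (i.e.\ $\sem\PatternG = \emptyset$) and $\Name \in \dom(\ContextA)$. By Definition~\ref{def:type_combination}, for $\ContextA(\Name) \cmul \ContextB(\Name)$ to be defined with $\ContextB(\Name)$ an input we must have $\ContextA(\Name) = \Out\PatternE$ and $\PatternG$ of the form $\PatternE \tmul \PatternF$, with $(\ContextA \cmul \ContextB)(\Name) = \In\PatternF$. The goal is then to show that at least one of $\PatternE$ or $\PatternF$ has empty semantics, since from this we can conclude either that $\Out\PatternE$ is unusable (because $\sem\PatternE = \emptyset$ means $\PatternE \subp \tzero$, hence $\Out\tzero \subt \Out\PatternE$) or that $\In\PatternF$ is unreliable (because $\sem\PatternF = \emptyset$ means $\PatternF \subp \tzero$, hence $\In\PatternF \subt \In\tzero$), in either case contradicting the hypotheses.

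The core observation, which is the main step, is the following fact about product patterns: $\sem{\PatternE \tmul \PatternF} = \emptyset$ if and only if $\sem\PatternE = \emptyset$ or $\sem\PatternF = \emptyset$. The forward direction is immediate from Definition~\ref{def:subp}, since $\sem{\PatternE \tmul \PatternF} = \set{\BagA \bunion \BagB \mid \BagA \in \sem\PatternE, \BagB \in \sem\PatternF}$: if either factor has a configuration and the other does too, so does the product; contrapositively, if the product is empty, one factor must be. Applied to our situation, $\sem{\PatternE \tmul \PatternF} = \emptyset$ (which follows from $\PatternG \eqp \PatternE \tmul \PatternF$ being equivalent to $\tzero$) forces $\sem\PatternE = \emptyset$ or $\sem\PatternF = \emptyset$, finishing the case analysis.

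I do not anticipate any serious obstacle; the only subtlety is keeping the correspondence between the semantic condition $\sem\Pattern = \emptyset$ and the syntactic condition $\Pattern \subp \tzero$ (which are equivalent because $\sem\tzero = \emptyset$, so $\Pattern \subp \tzero$ holds vacuously whenever $\sem\Pattern = \emptyset$), and pushing this through the definitions of reliable and usable types on top of Definition~\ref{def:subt} for input/output subtyping. Everything else is routine unfolding of the three clauses of Definition~\ref{def:type_combination}.
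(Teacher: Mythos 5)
Your proof is correct and rests on exactly the same core observation as the paper's: unfolding Definition~\ref{def:type_combination} to see that the only interesting case is $\ContextA(\Name) = \Out\PatternE$, $\ContextB(\Name) = \In(\PatternE \tmul \PatternF)$ with combination $\In\PatternF$, and then using the fact that $\sem{\PatternE \tmul \PatternF}$ is empty iff one of $\sem\PatternE$, $\sem\PatternF$ is. The paper argues the direct implication ($\PatternE \not\subp \tzero$ and $\PatternF \not\subp \tzero$ imply $\PatternE \tmul \PatternF \not\subp \tzero$) where you argue the contrapositive, but this is the same proof.
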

\begin{proof}
  We discuss only the interesting case when
  $\ContextA(\Name) = \Out\PatternE$ and
  $\ContextB(\Name) = \In(\PatternE \tmul \PatternF)$, the others
  being symmetric or simpler.
  Then $(\ContextA \cmul \ContextB)(\Name) = \In\PatternF$.
  From the hypothesis that $\ContextA$ is usable we deduce
  $\Out\tzero \not\subt \Out\PatternE$, that is
  $\PatternE \not\subp \tzero$.
  From the hypothesis that $\ContextA \cmul \ContextB$ is reliable
  we deduce $\In\PatternF \not\subt \In\tzero$, that is
  $\PatternF \not\subp \tzero$.
  We conclude $\PatternE \tmul \PatternF \not\subp \tzero$.
\end{proof}



\section{Proof of Theorem~\ref{thm:sr}}
\label{sec:proof_type_preservation}

\begin{lemma}
  \label{lem:subst}
  If $\wtp{\Context, \Var : \Type}\Process\dgraph$ and
  $\Mailbox \not\in\dom(\Context)$, then
  $\wtp{\Context, \Mailbox :
    \Type}{\Process\subst\Mailbox\Var}{\dgraph\subst\Mailbox\Var}$.
\end{lemma}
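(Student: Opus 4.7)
The plan is to proceed by induction on the derivation of $\wtp{\Context, \Var : \Type}\Process\dgraph$, with a case analysis on the last typing rule applied. A key preliminary observation is that mailbox types contain no names (patterns range only over tags and atomic messages), so the substitution $\subst\Mailbox\Var$ acts purely on process syntax and on the dependency graph, never on types. In particular, all type environment relations ($\subt$, $\cmul$, reliability, usability) and all pattern relations ($\subp$, $\eqp$, $\df{}\cdot$, $\pder\cdot\cdot$) are invariant under the renaming. This reduces each case to re-applying the same typing rule modulo the substitution.

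For the axiom rules the argument is immediate. In \refrule{t-msg}, the substituted term $\Send{\NameU\subst\Mailbox\Var}\Tag{\NamesV\subst\Mailbox\Var}$ still typechecks since each $\NameV_i$ and $\NameU$ keep their declared types after the renaming, and the generated edge $\gedge\NameU{\set\NamesV}$ is renamed uniformly. In \refrule{t-def}, the substitution simply propagates into the graph template $\dgraph\subst\Names\Vars$. For \refrule{t-new}, I assume by $\alpha$-conversion that the bound mailbox differs from both $\Var$ and $\Mailbox$, so the substitution commutes with the binder and the induction hypothesis applies to the body. The \refrule{t-sub} case invokes the induction hypothesis on the premise, then reapplies the rule using the fact that subtyping on environments and entailment on dependency graphs are both preserved under uniform renaming.

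For the inductive rules the crucial algebraic fact is that both $\cmul$ and $\gunion$ commute with substitution: $(\Context_1 \cmul \Context_2)\subst\Mailbox\Var = \Context_1\subst\Mailbox\Var \cmul \Context_2\subst\Mailbox\Var$ and $(\dgraph_1 \gunion \dgraph_2)\subst\Mailbox\Var = \dgraph_1\subst\Mailbox\Var \gunion \dgraph_2\subst\Mailbox\Var$, both by inspection. With these, \refrule{t-par} follows by applying the induction hypothesis to each subderivation. The guard rules \refrule{t-fail}, \refrule{t-free}, \refrule{t-in}, \refrule{t-branch} and the bridging rule \refrule{t-guard} are handled symmetrically; in \refrule{t-guard} one must additionally note that $\gedge\Name{\dom(\Context)}\subst\Mailbox\Var$ is $\gedge{\Name\subst\Mailbox\Var}{\dom(\Context\subst\Mailbox\Var)}$, so the generated edges are correctly relocated when $\Var$ appears in the surrounding context.

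The only nontrivial obligation is the well-formedness side condition attached to each process judgment, namely that $\dgraph\subst\Mailbox\Var$ is acyclic and that its free names lie in $\dom(\Context, \Mailbox:\Type)$. Since $\Mailbox \notin \dom(\Context)$, the substitution is effectively an injective renaming on the names occurring in $\dgraph$, and it follows (by induction on the structure of $\dgraph$, using Proposition~\ref{prop:grel}) that $\grel{\dgraph\subst\Mailbox\Var}$ is the image of $\grel\dgraph$ under $\subst\Mailbox\Var$; injective renamings preserve acyclicity. This is the main obstacle, but it is essentially bookkeeping once one has verified that the labelled transition system defining $\grel\cdot$ commutes with renamings that avoid the restricted names — the usual $\alpha$-conversion caveat applies.
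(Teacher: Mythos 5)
Your proposal is correct and matches the paper's proof, which is given simply as ``a straightforward induction on the derivation of $\wtp{\Context, \Var : \Type}\Process\dgraph$''; you have merely spelled out the routine cases. The two points you single out as worth checking --- that types contain no names, so the renaming never touches environments, and that $\Mailbox \not\in\dom(\Context)$ together with the well-formedness condition $\fn(\dgraph)\subseteq\dom(\Context,\Var:\Type)$ makes $\subst\Mailbox\Var$ an injective renaming on $\fn(\dgraph)$, hence acyclicity-preserving --- are exactly the observations that make the induction go through.
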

\begin{proof}
  A straightforward induction on the derivation of
  $\wtp{\Context, \Var : \Type}\Process\dgraph$.
\end{proof}

\begin{lemma}
  If $\wtp\Context\ProcessP\dgraph$ and
  $\ProcessP \equiv \ProcessQ$, then $\wtp\Context\ProcessQ\dgraph$.
\end{lemma}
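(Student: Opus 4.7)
The plan is to proceed by induction on the derivation of $\ProcessP \equiv \ProcessQ$, treating each of the eight axioms of structural congruence as a base case and closing under symmetry, transitivity, and the structural contexts (guard composition, parallel composition, and name restriction). The contextual closure cases reduce immediately to the induction hypothesis plus a re-application of the corresponding typing rule, so I focus on the axioms.

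For the three axioms on guards, commutativity and associativity of $\sumop$ follow from the symmetric shape of rule~\refrule{t-branch} combined with the commutativity and associativity of $\tsum$ on patterns (and hence of $\subp$ up to $\eqp$); the resulting guard can be retyped by rebuilding the derivation, possibly via rule~\refrule{t-sub} to rewrite the pattern through subtyping. For the unit axiom $\Fail\Mailbox \sumop \Guard \equiv \Guard$, the $\Fail\Mailbox$ branch forces a type of the form $\In\tzero$ on $\Mailbox$ in its own sub-derivation, and the combined type becomes $\In(\tzero \tsum \Pattern)$, which is $\eqp$-equivalent to $\In\Pattern$; thus the guard on the right-hand side is typable against the same environment.

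For the three axioms on parallel composition, I would exploit the fact that $\cmul$ on type environments is commutative and associative (directly from Definition~\ref{def:cmul} and commutativity/associativity of $\cmul$ on types), together with Proposition~\ref{prop:grel}(\ref{item:grel_comm}--\ref{item:grel_assoc}), which gives the analogous properties of $\gunion$ at the level of the dependency relation $\grel\cdot$. Since the typing rules record literal dependency graphs, after the re-derivation the resulting graph may differ syntactically from $\dgraph$ (e.g.\ $\dgraph_2 \gunion \dgraph_1$ vs $\dgraph_1 \gunion \dgraph_2$); I would close the gap using rule~\refrule{t-sub}, whose side condition $\dgraphA \gimplies \dgraphB$ is satisfied precisely because $\grel\cdot$ is insensitive to these rearrangements. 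The unit axiom $\Done \parop \Process \equiv \Process$ uses that $\Done$ is typable only as $\wtp\EmptyContext\Done\gempty$ (rule~\refrule{t-done}), so $\EmptyContext \cmul \Context = \Context$ and $\grel{\gempty \gunion \dgraph} = \grel\dgraph$ by Proposition~\ref{prop:grel}(\ref{item:grel_empty}).

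For the two axioms on name restriction, commutativity $\New\MailboxA\New\MailboxB\Process \equiv \New\MailboxB\New\MailboxA\Process$ is immediate since both derivations extend $\Context$ by $\MailboxA : \In\tone$ and $\MailboxB : \In\tone$ in either order and produce the graph $\New\MailboxA\New\MailboxB\dgraph$, which is identified with $\New\MailboxB\New\MailboxA\dgraph$ up to the semantics of restriction. The scope extrusion axiom $\New\Mailbox\ProcessP \parop \ProcessQ \equiv \New\Mailbox(\ProcessP \parop \ProcessQ)$ with $\Mailbox\notin\fn(\ProcessQ)$ is the main obstacle: from the left-hand derivation I would extract $\wtp{\Context_1, \Mailbox : \In\tone}{\ProcessP}{\dgraph_1}$ and $\wtp{\Context_2}{\ProcessQ}{\dgraph_2}$ with $\Mailbox\notin\dom(\Context_2)$, then reassemble the right-hand derivation by applying~\refrule{t-par} first and~\refrule{t-new} second; the resulting graph is $\New\Mailbox(\dgraph_1 \gunion \dgraph_2)$ whereas the original was $\New\Mailbox\dgraph_1 \gunion \dgraph_2$, and these have the same dependency relation exactly by Proposition~\ref{prop:grel}(\ref{item:grel_new}), which is invoked using $\Mailbox\notin\fn(\dgraph_2)$ (a consequence of $\fn(\dgraph_2) \subseteq \dom(\Context_2)$ from well-formedness). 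A final appeal to~\refrule{t-sub} then restores the target graph $\dgraph$.
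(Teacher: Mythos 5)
Your proposal is correct and follows essentially the same route as the paper: induction on the derivation of $\ProcessP \equiv \ProcessQ$ with a case analysis on the congruence axioms, using Proposition~\ref{prop:grel} to show that the rearrangements of dependency graphs preserve $\grel\cdot$ (hence acyclicity) and closing the syntactic mismatch via \refrule{t-sub}. The paper only sketches this argument in a few lines; your write-out supplies the same ingredients in more detail.
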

\begin{proof}
  An easy induction on the derivation of
  $\ProcessP \equiv \ProcessQ$, reasoning by cases on the last rule
  applied. The most critical points concern dependency graphs, whose
  structure changes as processes and restrictions are
  rearranged. Proposition~\ref{prop:grel} provides all the arguments
  to conclude that these changes do not affect the semantics of
  dependency graphs and therefore that no cycles are introduced.
\end{proof}

\begin{reptheorem}{thm:sr}
  If $\Context$ is reliable and $\wtp\Context\ProcessP\dgraph$ and
  $\ProcessP \red \ProcessQ$, then
  $\wtp\Context\ProcessQ\dgraph$.
\end{reptheorem}
\begin{proof}
  By induction on the derivation of $\ProcessP \red \ProcessQ$ and by
  cases on the last rule applied.

  \proofrule{r-read}
  Then
  $\ProcessP = \Send\Mailbox\Tag\MailboxesC \parop
  \Receive\Mailbox\Tag\Vars\Process \sumop \Guard\red
  \ProcessP\subst\MailboxesC\Vars = \ProcessQ$.
  From \refrule{t-sub} and \refrule{t-par} we deduce that there
  exist $\Context_1$, $\Context_2$, $\dgraph_1$ and $\dgraph_2$ such
  that:
  \begin{itemize}
  \item $\Context \subt \Context_1 \cmul \Context_2$
  \item $\dgraph \gimplies \dgraph_1 \gunion \dgraph_2$
  \item
    $\wtp{\Context_1}{\Send\Mailbox\Tag{\MailboxesC}}{\dgraph_1}$
  \item
    $\wtp{\Context_2}{\Receive\Mailbox\Tag\Vars\Process \sumop
      \Guard}{\dgraph_2}$
  \end{itemize}

  From \refrule{t-sub} and \refrule{t-msg} we deduce that there
  exist $\TypesT$ such that:
  \begin{itemize}
  \item
    $\Context_1 \subt \Mailbox : \Out{\tmessage\Tag\Types},
    \MailboxesC : \Types$
  \item
    $\dgraph_1 \gimplies \gedge\Mailbox{\set\MailboxesC}$
  \end{itemize}

  From \refrule{t-sub} and \refrule{t-guard} we deduce that there
  exist $\ContextB$ and $\PatternE$ such that:
  \begin{itemize}
  \item $\Context_2 \subt \Mailbox : \In\PatternE, \ContextB$
  \item
    $\wtg{\Mailbox : \In\PatternE,
      \ContextB}{\Receive\Mailbox\Tag\Vars\Process \sumop \Guard}$
  \item $\df{}\PatternE$
  \item $\dgraph_2 \gimplies \gedge\Mailbox{\dom(\ContextB)}$
  \end{itemize}

  From \refrule{t-branch}, \refrule{t-in} and $\df{}\PatternE$ we
  deduce that there exist $\TypesS$, $\PatternF_1$, $\PatternF_2$
  and $\dgraphB$ such that:
  \begin{itemize}
  \item
    $\PatternE = \tmessage\Tag\TypesS \tmul \PatternF_1 \tsum
    \PatternF_2$
  \item
    $\wtp{\Mailbox : \In\PatternF_1, \ContextB, \Vars :
      \TypesS}{\Process}{\dgraphB}$
  \item $\PatternF_1 \eqp \pder\PatternE{\tmessage\Tag\TypesS}$
  \end{itemize}

  From the fact that $\dgraph$ is acyclic and the properties of
  $\dgraph_1$ and $\dgraph_2$ we deduce that
  $\set\MailboxesC \cap \dom(\ContextB) = \emptyset$. Indeed,
  suppose by contradiction that
  $\Name \in \set\MailboxesC \cap \dom(\ContextB)$ and observe that
  $\Name$ cannot be $\Mailbox$. Then
  $\dgraph_1 \gimplies \gedge\Mailbox\Name$ and
  $\dgraph_2 \gimplies \gedge\Name\Mailbox$, so
  $\dgraph \gimplies \gedge\Mailbox\Mailbox$ which is absurd.

  From the definition of subtype environment and of
  $\Context_1 \cmul \Context_2$ we deduce that:
  \begin{itemize}
  \item $\Context_1 = \Mailbox : \Out\PatternE', \Context_1'$ where
    $\Out\PatternE' \subt \Out{\tmessage\Tag\TypesT}$ and
    $\Context'_1 \subt \MailboxesC : \TypesT$
  \item
    $\Context_2 = \Mailbox : \In\parens{\PatternE' \tmul
      \PatternF'}, \Context_2'$ where
    $\In\parens{\PatternE' \tmul \PatternF'} \subt \In\PatternE$ and
    $\Context_2' \subt \ContextB$
  \end{itemize}

  From the definition of $\subt$, the precongruence and transitivity
  properties of $\subp$ we derive
  $\tmessage\Tag\TypesT \tmul \PatternF' \subp \PatternE' \tmul
  \PatternF' \subp \PatternE$.
  From the hypothesis that $\Context$ is reliable we deduce
  $\PatternF' \not\subp \tzero$.
  From Proposition~\ref{prop:subt_der_subt} and $\df{}\PatternE$ we
  deduce $\TypesT \subt \TypesS$ and
  $\PatternF' \subp \pder\PatternE{\tmessage\Tag\TypesS}$.
  From Lemma~\ref{lem:subst} we deduce
  $\wtp{\Mailbox : \In\PatternF_1, \ContextB, \MailboxesC :
    \TypesS}{\Process\subst\MailboxesC\Vars}{\dgraphB\subst\MailboxesC\Vars}$.\Luca{In
    questo punto serve sapere che in $\dgraphB$ non possono
    comparire nomi arbitrari, $\MailboxesC$ in particolare, per non
    generare cicli.}
  To conclude we apply \refrule{t-sub} observing that
  $\Context \subt \Context_1 \cmul \Context_2 = \Mailbox :
  \In\PatternF', (\Context_1' \cmul \Context_2') \subt \Mailbox :
  \In\PatternF_1, \ContextB, \MailboxesC : \TypesS$ using
  Proposition~\ref{prop:cmul_subt} and the fact that
  $\set\MailboxesC \cap \dom(\ContextB) = \emptyset$.

  \proofrule{r-free}
  Then
  $\ProcessP = \New\Mailbox(\Free\Mailbox\ProcessQ \sumop \Guard)
  \red \ProcessQ$.
  From \refrule{t-sub} and \refrule{t-new} we deduce that there
  exist $\ContextB$ and $\dgraphB$ such that:
  \begin{itemize}
  \item $\ContextA \subt \ContextB$
  \item $\dgraphA \gimplies \New\Mailbox\dgraphB$
  \item
    $\wtp{\ContextB, \Mailbox : \In\tone}{\Free\Mailbox\ProcessQ
      \sumop \Guard}{\dgraphB}$.
  \end{itemize}

  From \refrule{t-sub} and \refrule{t-guard} we deduce that there
  exist $\ContextB'$, $\Pattern$ and $\dgraphB'$ such that:
  \begin{itemize}
  \item
    $\ContextB, \Mailbox : \In\tone \subt \ContextB', \Mailbox :
    \In\Pattern$
  \item
    $\wtg{\ContextB', \Mailbox : \In\Pattern}{\Free\Mailbox\ProcessQ
      \sumop \Guard}$
  \item $\dgraphB \gimplies \gedge\Mailbox{\dom(\ContextB')}$.
  \end{itemize}

  From \refrule{t-branch} and \refrule{t-free} we deduce that there
  exists $\dgraphB'$ such that
  $\wtp{\ContextB'}{\ProcessQ}{\dgraphB'}$.
  From the well formedness condition on judgments and the definition
  of $\subt$ for type environments, we know that
  $\dom(\dgraphB') \subseteq \dom(\ContextB') \subseteq
  \dom(\ContextB) \subseteq \dom(\ContextA)$.
  From the fact that $\dgraphB$ implies a connected graph with
  domain $\set\Mailbox \cup \dom(\ContextB')$ we deduce
  $\dgraphA \gimplies \dgraphB \gimplies \dgraphB'$.
  We conclude with one application of \refrule{t-sub}.

  \proofrule{r-def}
  Then
  $\ProcessP = \invoke\RecVar\MailboxesC \red
  \ProcessR\subst\MailboxesC\Vars = \ProcessQ$ where
  $\define\RecVar\Vars\ProcessR$.
  From \refrule{t-sub} and \refrule{t-def} we deduce that there
  exist $\Types$ and $\dgraphB$ such that
  $\ContextA \subt \MailboxesC : \Types$ and
  $\RecVar : (\Vars : \Types; \dgraphB)$ and
  $\dgraph \gimplies \dgraphB\subst\MailboxesC\Vars$.
  From the hypothesis that all process definitions are well typed we
  know that $\wtp{\Vars : \Types}\ProcessR{\dgraphB}$.
  From Lemma~\ref{lem:subst} we deduce
  $\wtp{\MailboxesC :
    \Types}{\ProcessR\subst\MailboxesC\Vars}{\dgraphB\subst\MailboxesC\Vars}$.
  To conclude we apply \refrule{t-sub}.

  \proofrule{r-par}
  Then
  $\ProcessP = \ProcessP_1 \parop \ProcessP_2 \red \ProcessP_1' \parop
  \ProcessP_2 = \ProcessQ$ where $\ProcessP_1 \red \ProcessP_1'$.
  From \refrule{t-sub} and \refrule{t-par} we deduce that there
  exist $\Context_1$, $\Context_2$, $\dgraph_1$ and $\dgraph_2$ such
  that $\wtp{\Context_i}{\Process_i}{\dgraph_i}$ for $i=1,2$ where
  $\Context \subt \Context_1 \cmul \Context_2$ and
  $\dgraph \gimplies \dgraph_1 \gunion \dgraph_2$.
  From the hypothesis that $\Context$ is reliable and
  Proposition~\ref{prop:env_subt_reliable} we deduce that
  $\Context_1 \cmul \Context_2$ is reliable.
  From the assumption that types are usable and
  Proposition~\ref{prop:env_mul_reliable} we deduce that
  $\Context_1$ is reliable.
  From the induction hypothesis we deduce
  $\wtp{\Context_1}{\ProcessP_1'}{\dgraph_1}$.
  To conclude we apply \refrule{t-par} and then \refrule{t-sub}.

  \proofrule{r-new}
  Then
  $\ProcessP = \New\Mailbox\ProcessP' \red \New\Mailbox\ProcessP'' =
  \ProcessQ$ where $\ProcessP' \red \ProcessP''$.
  From \refrule{t-sub} and \refrule{t-new} we deduce that there
  exist $\ContextB$ and $\dgraphB$ such that
  $\wtp{\ContextB, \Mailbox : \In\tone}{\Process'}{\dgraphB}$ where
  $\Context \subt \ContextB$ and
  $\dgraph \gimplies \New\Mailbox\dgraphB$.
  From the hypothesis that $\ContextA$ is reliable and
  Proposition~\ref{prop:env_subt_reliable} we deduce that
  $\ContextB$ is reliable and so is
  $\ContextB, \Mailbox : \In\tone$.
  From the induction hypothesis we deduce
  $\wtp{\ContextB, \Mailbox : \In\tone}{\ProcessP''}{\dgraphB}$.
  To conclude we apply \refrule{t-new} and then \refrule{t-sub}.
\end{proof}


\section{Proofs of Theorems~\ref{thm:soundness} and~\ref{thm:fair_termination}}
\label{sec:proof_soundness}

\begin{reptheorem}{thm:soundness}
  If $\wtp\EmptyContext\ProcessP\dgraph$, then $\ProcessP$ is
  mailbox conformant and deadlock free.
\end{reptheorem}
\begin{proof}
  Immediate consequence of Lemma~\ref{lem:conformance} and
  Lemma~\ref{lem:deadlock_freedom}.
\end{proof}

\begin{reptheorem}{thm:fair_termination}
  If $\wtp\EmptyContext\ProcessP\dgraph$ and $\ProcessP$ is finitely
  unfolding, then $\ProcessP$ is fairly terminating.
\end{reptheorem}
\begin{proof}
  Consider a reduction $\ProcessP \red^* \ProcessQ$. From the
  hypothesis that $\ProcessP$ is finitely unfolding we know that
  there exists $\ProcessR$ such that
  $\ProcessP \red^* \ProcessQ \red^* \ProcessR$ and no reduction of
  $\ProcessR$ uses \refrule{r-def}. We conclude applying
  Theorem~\ref{thm:sr} and Lemma~\ref{lem:fair_termination}.
\end{proof}

\subsection{Proof of mailbox conformance}
\label{sec:proof_conformance}

\begin{lemma}
  \label{lem:fail}
  If $\wtp\Context{\ProcessContext[\Fail\Name]}\dgraph$, then
  $\Context$ is unreliable.
\end{lemma}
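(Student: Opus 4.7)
The plan is to proceed by induction on the derivation of $\wtp\Context{\ProcessContext[\Fail\Name]}\dgraph$ with a case analysis on the last rule applied. The rules \refrule{t-done}, \refrule{t-def} and \refrule{t-msg} are immediately ruled out, since none of $\Done$, $\invoke\RecVar\Names$ or $\Send\NameU\Tag\NamesV$ can be written in the form $\ProcessContext[\Fail\Name]$. The remaining rules correspond exactly to the four shapes of $\ProcessContext$ plus the subsumption rule.

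If the last rule is \refrule{t-guard}, then $\ProcessContext[\Fail\Name]$ must itself be a guarded process, which forces $\ProcessContext = \Hole$ and $\ProcessContext[\Fail\Name] = \Fail\Name$; the underlying guard-level derivation can then only use \refrule{t-fail}, which puts the assumption $\Name : \In\tzero$ into $\Context$, so $\Context$ is unreliable by definition. If the last rule is \refrule{t-sub}, then we have $\wtp{\Context'}{\ProcessContext[\Fail\Name]}{\dgraph'}$ with $\Context \subt \Context'$; the induction hypothesis gives $\Context'$ unreliable, and the contrapositive of Proposition~\ref{prop:env_subt_reliable} lifts unreliability back to $\Context$. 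If the last rule is \refrule{t-new} with $\ProcessContext = \New\Mailbox\ProcessContext'$, the premise is typed in $\Context, \Mailbox : \In\tone$; induction says this environment is unreliable, and since $\In\tone$ is reliable (as $\tone \not\subp \tzero$), the offending assumption must already sit in $\Context$.

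The interesting case is \refrule{t-par}, where $\ProcessContext = \ProcessContext' \parop \Process$ (or the symmetric variant), $\Context = \Context_1 \cmul \Context_2$, and induction yields some $\Name' : \In\Pattern_1$ in $\Context_1$ with $\Pattern_1 \subp \tzero$. I must verify that $(\Context_1 \cmul \Context_2)(\Name')$ is still unreliable. If $\Name' \notin \dom(\Context_2)$ the combined type is just $\In\Pattern_1$, which is still unreliable. The only non-obvious subcase is when $\Context_2(\Name') = \Out\PatternE'$ and the combined type is $\In\PatternF$ with $\Pattern_1 \eqp \PatternE' \tmul \PatternF$: here the global usability assumption on $\Out\PatternE'$ forces $\PatternE' \not\subp \tzero$, and together with $\PatternE' \tmul \PatternF \subp \tzero$ this yields $\PatternF \subp \tzero$. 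The case $\Context_2(\Name') = \In\Pattern_2$ is impossible since $\cmul$ is undefined on two input types. This appeal to the global usability convention from Example~\ref{ex:global_assumptions} is the only delicate step; everything else is routine inversion combined with Proposition~\ref{prop:env_subt_reliable}.
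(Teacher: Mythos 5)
Your proof is correct and follows essentially the same route as the paper's: induction on the typing derivation, with the base case coming from \refrule{t-fail} putting $\Name : \In\tzero$ into the environment, and the other cases propagating unreliability through \refrule{t-sub} via Proposition~\ref{prop:env_subt_reliable}, through \refrule{t-new} using reliability of $\In\tone$, and through \refrule{t-par} using usability. The only cosmetic difference is that in the \refrule{t-par} case you re-derive inline the fact that combining a usable environment with an unreliable one yields an unreliable combination, which the paper packages as Proposition~\ref{prop:env_mul_reliable} and simply cites.
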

\begin{proof}
  We reason by induction on the derivation of
  $\wtp\Context{\ProcessContext[\Fail\Name]}\dgraph$ and by cases on
  the last rule applied, omitting symmetric cases and those ruled
  out by the syntax of $\ProcessContext$.

  \proofrule{t-guard}
  From \refrule{t-fail} we deduce that
  $\Context = \Name : \In\tzero, \ContextB$ and we conclude that
  $\Context$ is unreliable.

  \proofrule{t-par}
  Suppose, without loss of generality, that
  $\ProcessContext = \Process \parop \ProcessContext'$.
  Then there exist $\Context_1$, $\Context_2$, $\dgraph_1$ and
  $\dgraph_2$ such that $\Context = \Context_1 \cmul \Context_2$ and
  $\dgraph \gimplies \dgraph_1 \gunion \dgraph_2$ and
  $\wtp{\Context_1}{\Process}{\dgraph_1}$ and
  $\wtp{\Context_2}{\ProcessContext'[\Fail\Name]}{\dgraph_2}$.
  From the induction hypothesis we deduce that $\Context_2$ is
  unreliable.
  We know that $\Context_1$ is usable since all types are assumed to
  be usable, hence we conclude that $\Context_1 \cmul \Context_2$ is
  unreliable by Proposition~\ref{prop:env_mul_reliable}.

  \proofrule{t-new}
  Then $\ProcessContext = \New\Mailbox\ProcessContext'$ and
  $\wtp{\Context, \Mailbox :
    \In\tone}{\ProcessContext'[\Fail\Name]}\dgraphB$.
  By induction hypothesis we deduce that
  $\Context, \Mailbox : \In\tone$ is unreliable.
  Since $\In\tone$ is reliable, we conclude that $\Context$ is
  unreliable.

  \proofrule{t-sub}
  Then $\wtp\ContextB{\ProcessContext[\Fail\Name]}\dgraphB$ where
  $\ContextA \subt \ContextB$.
  By induction hypothesis we deduce that $\ContextB$ is unreliable
  and we conclude by Proposition~\ref{prop:env_subt_reliable}.
\end{proof}

\begin{lemma}
  \label{lem:conformance}
  If $\wtp\EmptyContext\ProcessP\dgraph$, then
  $\ProcessP \nred^* \ProcessContext[\Fail\Mailbox]$ for all
  $\ProcessContext$ and $\Mailbox$.
\end{lemma}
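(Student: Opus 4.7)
The plan is to argue by contradiction, using subject reduction to push the typing through the reduction and then invoking Lemma~\ref{lem:fail} to derive an impossible conclusion about the empty type environment. Specifically, I would suppose that $\ProcessP \red^* \ProcessContext[\Fail\Mailbox]$ for some $\ProcessContext$ and $\Mailbox$, and then aim to derive a contradiction from the reliability of $\EmptyContext$.

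First, I would observe that $\EmptyContext$ is trivially reliable: it contains no types, so vacuously none of its types is unreliable in the sense of Definition~\ref{def:classification}. Next, by iterating Theorem~\ref{thm:sr} along the reduction sequence $\ProcessP \red^* \ProcessContext[\Fail\Mailbox]$ (a routine induction on the number of reduction steps, using the fact that subject reduction preserves the type environment), I would obtain $\wtp\EmptyContext{\ProcessContext[\Fail\Mailbox]}\dgraph$. Each application of Theorem~\ref{thm:sr} is justified because the environment is the same $\EmptyContext$ throughout, hence reliable at every step.

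At this point Lemma~\ref{lem:fail} applies directly to the judgment $\wtp\EmptyContext{\ProcessContext[\Fail\Mailbox]}\dgraph$, yielding that $\EmptyContext$ is unreliable. This contradicts the initial observation, completing the argument. I do not foresee any genuine obstacle here: the heavy lifting has already been done in Theorem~\ref{thm:sr} (subject reduction) and Lemma~\ref{lem:fail} (which handles the case analysis on the process context by induction on the typing derivation). The only small care needed is to make explicit that the empty environment is reliable and to chain subject reduction across an arbitrary-length reduction sequence, both of which are routine.
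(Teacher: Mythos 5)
Your argument is correct and is exactly the paper's proof, which simply cites Theorem~\ref{thm:sr} and Lemma~\ref{lem:fail}: the empty environment is vacuously reliable, subject reduction carries the typing to $\ProcessContext[\Fail\Mailbox]$, and Lemma~\ref{lem:fail} then forces $\EmptyContext$ to be unreliable, a contradiction. Nothing further is needed.
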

\begin{proof}
  Immediate consequence of Theorem~\ref{thm:sr} and
  Lemma~\ref{lem:fail}.
\end{proof}


\subsection{Proof of deadlock freedom}
\label{sec:proof_deadlock_freedom}

\begin{notation}
  We abbreviate $\wtp\Context\Process\dgraph$ with
  $\wtp\Context\Process{}$ when the dependency graph $\dgraph$ is
  irrelevant.
\end{notation}

\begin{notation}
  Let $\wtp\Context\Process{}$ and $\Name\in\fn(\Process)$. We write
  $\Name^\Type \in \Process$ if $\Name\in\fn(\Process)$ and there is
  a judgment of the form $\wtp{\ContextB, \Name : \Type}\ProcessQ{}$
  in the derivation tree of $\wtp\Context\Process{}$. Similarly for
  guards.
\end{notation}


\begin{definition}
  We say that $\ProcessQ$ occurs \emph{unguarded} in $\ProcessP$ if
  $\Process = \ProcessContext[\ProcessQ]$ for some
  $\ProcessContext$.
\end{definition}

\begin{lemma}[output occurrence]
  \label{lem:output_occurrence}
  If $\wtp{\Mailbox : \Out\PatternE, \Context}\Process{}$ and
  $\Context$ is reliable, then there exist $n\geq 0$ and
  $\PatternF_1, \dots, \PatternF_n$ such that
  $\PatternF_1 \cdots \PatternF_n \subp \PatternE$ and
  $\Mailbox^{\Out\PatternF_i} \in \Process$ for all
  $1 \leq i\leq n$. Furthermore, the $\Mailbox^{\Out\PatternF_i}$
  occurrences account for all of the unguarded messages stored into
  $\Mailbox$ by $\Process$.
\end{lemma}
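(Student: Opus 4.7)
The plan is to proceed by induction on the derivation of $\wtp{\Mailbox : \Out\PatternE, \Context}\Process{}$, with a case analysis on the last typing rule applied, relying on the pre-congruence of $\subp$ with respect to $\tmul$ and its transitivity, both recorded after Definition~\ref{def:subp}. Since $\Done$ is typed in the empty environment, the rule \refrule{t-done} cannot be the last rule and this situation is handled as a sub-case of \refrule{t-sub}. In the remaining leaf cases the witnesses can be read off the rule. For \refrule{t-msg} applied to $\Send\NameU\Tag\NamesV$, if $\Mailbox = \NameU$ then $\PatternE = \tmessage\Tag\TypesT$ and I take $n = 1$ with $\PatternF_1 = \PatternE$, accounting for the unique unguarded send; otherwise $\Mailbox$ sits among $\NamesV$ with its output type supplied as an argument type, and again $n = 1$ with $\PatternF_1 = \PatternE$, but the occurrence is a reference and no unguarded send is contributed. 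The \refrule{t-def} case is analogous: $\Mailbox$ appears as a formal parameter with type $\Out\PatternE$. For \refrule{t-new} with $\Process = \New\Mailbox'\Process'$ the induction hypothesis on $\Process'$ transfers directly, since the restriction is transparent to $\Mailbox$'s unguarded positions.

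For \refrule{t-par} with $\Process = \Process_1 \parop \Process_2$ and $\Context = \Context_1 \cmul \Context_2$, the type combination forces $\Out\PatternE = \Out\PatternE_1 \cmul \Out\PatternE_2 = \Out(\PatternE_1 \tmul \PatternE_2)$ when $\Mailbox$ appears in both sub-environments (the one-sided case is strictly simpler). Each $\Context_i$ is reliable, since both are usable by the global assumption and their combination is reliable, so Proposition~\ref{prop:env_mul_reliable} applies. The induction hypothesis on each premise yields lists $\PatternF^k_i$ with $\prod_i \PatternF^k_i \subp \PatternE_k$, and concatenation gives $\prod_{k,i} \PatternF^k_i \subp \PatternE_1 \tmul \PatternE_2 = \PatternE$ by monotonicity of $\tmul$; the unguarded sends on $\Mailbox$ partition cleanly between $\Process_1$ and $\Process_2$. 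For \refrule{t-guard}, every access to $\Mailbox$ lies inside a guarded continuation, so $\Process$ has no unguarded send on $\Mailbox$; reliability of $\Context$ ensures that the guard's input pattern is not $\subp \tzero$, so some summand admits a continuation (via \refrule{t-in} or \refrule{t-free}) whose environment still binds $\Mailbox : \Out\PatternE$, and I apply the induction hypothesis there; the resulting $\PatternF_i$'s remain occurrences of $\Mailbox$ in $\Process$.

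The final case is \refrule{t-sub}. Either $\Mailbox$ already occurs in the premise context with some type $\Out\PatternE'$, in which case $\Out\PatternE \subt \Out\PatternE'$ gives $\PatternE' \subp \PatternE$ and transitivity lifts the witnesses returned by the induction hypothesis; or $\Mailbox$ is introduced by the environment-subtyping step admitting a fresh entry of type $\Out\tone$, followed by a weakening forcing $\Out\PatternE \subt \Out\tone$, i.e., $\tone \subp \PatternE$, whence $n = 0$ and the empty product $\tone$ suffice and $\Process$ produces no unguarded send on $\Mailbox$. The main obstacle I anticipate is making the ``accounting'' clause precise: one must verify that the returned list of $\PatternF_i$'s covers exactly the unguarded sends on $\Mailbox$, each $\Send\Mailbox\Tag\NamesV$ contributing the atom $\tmessage\Tag\TypesT$. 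The delicate points are \refrule{t-par}, where the partition of unguarded sends must align with the factorisation $\PatternE = \PatternE_1 \tmul \PatternE_2$, and \refrule{t-sub}, which can introduce a fresh $\Mailbox$ occurrence with no corresponding unguarded send and so forces the accounting to tolerate reference-only witnesses.
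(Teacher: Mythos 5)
Your proposal is correct and follows essentially the same route as the paper's proof: induction on the typing derivation with the same case analysis, the same use of Proposition~\ref{prop:env_mul_reliable} to propagate reliability through \refrule{t-par}, the same two sub-cases for \refrule{t-sub} (transitivity of $\subp$ when $\Mailbox$ persists, $n=0$ via irrelevance when it is weakened away), and the same selection of a non-$\tzero$ summand in the guard cases. The only difference is presentational — you fold the guard rules into the \refrule{t-guard} case rather than treating \refrule{t-fail}, \refrule{t-free}, \refrule{t-in} and \refrule{t-branch} as separate inductive cases — and your remark that the accounting clause must tolerate reference-only occurrences (arguments of \refrule{t-msg}) matches the paper's one-directional reading of that clause.
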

\begin{proof}
  By induction on the typing derivation and by cases on the last
  typing rule applied. The fact that all of the unguarded messages
  stored into $\Mailbox$ by $\Process$ are considered follows from
  the structure of the proof, which visits every sub-process of
  $\Process$ in which $\Mailbox$ may occur free.

  \proofrule{t-done}
  This case is impossible because $\Done$ is well typed in the empty
  context only.

  \proofrule{t-msg}
  Then $\Process = \Send\NameU\Tag\NamesV$ and
  $\Mailbox : \Out\PatternE, \Context = \NameU :
  \Out\tmessage\Tag\TypesT, \NamesV : \TypesT$ where $\Mailbox$ is
  either $\NameU$ or one of the $\NameV_i$.
  We conclude by taking $n \eqdef 1$ and
  $\PatternF_1 \eqdef \PatternE$.

  \proofrule{t-def}
  Similar to the previous case.

  \proofrule{t-par}
  Then $\Process = \Process_1 \parop \Process_2$ and
  $\Mailbox : \Out\PatternE, \Context = \Context_1 \cmul \Context_2$
  and $\wtp{\Context_i}{\Process_i}{}$ for every $i=1,2$.
  From the hypothesis that $\Context$ is reliable, the assumption
  that all types are usable and
  Proposition~\ref{prop:env_mul_reliable} we deduce that
  $\Context_1$ and $\Context_2$ are reliable.
  We discuss two interesting sub-cases:
  \begin{itemize}
  \item Suppose $\Context_1 = \Mailbox : \Out\PatternE, \Context_1'$
    and $\Mailbox \not\in\dom(\Context_2)$.
    Then $\Mailbox \not\in \fn(\Process_2)$ and we conclude from the
    induction hypothesis on $\Process_1$.
  \item Suppose $\Context_1 = \Mailbox : \Out\PatternE_1, \Context_1'$
    and $\Context_2 = \Mailbox : \Out\PatternE_2, \Context_2'$ and
    $\PatternE = \PatternE_1 \tmul \PatternE_2$.
    From the induction hypothesis we deduce that there exist
    $n_1 \geq 0$ and $n_2 \geq 0$ and
    $\PatternF_1', \dots, \PatternF_{n_1}'$ and
    $\PatternF_1'', \dots, \PatternF_{n_2}''$ such that
    $\PatternF_1' \cdots \PatternF_{n_1}' \subp \PatternE_1$ and
    $\PatternF_1'' \cdots \PatternF_{n_2}'' \subp \PatternE_2$ and
    $\Mailbox^{\Out\PatternF_i'} \in \Process_1$ for all
    $1\leq i\leq n_1$ and
    $\Mailbox^{\Out\PatternF_i''} \in \Process_2$ for all
    $1\leq i\leq n_2$.
    We conclude by taking $n \eqdef n_1 + n_2$ and
    $\PatternF_1, \dots, \PatternF_n \eqdef \PatternF_1', \dots,
    \PatternF_{n_1}', \PatternF_1'', \dots, \PatternF_{n_2}''$.
  \end{itemize}

  \proofrule{t-new}
  Then $\ProcessP = \New\MailboxC\ProcessQ$ and
  $\wtp{\Mailbox : \Out\PatternE, \Context, \MailboxC :
    \In\tone}\ProcessQ{}$.
  Since $\In\tone$ is reliable we conclude by the induction
  hypothesis.

  \proofrule{t-guard}
  Then $\Process = \Guard$ and $\wtg\Context\Guard$.
  We conclude by the induction hypothesis.

  \proofrule{t-sub}
  Then $\wtp\ContextB\Process{}$ where
  $\Mailbox : \Out\PatternE, \ContextA \subt \ContextB$.
  We have two possibilities:
  \begin{itemize}
  \item Suppose $\Mailbox\in\dom(\ContextB)$.
    Then $\ContextB = \Mailbox : \Out\PatternE', \ContextB'$ where
    $\PatternE' \subp \PatternE$ and $\ContextA \subt \ContextB'$.
    From Proposition~\ref{prop:env_subt_reliable} we deduce that
    $\ContextB'$ is reliable.
    We conclude by the induction hypothesis using transitivity of
    $\subp$.
  \item Suppose $\Mailbox\not\in\dom(\ContextB)$.
    Then $\Mailbox\not\in\fn(\Process)$ and $\Out\PatternE$ is
    irrelevant, meaning that $\tone \subp \PatternE$.
    We conclude by taking $n \eqdef 0$.
  \end{itemize}

  \proofrule{t-fail}
  This case is impossible because $\Context$ is reliable by
  hypothesis.

  \proofrule{t-free}
  Then $\Process = \Free\NameU\ProcessQ$ and
  $\Mailbox : \Out\PatternE, \Context = \Name : \In\tone, \ContextB$
  and $\wtp\ContextB\ProcessQ{}$.
  Clearly $\Mailbox \ne \Name$. Also, from the hypothesis that
  $\Context$ is reliable we deduce that $\ContextB$ is reliable.
  We conclude by the induction hypothesis.

  \proofrule{t-in}
  Then $\Process = \Receive\Name\Tag\Vars\ProcessQ$ and
  $\Mailbox : \Out\PatternE, \Context = \Name :
  \In(\tmessage\Tag\TypesT \tmul \PatternF), \ContextB$ and
  $\wtp{\Name : \In\PatternF, \ContextB, \Vars :
    \TypesT}\ProcessQ{}$.
  Clearly $\Mailbox \ne \Name$ therefore
  $\ContextB = \Mailbox : \Out\PatternE, \ContextB'$.
  From the hypothesis that $\Context$ is reliable we deduce that
  $\PatternF \not\subp \tzero$.
  From the assumption that argument types are reliable we deduce
  that $\Name : \In\PatternF, \ContextB, \Vars : \TypesT$ is
  reliable.
  We conclude by the induction hypothesis.

  \proofrule{t-branch}
  Then $\Process = \Guard_1 \sumop \Guard_2$ and
  $\Mailbox : \Out\PatternE, \Context = \NameU : \In(\PatternE_1
  \tsum \PatternE_2), \ContextB$ and
  $\wtg{\NameU : \In\PatternE_i, \ContextB}{\Guard_i}$ for every
  $i=1,2$.
  Clearly $\Mailbox \ne \NameU$, therefore
  $\ContextB = \Mailbox : \Out\PatternE, \ContextB'$.
  From the hypothesis that $\Context$ is reliable we deduce that
  $\PatternE_1 \tsum \PatternE_2 \not\subp \tzero$.  Suppose,
  without loss of generality, that $\PatternE_1 \not\subp \tzero$.
  We conclude by the induction hypothesis on $\Guard_1$.
\end{proof}

\begin{lemma}[input occurrence]
  \label{lem:input_occurrence}
  If $\wtp{\Mailbox : \In\PatternE, \Context}\Process{}$ and
  $\Context$ is reliable and
  $\MessageType \tmul \PatternF \subp \PatternE$ and
  $\PatternF \not\subp \tzero$, then there exist $\MessageType'$ and
  $\PatternF'$ such that $\MessageType \subp \MessageType'$ and
  $\PatternF' \not\subp \tzero$ and
  $\Mailbox^{\In\MessageType'\tmul\PatternF'} \in \Process$.
\end{lemma}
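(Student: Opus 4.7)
The plan is to proceed by induction on the derivation of $\wtp{\Mailbox : \In\PatternE, \Context}\Process{}$, dualizing the strategy of Lemma~\ref{lem:output_occurrence}, with a case analysis on the last typing rule applied. First I would dispose of the trivial cases. Rule~\refrule{t-done} is impossible since the context is non-empty. Rules~\refrule{t-new} and~\refrule{t-sub} reduce to the induction hypothesis after the standard bookkeeping, invoking Proposition~\ref{prop:env_subt_reliable} to preserve reliability of the residual environment and noting that $\In\tone$ is reliable so the extended context remains reliable when crossing a restriction.

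The productive cases are reached through \refrule{t-guard}, which delegates to the guard rules. Rule~\refrule{t-fail} is impossible: it would force $\PatternE \eqp \tzero$, which together with $\MessageType \not\subp \tzero$ (an atom is never $\subp \tzero$) and $\PatternF \not\subp \tzero$ contradicts $\MessageType \tmul \PatternF \subp \PatternE$. For rule~\refrule{t-free} applied to a name $\Name$, the case $\Mailbox = \Name$ forces $\PatternE \eqp \tone$, contradicting the same hypothesis; when $\Mailbox \ne \Name$ we simply recurse on the continuation. Rule~\refrule{t-in}, typing $\Receive\Mailbox\Tag\Vars\Process'$ at $\In(\tmessage\Tag\TypesS \tmul \PatternE')$, directly produces the witness with $\MessageType' = \tmessage\Tag\TypesS$ and $\PatternF' = \PatternE'$ when $\MessageType = \tmessage\Tag\TypesT$ and $\TypesT \subt \TypesS$; otherwise the required atom must be absorbed by $\PatternE'$ and we recurse on the continuation in the updated environment. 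For \refrule{t-branch}, since $\PatternF \not\subp \tzero$ we select any configuration $\bag\MessageType \bunion \BagB \in \sem{\MessageType \tmul \PatternF}$, which must belong to $\sem{\Pattern_i}$ for some $i$; reading $\BagB$ as a product pattern $\PatternF_i \not\subp \tzero$ yields $\MessageType \tmul \PatternF_i \subp \Pattern_i$ and the induction hypothesis applies to the corresponding sub-guard.

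The most delicate case is \refrule{t-par}, where $\Process = \Process_1 \parop \Process_2$ and the input type $\In\PatternE$ for $\Mailbox$ arises from combining the types of $\Mailbox$ in $\Context_1$ and $\Context_2$. Since $\cmul$ never combines two input capabilities, at most one subcontext assigns an input type to $\Mailbox$. If only one, say $\Context_1$, mentions $\Mailbox$, we conclude by the induction hypothesis after verifying reliability via Propositions~\ref{prop:env_subt_reliable} and~\ref{prop:env_mul_reliable}. Otherwise $\Context_1(\Mailbox) = \In(\PatternE \tmul \PatternG)$ and $\Context_2(\Mailbox) = \Out\PatternG$; the global usability assumption on $\Out\PatternG$ yields $\PatternG \not\subp \tzero$, hence $\MessageType \tmul (\PatternF \tmul \PatternG) \subp \PatternE \tmul \PatternG$ with $\PatternF \tmul \PatternG \not\subp \tzero$, and the induction hypothesis applied to $\Process_1$ yields the required occurrence.

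The cases \refrule{t-msg} and \refrule{t-def} treat $\Mailbox$ as an argument with no accompanying input guard in $\Process$ itself, and I expect these to constitute the main technical subtlety: the only judgment in the derivation tree that mentions $\Mailbox$ is the base one carrying $\Mailbox : \In\PatternE$, whereas the lemma demands a sub-judgment of the literal shape $\In(\MessageType' \tmul \PatternF')$. Since $\In(\MessageType \tmul \PatternF) \subt \In\PatternE$ follows from $\MessageType \tmul \PatternF \subp \PatternE$, the plan is to pre-compose the base judgment with a \refrule{t-sub} step that refines $\Mailbox$'s type to $\In(\MessageType \tmul \PatternF)$, thereby enriching the derivation tree with a sub-judgment of the required shape and allowing us to take $\MessageType' = \MessageType$ and $\PatternF' = \PatternF$. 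This step tacitly relies on the occurrence relation being closed under such derivation reshapings, which is the key point to verify in the formal development.
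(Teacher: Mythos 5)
Your proof follows essentially the same route as the paper's: induction on the typing derivation with the same case analysis, the same treatment of \refrule{t-par} via usability of the output component and Proposition~\ref{prop:env_mul_reliable}, and the same split of \refrule{t-in} into a direct-witness subcase and a recursive one. The one genuine divergence is \refrule{t-branch}: the paper pushes the whole pattern $\MessageType\tmul\PatternF$ into a single summand $\PatternE_i$, a step it explicitly flags (in a margin note) as requiring a non-trivial lemma that uses the normal-form hypothesis on $\PatternE$, whereas you extract one configuration $\bag\MessageType\bunion\BagB$, repackage $\BagB$ as a product pattern $\PatternF_i\not\subp\tzero$ landing in some $\sem{\PatternE_i}$, and feed that to the induction hypothesis; this is legitimate (the statement only needs \emph{some} $\PatternF$ with $\MessageType\tmul\PatternF\subp\PatternE_i$) and sidesteps the normal-form machinery. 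Your \refrule{t-sub} pre-composition for the \refrule{t-msg}/\refrule{t-def} cases addresses an imprecision that the paper simply glosses over (it takes $\MessageType'=\MessageType$ and $\PatternF'=\PatternF$ there even though the judgment literally records $\Mailbox:\In\PatternE$), so on that point you are at least as careful as the original, modulo the caveat you yourself flag about the occurrence relation depending on the choice of derivation.
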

\begin{proof}
  By induction on the derivation of
  $\wtp{\Mailbox : \In\PatternE, \Context}\Process{}$ and by cases
  on the last rule applied, recalling that a guard is also a
  process.

  \proofrule{t-done}
  This case is impossible because $\Done$ is well typed in the empty
  context only.

  \proofrule{t-msg}
  Then $\Process = \Send\NameU\Tag\NamesV$ and
  $\Mailbox : \In\PatternE, \Context = \NameU :
  \Out\tmessage\Tag\TypesT, \NamesV : \TypesT$.
  Clearly $\Mailbox \ne \NameU$, so we conclude by taking
  $\MessageType \eqdef \MessageType'$ and
  $\PatternF' \eqdef \PatternF$.

  \proofrule{t-def}
  Similar to the previous case.

  \proofrule{t-par}
  Then $\Process = \Process_1 \parop \Process_2$ and
  $\Mailbox : \In\PatternE, \Context = \Context_1 \cmul \Context_2$
  and $\wtp{\Context_i}{\Process_i}{}$ for $i=1,2$.
  We only discuss one interesting case when
  $\Context_1 = \Mailbox : \Out\PatternE_1, \Context_1'$ and
  $\Context_2 = \Mailbox : \In(\PatternE_1 \tmul \PatternE),
  \Context_2'$.
  From the assumption that all types are usable we know that
  $\PatternE_1 \not\subp \tzero$ and both $\Context_1'$ and
  $\Context_2'$ are usable.
  From the hypothesis that $\Context$ is reliable we deduce that
  $\Context_1' \cmul \Context_2'$ is also reliable.
  From Proposition~\ref{prop:env_mul_reliable} we deduce that
  $\Context_1'$ and $\Context_2'$ are both reliable.
  From the hypothesis $\MessageType \tmul \PatternF \subp \PatternE$
  and the precongruence of $\subp$ we deduce
  $\MessageType \tmul \PatternE_1 \tmul \PatternF \subp \PatternE_1
  \tmul \PatternE$.
  From $\PatternE_1 \not\subp \tzero$ and the hypothesis
  $\PatternF \not\subp \tzero$ we deduce
  $\PatternE_1 \tmul \PatternF \not\subp \tzero$.
  We conclude by the induction hypothesis.

 \proofrule{t-new}
  Then $\ProcessP = \New\MailboxC\ProcessQ$ and
  $\wtp{\Mailbox : \In\PatternE, \Context, \MailboxC :
    \In\tone}\ProcessQ{}$.
  Observe that $\In\tone$ is reliable, so we can conclude by the
  induction hypothesis.

  \proofrule{t-sub}
  Then $\wtp\ContextB\Process$ for some $\ContextB$ such that
  $\Mailbox : \In\PatternE, \ContextA \subt \ContextB$, which
  implies $\ContextB = \In\PatternE', \ContextB'$ where
  $\PatternE \subp \PatternE'$ and $\ContextA \subt \ContextB'$.
  We have
  $\MessageType \tmul \PatternF \subp \PatternE \subp \PatternE'$ by
  transitivity of $\subp$.
  From the hypothesis that $\ContextA$ is reliable and
  Proposition~\ref{prop:env_subt_reliable} we deduce that
  $\ContextB'$ is reliable.
  We conclude by the induction hypothesis.

  \proofrule{t-guard}
  Straightforward application of the induction hypothesis.

  \proofrule{t-fail}
  This case is impossible because $\PatternE \not\subp \tzero$ and
  $\Context$ is reliable.

  \proofrule{t-free}
  Then $\ProcessP = \Free\Name\ProcessQ$ and
  $\Mailbox : \In\PatternE, \Context = \Name : \In\tone, \ContextB$
  and $\wtp\ContextB\ProcessQ{}$.
  From the hypotheses $\MessageType \tmul \PatternF \subp \PatternE$
  and $\PatternF \not\subp \tzero$ we deduce $\Mailbox \ne \Name$,
  hence $\ContextB = \Mailbox : \In\PatternE, \ContextB'$ for some
  $\ContextB'$ such that $\ContextB'$ is usable.
  We conclude by the induction hypothesis.

  \proofrule{t-in}
  Then $\Process = \Receive\Name\Tag\Vars\ProcessQ$ and
  $\Mailbox : \In\PatternE, \Context = \Name :
  \In(\tmessage\Tag\Types \tmul \PatternE'), \ContextB$ and
  $\wtp{\Name : \In\PatternE', \ContextB, \Vars :
    \Types}\ProcessQ{}$.
  We distinguish the following sub-cases:
  \begin{itemize}
  \item Suppose $\Mailbox = \Name$ and
    $\MessageType \subp \tmessage\Tag\Types$.
    Then $\PatternE = \tmessage\Tag\Types \tmul \PatternE'$.
    From the hypotheses
    $\MessageType \tmul \PatternF \subp \PatternE$ and
    $\PatternF \not\subp \tzero$ we deduce
    $\PatternE' \not\subp \tzero$.
    We conclude by taking $\MessageType' \eqdef \tmessage\Tag\Types$
    and $\PatternF' \eqdef \PatternE'$.
  \item Suppose $\Mailbox = \Name$ and
    $\MessageType \not\subp \tmessage\Tag\Types$.
    From the hypotheses
    $\MessageType \tmul \PatternF \subp \PatternE$ and
    $\PatternF \not\subp \tzero$ we deduce that
    $\MessageType \tmul \PatternE'' \subp \PatternE'$ for some
    $\PatternE'' \not\subp \tzero$.
    Observe that $\ContextB = \ContextA$.
    From the assumption that argument types are reliable we know
    that all the types in $\Types$ are reliable. Therefore we can
    conclude by the induction hypothesis.
  \item Suppose $\Mailbox \ne \Name$.
    From the hypothesis that $\Context$ is reliable we deduce
    $\PatternE' \not\subp \tzero$.
    From the assumption that argument types are reliable we know
    that all the types in $\Types$ are reliable. Therefore we can
    conclude by the induction hypothesis.
  \end{itemize}

  \proofrule{t-branch}
  Then $\Process = \Guard_1 \sumop \Guard_2$ and
  $\Mailbox : \In\PatternE, \Context = \Name : \In(\PatternE_1 \tsum
  \PatternE_2), \ContextB$ and
  $\wtg{\Name : \In\PatternE_i, \ContextB}{\Guard_i}$.
  We distinguish two sub-cases:
  \begin{itemize}
  \item If $\Mailbox = \Name$, then
    $\PatternE = \PatternE_1 \tsum \PatternE_2$ and
    $\ContextA = \ContextB$.
    From the hypothesis
    $\MessageType \tmul \PatternF \subp \PatternE$ we deduce
    $\MessageType \tmul \PatternF \subp \PatternE_i$ for some
    $i=1,2$.\Luca{Questa proproet\`a richiede un lemma non banale
      che usa anche l'ipotesi che $\PatternE$ \`e in forma normale.}
    We conclude by the induction hypothesis.
  \item If $\Mailbox \ne \Name$, then
    $\ContextB = \Mailbox : \In\PatternE, \ContextB'$.
    From the hypothesis that $\ContextA$ is reliable we deduce
    $\Pattern_1 + \Pattern_2 \not\subp \tzero$.
    Suppose, without loss of generality, that
    $\Pattern_1 \not\subp \tzero$.  We conclude by applying the
    induction hypothesis on $\Guard_1$.
    \qedhere
  \end{itemize}
\end{proof}

\begin{lemma}
  \label{lem:termination}
  If $\wtp\EmptyContext\ProcessP\dgraphB$ and $\ProcessP \nred$,
  then $\ProcessP \equiv \Done$.
\end{lemma}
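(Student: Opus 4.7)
The plan is to argue by contradiction: assume $\ProcessP \not\equiv \Done$. Using structural congruence we rewrite $\ProcessP \equiv \New{\Mailbox_1}\cdots\New{\Mailbox_m}(Q_1 \parop \cdots \parop Q_n)$ with $n \geq 1$, where each $Q_i$ is either a message, an invocation, or a guarded process (the $\Done$ terms are absorbed as units of $\parop$ and every restriction is pulled to the top). The hypothesis $\ProcessP \nred$ immediately rules out invocations, since they always fire \refrule{r-def}. By inverting \refrule{t-new} and \refrule{t-par} in the typing of $\ProcessP$, the combined type of every $\Mailbox_j$ across the $Q_i$ must be (a subtype of) $\In\tone$, and by the well-formedness condition the overall dependency graph $\dgraph$ of $\ProcessP$ is acyclic.

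Then I would rule out two immediate reductions. First, no top-level message $\Send\Mailbox\Tag\MailboxesC$ can coexist with a top-level guard $G$ waiting on the same $\Mailbox$: the combinability of $\Out\tmessage\Tag\TypesT$ with $\In\Pattern$ (with $\Pattern$ in normal form by \refrule{t-guard}) forces $\Pattern$ to admit a summand of the form $\tmessage\Tag\TypesS \tmul \Pattern'$, so $G$ offers a receive branch on $\Tag$ at $\Mailbox$ and \refrule{r-read} fires, contradicting $\ProcessP \nred$. Second, no top-level guard $G$ on $\Mailbox$ can have a $\tone$ summand in its normal-form pattern: by the previous point no top-level message targets $\Mailbox$, so after pulling the restriction of $\Mailbox$ to the top, \refrule{r-free} fires. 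Consequently every top-level guard $G$ has a normal-form pattern demanding the reception of some specific message from its guarded mailbox $\mu(G)$, yet no such message sits at the top level.

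For each such $G$, the balance of combined types at $\mu(G)$ therefore forces some other top-level component to supply the needed output contribution on $\mu(G)$. By the first point this component cannot be a message, so it must be another top-level guard $G'$ whose continuation contains an unguarded output on $\mu(G)$; accordingly $\mu(G) \in \dom(\Context')$ in the typing $\wtp{\mu(G') : \In\Pattern', \Context'}{G'}{\dgraph'}$, and by \refrule{t-guard} the edge $\gedge{\mu(G')}{\mu(G)}$ belongs to $\dgraph'$, hence to $\dgraph$. Iterating produces a sequence $G_0, G_1, G_2, \ldots$ of top-level guards together with edges $\gedge{\mu(G_{i+1})}{\mu(G_i)}$ in $\dgraph$. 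Since the set of top-level guards is finite, some $G_k = G_j$ with $j < k$, and the corresponding edges compose via the transition rules of Table~\ref{tab:dgraph} into a closed path from $\mu(G_j)$ to itself in $\dgraph$, yielding $(\mu(G_j), \mu(G_j)) \in \grel\dgraph$ and contradicting acyclicity. The main obstacle is pinpointing the specific top-level guard $G'$ that supplies the needed output on $\mu(G)$: this relies on tracking how type contributions from different top-level components combine via \refrule{t-par}, together with the observation that no top-level message with a matching tag can sit on $\mu(G)$.
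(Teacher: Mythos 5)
There are genuine gaps here, concentrated in your two ``immediate reduction'' claims and in the construction of the chain.

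First, the claim that a top-level message $\Send\Mailbox\Tag\MailboxesC$ cannot coexist with a top-level guard on $\Mailbox$ is false. A pattern in normal form need not have a summand headed by $\tmessage\Tag{\cdot}$ even when every configuration it denotes contains a $\Tag$-atom: take $\PatternE = \TagB \tmul \TagA$, which is in normal form and types a guard that receives $\TagB$ \emph{first}. An $\TagA$-message may then sit at top level while the matching $\TagB$-message is still buried inside the continuation of some other blocked component, and \refrule{r-read} does not fire. The paper only establishes the weaker fact that such a message cannot be the output occurrence realizing one of the tags the guard actually \emph{offers}. Your second claim (no $\tone$ summand, else \refrule{r-free} fires) also fails: \refrule{r-free} requires that no other process lies in the scope of $\Mailbox$, and the scope cannot be shrunk as soon as any other top-level component holds a reference to $\Mailbox$ --- as a message argument or inside a guarded continuation --- even if no top-level message targets $\Mailbox$. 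What the paper proves instead, via Lemma~\ref{lem:output_occurrence}, is that if there were \emph{no} output occurrence of $\Mailbox$ anywhere else then $\tone \subp \PatternE$, the $\Free[]{}$ branch would be present and would fire; hence at least one output occurrence exists somewhere, and the case analysis starts from its possible locations.

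Second, your chain only ever moves from guard to guard. The output capability on $\mu(G)$ that the balance forces to exist elsewhere can be carried as an \emph{argument} of a top-level message targeted at another mailbox, as in $\Send\MailboxB\Tag{\mu(G)}$; the resulting edge $\gedge\MailboxB{\mu(G)}$ comes from \refrule{t-msg}, not \refrule{t-guard}, and the next node of the chain is then a message, not a guard. To continue from a message you need the dual half of the argument that the paper carries out with Lemma~\ref{lem:input_occurrence}: locate where the \emph{input} capability on $\MailboxB$ resides, which again yields an edge pointing into $\MailboxB$. Omitting messages as chain nodes (and the entire input-occurrence analysis) leaves the iteration stuck precisely in the configurations where capabilities are forwarded through messages --- which are the typical deadlocks the dependency graph is meant to catch. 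A minor further imprecision: the occurrence of $\mu(G)$ in $G'$ need not be unguarded; \refrule{t-guard} records an edge for every free name of the continuations, however deeply nested, and the proof must rely on that.
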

\begin{proof}
  Using the hypothesis $\ProcessP \nred$ and the laws of structural
  congruence we deduce that
  \[
    \ProcessP \equiv \New\Mailboxes\ProcessQ
    \qquad
    \text{where}
    \qquad
    \ProcessQ =
    \prod_{i\in I} \Send{\Mailbox_i}{\Tag_i}{\MailboxesC_i}
    \parop
    \prod_{\mathclap{j\in J}}
    \Guard_j
  \]
  where $\wtp{\Mailboxes : \In\tone}{\ProcessQ}{\dgraph}$ and
  $\dgraph$ is acyclic and each guarded process $\Guard_j$ concerns
  some mailbox $\Mailbox_j$.
  We prove the result by contradiction, assuming that
  $I \cup J \ne \emptyset$. The proof proceeds in two steps. In the
  first step we show that, for every $m\in I\cup J$, there exist
  $n\in I\cup J$ such that
  $\dgraph \gimplies \gedge{\Mailbox_n}{\Mailbox_m}$ and $\dgraph$
  contains an edge where $\Mailbox_n$ and $\Mailbox_m$ occur
  precisely in this order. In the second step, we show that this
  ultimately leads to a cyclic graph.

  Suppose $\ProcessQ \equiv \ProcessR \parop \Guard$ where
  $\Guard \equiv \sum_{h\in H}
  \Receive\Mailbox{\Tag_h}{\Vars_h}\Process_h\set{{} \sumop
    \Free\Mailbox\Process'}$.
  From \refrule{t-sub} and \refrule{t-par} we deduce that there
  exist $\Context_1$, $\Context_2$, $\PatternE$, $\PatternF$,
  $\dgraph_1$ and $\dgraph_2$ such that:
  \begin{itemize}
  \item
    $\wtp{\Context_1, \Mailbox : \Out\PatternE}\ProcessR{\dgraph_1}$
  \item
    $\wtp{\Context_2, \Mailbox : \In(\PatternE \tmul
      \PatternF)}\Guard{\dgraph_2}$
  \item $\tone \subp \PatternF$
  \end{itemize}
  where $\Context_1$ and $\Context_2$ are both reliable.
  From \refrule{t-sub} and \refrule{t-guard} we deduce that
  $\PatternE \tmul \PatternF \subp \sum_{h\in H}
  \tmessage{\Tag_h}{\Types_h} \tmul \PatternE_h\set{{} \tsum
    \tone}$.
  From Lemma~\ref{lem:output_occurrence} we deduce that there exist
  $n\geq 0$ and $\PatternF_1, \dots, \PatternF_n$ such that
  $\PatternF_1 \cdots \PatternF_n \subp \PatternE$ and
  $\Mailbox^{\Out\PatternF_k} \in \ProcessR$ for all
  $1\leq k\leq n$.
  Suppose $n = 0$. Then $\tone \subp \PatternE$ and therefore
  $\tone \subp \PatternE \tmul \PatternF$, meaning that the
  $\Free\Mailbox\ProcessP'$ sub-term must be present in
  $\Guard$. This is absurd for we know that $\Process \nred$, hence
  there must be at least one occurrence of $\Mailbox$ in
  $\ProcessR$.
  In particular, one of the $\PatternF_k$ for $1\leq k\leq n$
  generates a non-empty subset of the $\tmessage{\Tag_h}{\Types_h}$
  for $h\in H$.
  We reason by cases on the places where the
  $\Mailbox^{\Out\PatternF_k}$ may occur:
  \begin{itemize}
  \item From the hypothesis $\ProcessP \nred$ we deduce that for
    every $i\in I$ either $\Mailbox_i \ne \Mailbox$ or
    $\Tag_i \ne \Tag_h$ for every $h\in H$. Therefore, the
    $\Mailbox^{\Out\PatternF_k}$ cannot be any of the $\Mailbox_i$.
  \item If $\Mailbox^{\Out\PatternF_k}$ occurs in
    $\set{\MailboxesC_i}$ for some $i\in I$, then
    $\Mailbox_i \ne \Mailbox$ and
    $\dgraph \gimplies \gedge{\Mailbox_i}{\Mailbox}$.
  \item All of the $\Mailbox_j$ have an input capability, therefore
    the $\Mailbox^{\Out\PatternF_k}$ cannot be any of them.
  \item If $\Mailbox^{\Out\PatternF_k}$ occurs in $\Guard_j$ for
    some $j\in J$, then $\Mailbox_j \ne \Mailbox$ because $\Mailbox$
    is already used for input in $\Guard$, so we have
    $\dgraph \gimplies \gedge{\Mailbox_j}{\Mailbox}$.
  \end{itemize}

  Suppose
  $\ProcessQ \equiv \Send\Mailbox\Tag\MailboxesC \parop \ProcessR$.
  From \refrule{t-sub}, \refrule{t-par} and \refrule{t-msg} we
  deduce that there exist $\Context_1$, $\Context_2$, $\PatternE$,
  $\PatternF$, $\dgraph_1$ and $\dgraph_2$ such that:
  \begin{itemize}
  \item
    $\wtp{\Context_1, \Mailbox :
      \Out\tmessage\Tag\Types}{\Send\Mailbox\Tag\MailboxesC}{\dgraph_1}$
  \item
    $\wtp{\Context_2, \Mailbox : \In(\PatternE \tmul
      \PatternF)}\ProcessR{\dgraph_2}$
  \item $\tmessage\Tag\Types \subp \PatternE$ and
    $\tone \subp \PatternF$
  \end{itemize}
  where $\Context_1$ and $\Context_2$ are both reliable.
  From Lemma~\ref{lem:input_occurrence} we deduce that there exist
  $\MessageType$ and $\PatternF'$ such that
  $\tmessage\Tag\Types \subp \MessageType$ and
  $\PatternF' \not\subp \tzero$ and
  $\Mailbox^{\In\MessageType \tmul \PatternF'} \in \ProcessR$.
  We reason by cases on the places where
  $\Mailbox^{\In\MessageType \tmul \PatternF'}$ may occur:
  \begin{itemize}
  \item All the $\Mailbox_i$ have an output capability, therefore
    $\Mailbox^{\In\MessageType\tmul\PatternF'}$ cannot be any of
    them.
  \item If $\Mailbox^{\In\MessageType\tmul\PatternF'}$ occurs in
    $\set{\MailboxesC_i}$ for some $i\in I$, then
    $\Mailbox_i \ne \Mailbox$ and
    $\dgraph \gimplies \gedge{\Mailbox_i}{\Mailbox}$.
  \item If $\Mailbox^{\In\MessageType\tmul\PatternF'} \in \Guard_j$
    where $\Mailbox_j \ne \Mailbox$, then
    $\dgraph \gimplies \gedge{\Mailbox_j}{\Mailbox}$.
  \item If
    $\Mailbox^{\In\MessageType\tmul\PatternF'} \in \Guard_j \equiv
    \sum_{h\in H} \Receive\Mailbox{\Tag_h}{\Vars_h}\Process_h\set{{}
      \sumop \Free\Mailbox\Process'}$, then we can reason as in the
    case for inputs and find some $\Mailbox_k \ne \Mailbox$ such
    that $\dgraph \gimplies \gedge{\Mailbox_k}\Mailbox$.
  \end{itemize}

  In summary, we have seen that starting from the assumption that
  $I \cup J \ne \emptyset$ it is possible to obtain a set $A$ of
  mailbox names that contains at least two distinct elements and
  such that for every $\MailboxA \in A$ there exists
  $\MailboxB \in A$ such that
  $\dgraph \gimplies \gedge\MailboxB\MailboxA$ and $\dgraph$
  contains an edge where $\MailboxB$ and $\MailboxA$ precisely occur
  in this order. The set $A$ is necessarily finite because
  $\ProcessQ$ is finite, therefore $\grel\dgraph$ must be reflexive,
  which contradicts the hypothesis that $\dgraph$ is acyclic. This
  is absurd, hence we conclude $I = J = \emptyset$.
\end{proof}

\begin{lemma}
  \label{lem:deadlock_freedom}
  If $\wtp\EmptyContext\ProcessP\dgraph$ and
  $\ProcessP \red^* \ProcessQ \nred$, then $\ProcessQ \equiv \Done$.
\end{lemma}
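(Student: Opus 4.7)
The plan is to combine subject reduction (Theorem~\ref{thm:sr}) with the termination lemma (Lemma~\ref{lem:termination}) just proved, exploiting the fact that the empty type environment is trivially reliable.

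First I would observe that $\EmptyContext$ is vacuously reliable, since it contains no types at all. Then, by a straightforward induction on the length of the reduction sequence $\ProcessP \red^* \ProcessQ$, I would apply Theorem~\ref{thm:sr} at each step: the base case is immediate from the hypothesis $\wtp\EmptyContext\ProcessP\dgraph$, and at each inductive step, if $\wtp\EmptyContext{\ProcessP'}\dgraph$ and $\ProcessP' \red \ProcessP''$, then since $\EmptyContext$ is reliable, Theorem~\ref{thm:sr} gives $\wtp\EmptyContext{\ProcessP''}\dgraph$. Thus we obtain $\wtp\EmptyContext\ProcessQ\dgraph$ with the very same dependency graph $\dgraph$.

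At this point we are in position to invoke Lemma~\ref{lem:termination}: from $\wtp\EmptyContext\ProcessQ\dgraph$ and the hypothesis $\ProcessQ \nred$ we conclude $\ProcessQ \equiv \Done$, as required. The main subtlety — though it is not really an obstacle here because it has already been discharged in Lemma~\ref{lem:termination} — is that the graph $\dgraph$ must remain acyclic throughout the reduction, but this is precisely what subject reduction preserves (the well-formedness side condition on judgments). No additional reasoning about dependency graphs is needed in this proof, since $\dgraph$ is inherited unchanged from the initial typing derivation.
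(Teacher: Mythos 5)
Your proposal is correct and follows exactly the paper's own argument, which dismisses this lemma as a ``straightforward consequence of Theorem~\ref{thm:sr} and Lemma~\ref{lem:termination}''. The only detail worth spelling out --- that $\EmptyContext$ is vacuously reliable, so subject reduction applies at every step and carries the same judgment $\wtp\EmptyContext\ProcessQ\dgraph$ to the irreducible residual --- is precisely the one you make explicit.
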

\begin{proof}
  Straightforward consequence of Theorem~\ref{thm:sr} and
  Lemma~\ref{lem:termination}.
\end{proof}


\subsection{Fair termination for finitely unfolding processes}
\label{sec:proof_termination}

\begin{lemma}
  \label{lem:fair_termination}
  If $\wtp\EmptyContext\Process\dgraph$ and no reduction of
  $\ProcessP$ uses \refrule{r-def}, then $\ProcessP \red^* \Done$.
\end{lemma}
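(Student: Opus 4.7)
The strategy is classical strong normalization by weight. I would exhibit a natural-number-valued weight $\size$ on processes that (i) is invariant under structural congruence, (ii) is unaffected by name substitution, and (iii) strictly decreases under every reduction rule except \refrule{r-def}. Since by hypothesis \refrule{r-def} never fires along any reduction from $\ProcessP$, every reduction sequence from $\ProcessP$ is bounded in length by $\size(\ProcessP)$, so $\ProcessP \red^* \ProcessQ$ for some irreducible $\ProcessQ$. Subject reduction (Theorem~\ref{thm:sr}) preserves typing along the way, and then Lemma~\ref{lem:termination} identifies $\ProcessQ$ with $\Done$ up to $\equiv$.

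For the weight, I would set $\size(\Send\Mailbox\Tag\Names) = 1$, $\size(\Free\Mailbox\Process) = \size(\Receive\Mailbox\Tag\Vars\Process) = 1 + \size(\Process)$, $\size(\ProcessP \parop \ProcessQ) = \size(\ProcessP) + \size(\ProcessQ)$, $\size(\GuardG \sumop \GuardH) = \size(\GuardG) + \size(\GuardH)$, $\size(\New\Mailbox\Process) = \size(\Process)$, and $\size(\Done) = \size(\invoke\RecVar\Names) = \size(\Fail\Mailbox) = 0$. The zero weights are forced: $\Done \parop \ProcessP \equiv \ProcessP$ fixes $\size(\Done) = 0$, $\Fail\Mailbox \sumop \Guard \equiv \Guard$ fixes $\size(\Fail\Mailbox) = 0$, and the scope-extrusion law $\New\Mailbox\ProcessP \parop \ProcessQ \equiv \New\Mailbox(\ProcessP \parop \ProcessQ)$ forces the restriction to contribute nothing. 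The remaining laws hold because $\size$ is additive over commutative-associative connectives. Invariance under substitution is immediate by induction, since substitution only rewrites leaf names.

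With these invariants in hand, \refrule{r-read} reduces $\Send\Mailbox\Tag\MailboxesC \parop (\Receive\Mailbox\Tag\Vars\Process \sumop \Guard)$, of weight $2 + \size(\Process) + \size(\Guard)$, to $\Process\subst\MailboxesC\Vars$ of weight $\size(\Process)$, a strict drop of at least two; \refrule{r-free} reduces $\New\Mailbox(\Free\Mailbox\Process \sumop \Guard)$, of weight $1 + \size(\Process) + \size(\Guard)$, to $\Process$, a strict drop of at least one. The contextual rules \refrule{r-par} and \refrule{r-new} propagate the decrease by induction, and \refrule{r-struct} does so by invariance under $\equiv$. Termination and identification of the normal form then follow as sketched. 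The main obstacle I anticipate is the bookkeeping of item (i) --- arranging the weights so that \emph{every} equation of $\equiv$ is satisfied on the nose, which forces the zero weights on $\Done$, $\Fail\Mailbox$, and restrictions and leaves essentially no design freedom; once this is done, the rest of the argument is mechanical.
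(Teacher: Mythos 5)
Your proposal is correct and follows essentially the same route as the paper: the paper's proof also defines a structural weight $\size$ (with $\size(\Send\Name\Tag\NamesV)=0$ and $\size(\GuardG\sumop\GuardH)=\max\set{\size(\GuardG),\size(\GuardH)}$ rather than your additive variants, but both choices are $\equiv$-invariant and strictly decreasing on every non-\refrule{r-def} step), concludes strong normalization, and then invokes Lemma~\ref{lem:deadlock_freedom}, i.e.\ exactly your combination of Theorem~\ref{thm:sr} with Lemma~\ref{lem:termination}.
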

\begin{proof}
  Let $\size(\Process)$ be the function inductively defined by the
  following equations
  \[
    \begin{array}{rcl}
      \size(\Done) = \size(\Fail\Name) = \size(\Send\Name\Tag\NamesV) = \size(\invoke\RecVar\Names) & \eqdef & 0
      \\
      \size(\Free\Name\ProcessQ) = \size(\Receive\Name\Tag\Vars\ProcessQ)
      & \eqdef & 1 + \size(\ProcessQ)
      \\
      \size(\Guard_1 \sumop \Guard_2) & \eqdef &
      \max\set{ \size(\Guard_1), \size(\Guard_2) }
      \\
      \size(\Process_1 \parop \Process_2) & \eqdef &
      \size(\Process_1) + \size(\Process_2)
      \\
      \size(\New\Mailbox\ProcessQ) & \eqdef & \size(\ProcessQ)
    \end{array}
  \]
  and observe that $\ProcessP \red \ProcessQ$ implies
  $\size(\ProcessQ) < \size(\ProcessP)$ from the hypothesis that no
  reduction of $\ProcessP$ uses \refrule{r-def}.  Then $\ProcessP$
  is strongly normalizing and we conclude by
  Lemma~\ref{lem:deadlock_freedom}.
\end{proof}



\section{Readers-Writer Lock}
\label{sec:example_readers_writer}

\newcommand{\DefFree}{\RecVar[Free]}
\newcommand{\DefWrite}{\RecVar[Write]}
\newcommand{\DefRead}{\RecVar[Read]}
\newcommand{\VarWriter}{\Var[w]}
\newcommand{\VarReader}{\Var[r]}
\newcommand{\TagAcquireWriter}{\Tag[acquireW]}
\newcommand{\TagAcquireReader}{\Tag[acquireR]}
\newcommand{\TagRead}{\Tag[read]}
\newcommand{\TagWrite}{\Tag[write]}

A readers-writer lock grants read-only access to an arbitrary number
of readers and exclusive write access to a single writer. Below is a
particular modeling of a readers-writer lock making use of
\emph{mixed guards}, where different actions may refer to different
mailboxes:
\[
  \begin{array}{@{}r@{~}c@{~}l@{}}
    \DefFree(\VarSelf) & \triangleq &
    \Free\VarSelf\Done
    \\ & \sumop &
    \Receive\VarSelf\TagAcquireWriter\VarWriter\parens{
      \Send\VarWriter\TagReply\VarSelf
      \parop
      \invoke\DefWrite\VarSelf
    }
    \\ & \sumop &
    \Receive\VarSelf\TagAcquireReader\VarReader
    \New\VarPool\parens{
      \Send\VarReader\TagReply\VarPool
      \parop
      \invoke\DefRead{\VarSelf,\VarPool}
    }
    \\ & \sumop &
    \Receive\VarSelf\TagRead{}
    \Fail\VarSelf
    \\ & \sumop &
    \Receive\VarSelf\TagWrite\VarWriter
    \Fail\VarSelf
    \\
    \DefWrite(\VarSelf) & \triangleq &
    \Receive\VarSelf\TagRelease{}
    \invoke\DefFree\VarSelf
    \\ & \sumop &
    \Receive\VarSelf\TagWrite\VarWriter\parens{
      \Send\VarWriter\TagReply\VarSelf
      \parop
      \invoke\DefWrite\VarSelf
    }
    \\ & \sumop &
    \Receive\VarSelf\TagRead{}
    \Fail\VarSelf
    \\
    \DefRead(\VarSelf,\VarPool) & \triangleq &
    \Receive\VarSelf\TagAcquireReader\VarReader\parens{
      \Send\VarReader\TagReply\VarPool
      \parop
      \invoke\DefRead{\VarSelf,\VarPool}
    }
    \\ & \sumop &
    \Receive\VarSelf\TagWrite\VarWriter
    \Fail\VarSelf
    \\ & \sumop &
    \Receive\VarPool\TagRead{}
    \invoke\DefRead{\VarSelf,\VarPool}
    \\ & \sumop &
    \Free\VarPool\invoke\DefFree\VarSelf
  \end{array}
\]

The readers-writer lock may be free, in which case it can be
acquired either by a reader or by a writer but it does not accept
read or write requests. When the lock has been acquired by a writer
process, the writer is granted exclusive access and read requests
are not accepted. When a reader acquires a free lock, the lock
creates $\VarPool$ of readers and moves into a state where more
readers may be granted access. Each reader manifests its intention
to read the resource by storing a $\TagRead$ message into $\VarPool$
whereas prospective readers manifest their intention of accessing
the resource by storing $\TagAcquireReader$ into $\VarSelf$. The
mixed guard in the $\DefRead$ process is key to serve both kinds of
requests. In particular, when all readers have terminated,
$\VarPool$ can be deleted and the lock moves back into the free
state.

\begin{table}
  \begin{center}
    \[
      \begin{array}{@{}c@{}}
        \inferrule{
          \wtg{
            \Names : \In\Patterns,
            \Context;
            \Names : \In\Patterns
          }{
            \Guard
          }
          \\
          \vDash \Patterns
        }{
          \textstyle
          \wtp{
            \Names : \In\Patterns,
            \Context
          }{
            \Guard
          }{
            \gedge{\set\Names}{\dom(\Context)}
          }
        }
        ~~\defrule{t-guard*}
        \\\\
        \inferrule{
        }{
          \wtg{
            \ContextA;
            \Name : \In\tzero
          }{
            \Fail\Name
          }
        }
        ~~\defrule{t-fail*}
        \qquad
        \inferrule{
          \wtp\ContextA\Process\dgraph
        }{
          \wtg{
            \Name : \Type, \ContextA;
            \Name : \In\tone
          }{
            \Free\Name\Process
          }
        }
        ~~\defrule{t-free*}
        \\\\
        \inferrule{
          \wtp{
            \Name : \In\PatternE,
            \Context,
            \Vars : \Types
          }{
            \Process
          }{
            \dgraph
          }
        }{
          \wtg{
            \Name : \Type, \ContextA;
            \Name : \In(\tmessage\Tag\Types\tmul\PatternE)
          }{
            \Receive\Name\Tag\Vars\Process
          }
        }
        ~~\defrule{t-in*}
        \qquad
        \inferrule{
          \wtg{
            \ContextA;
            \ContextB_i
          }{
            \Guard_i
          }
          ~{}^{(i=1,2)}
        }{
          \wtg{
            \ContextA;
            \ContextB_1 + \ContextB_2
          }{
            \Guard_1 \sumop \Guard_2
          }
        }
        ~~\defrule{t-branch*}
      \end{array}
    \]
  \end{center}
  \caption{
    \label{tab:relaxed_typing} Relaxed typing rules for mixed guards.
  }
\end{table}

In order to deal with mixed guards, the typing rules for guards and
guarded processes must be generalized as shown in
Table~\ref{tab:relaxed_typing}. The basic idea is the same as for
restricted guards, namely the type of mailboxes different from the
one referred to by an action cannot be affected by the content (or
lack thereof) of that mailbox. To do so, the judgments for guards
have the form $\wtg{\ContextA;\ContextB}\Guard$ where $\ContextA$ is
the type environment of the guarded process as a whole whereas
$\ContextB$ keeps track of the types of the mailboxes referred to by
the actions, which are split exactly as in the restricted typing
rules (Table~\ref{tab:inference}). The $+$ operation on type
environments is defined thus:
\[
  \Context_1 + \Context_2 =
  \begin{cases}
    \Context_1, \Context_2 & \text{if
      $\dom(\Context_1) \cap \dom(\Context_2) = \emptyset$}
    \\
    \Name : \In(\Pattern_1 \tsum \Pattern_2), \Context_1' +
    \Context_2' & \text{if
      $\Context_1 = \Name : \In\Pattern_1, \Context_1'$ and
      $\Context_2 = \Name : \In\Pattern_2, \Context_2'$}
  \end{cases}
\]

With the relaxed rules, it is possible to show that the above
process definitions are consistent with the following declarations
\[
  \begin{array}{@{}r@{~}c@{~}l@{}}
    \DefFree & : &
    \parens{
      \VarSelf : \In\parens{
        \tmessage\TagAcquireWriter\TypeT\tstar
        \tmul
        \tmessage\TagAcquireReader\TypeR\tstar
      };
      \gempty
    }
    \\
    \DefWrite & : &
    \parens{
      \VarSelf : \In\parens{
        \tmessage\TagAcquireWriter\TypeT\tstar
        \tmul
        \tmessage\TagAcquireReader\TypeR\tstar
        \tmul
        \parens{
          \TagRelease
          \tsum
          \tmessage\TagWrite\TypeT
        }
      };
      \gempty
    }
    \\
    \DefRead & : &
    \parens{
      \VarSelf : \In\parens{
        \tmessage\TagAcquireWriter\TypeT\tstar
        \tmul
        \tmessage\TagAcquireReader\TypeR\tstar
      },
      \VarPool : \In\TagRead\tstar;
      \gedge\VarSelf\VarPool
    }
  \end{array}
\]
where
$\TypeT \eqdef \Out\tmessage\TagReply{ \Out\parens{ \TagRelease
    \tsum \tmessage\TagWrite\TypeT } }$ and
$\TypeR = \Out\tmessage\TagReply{ \Out\TagRead\tstar }$.
The remarkable aspect of this typing is that it guarantees
statically the key properties of the readers-writer lock: that there
are no readers nor writers when the lock is free; that there are no
readers if there is a single writer; that there is no writer if
there are one or more readers. Once again, the modeling makes key
use of multiple mailboxes so as to keep the typing as precise as
possible.


\end{document}